\documentclass{article}
\usepackage[longend,ruled,english,linesnumbered]{algorithm2e} 
\usepackage[utf8]{inputenc}
\usepackage{mathtools}
\usepackage{comment}
\usepackage{amsmath}
\usepackage{algpseudocode}
\usepackage{amsfonts,amssymb,amsthm,boxedminipage,color,url,fullpage}
\usepackage{amsfonts}
\usepackage{bbm}
\usepackage{amssymb}
\usepackage{amsthm}
\usepackage{boxedminipage}
\usepackage{color}
\usepackage{url}
\usepackage{fullpage}
\usepackage{mathtools}
\usepackage[numbers]{natbib}

\usepackage{thmtools} 
\usepackage{thm-restate}

\usepackage{enumitem}
\usepackage{tcolorbox}
\usepackage[dvipsnames]{xcolor}
\usepackage[colorlinks=true,citecolor=blue,linkcolor=blue,urlcolor=blue]{hyperref}

\usepackage[labelfont=bf]{caption}
\usepackage{aliascnt,cleveref}
\usepackage{authblk}
\usepackage{accents}
\usepackage{tikz}
\usetikzlibrary{positioning,arrows}
\usepackage{pgfplots}
\pgfplotsset{width=10cm,compat=1.9}
\usepgfplotslibrary{external}
\allowdisplaybreaks

\usepackage[]{color-edits}

\newcommand{\ceil}[1]{\lceil #1 \rceil}

\newcommand{\bone}{\mathbbm{1}}
\newcommand{\MMS}{\mu}
\newcommand{\Normal}[1]{\bar #1}

\newcommand{\ti}[1]{\textcolor{Green}{(Timo: #1)}}

\newtheorem{example}{Example}
\newtheorem{theorem}{Theorem}
\newtheorem{definition}{Definition}
\newtheorem{lemma}{Lemma}
\newtheorem{corollary}{Corollary}
\newtheorem{remark}{Remark}

\title{Simultaneous Ordinal Maximin Share and Envy-Based Guarantees}

\author{
\quad Hannaneh Akrami\thanks{Max Planck Institute for Informatics and Universität des Saarlandes. Email: \texttt{hakrami@mpi-inf.mpg.de}}\thanks{Hertz Chair for Algorithms and Optimization, University of Bonn.}
\quad
Timo Reichert\thanks{University of Bonn. Email: \texttt{s6tireic@uni-bonn.de }} 
}
\date{}

\begin{document}

\maketitle

\begin{abstract}
    We study the fair allocation of indivisible goods among agents with additive valuations. The fair division literature has traditionally focused on two broad classes of fairness notions: envy-based notions and share-based notions. Within the share-based framework, most attention has been devoted to the maximin share (MMS) guarantee and its relaxations, while envy-based fairness has primarily centered on EFX and its relaxations. Recent work has shown the existence of allocations that simultaneously satisfy multiplicative approximate MMS and envy-based guarantees such as EF1 or EFX. 

    Motivated by this line of research, we study—for the first time—the compatibility between \emph{ordinal} approximations of MMS and envy-based fairness notions. In particular, we establish the existence of allocations satisfying the following combined guarantees: (i) simultaneous $1$-out-of-$\ceil{3n/2}$ MMS and EFX for ordered instances; (ii) simultaneous $1$-out-of-$\ceil{3n/2}$ MMS and EF1 for top-$n$ instances; and (iii) simultaneous $1$-out-of-$4\ceil{n/3}$ MMS and 
    EF1 for ordered instances.
\end{abstract}

\section{Introduction}

Fair allocation of resources is a fundamental problem at the intersection of computer science, economics, and social choice theory, and has received significant attention since the seminal work of \cite{steinhaus1948problem}. The goal is to divide a set $M$ of $m$ items among a set $N$ of $n$ agents, where each agent $i \in N$ has a valuation function $v_i: 2^M \rightarrow \mathbb{R}_{\geq 0}$ over subsets of items. Over the years, a rich collection of fairness notions has been proposed, which can broadly be categorized into \emph{share-based} and \emph{envy-based} notions. Among these, proportionality (and its relaxations) and envy-freeness (and its relaxations) are widely regarded as the canonical representatives of share-based and envy-based fairness, respectively. In this work, we focus on the fair allocation of \emph{indivisible} goods among agents with additive valuations.

In the discrete fair division setting, where resources consist of indivisible goods, proportionality is often unattainable. A central relaxation of proportionality in this setting is the \emph{maximin share} (MMS) guarantee \cite{budish2011combinatorial}. An allocation \(A = (A_1, \ldots, A_n)\) is a partition of \(M\) into \(n\) bundles, where each agent receives exactly one bundle. MMS of each agent $i$ is the maximum value she can obtain by receiving the minimum valued bundle in any allocation. Formally, let $\Pi_n$ be the set of all possible partitions of $M$ into $n$ bundles. Then
$$\MMS^n_i(M) = \max_{A \in \Pi_n} \min_j v_i(A_j).$$
An allocation is MMS if all agents value their bundle at least as much as their MMS. 

It is known that exact MMS allocations do not always exist \cite{kurokawa2018fair}, which has led to an extensive line of work on approximate MMS guarantees. Formally, an allocation $X$ is $\alpha$-MMS, iff $v_i(X_i) \geq \alpha \cdot \mu^n_i(M)$ for all agents $i$. A long sequence of results has progressively improved the approximation factor, with the current state-of-the-art being the existence of \(\alpha\)-MMS allocations for \(\alpha = 7/9\) \cite{huang2025fptas79}.

On the envy-based side, \emph{envy-freeness up to any item} (EFX) has emerged as one of the most prominent relaxations of envy-freeness for indivisible goods. An allocation \(A\) is said to be EFX if for all agents \(i,j \in N\),
\[
v_i(A_i) \geq v_i(A_j \setminus \{g\}) \quad \text{for all } g \in A_j.
\]
In contrast to MMS, it remains an open question whether EFX allocations are guaranteed to exist. Nevertheless, a substantial body of work has studied relaxations of EFX, as well as the existence of EFX allocations in more restricted settings (see Section \ref{sec:related-work}). Another related line of research considers \emph{partial allocations}, in which some items may remain unallocated \cite{caragiannis2019envy}. In this context, combining EFX guarantee with efficiency notions is crucial in order to avoid highly wasteful outcomes.

While much of the fair division literature focuses on individual fairness notions in isolation, \cite{amanatidis2020multiple} initiated the study of \emph{simultaneously satisfying share-based and envy-based fairness guarantees} under additive valuations. In particular, they established the existence of allocations that are both \(0.553\)-MMS and \(0.618\)-EFX. Subsequent work strengthened this line of results: \cite{chaudhury2021little} proved the existence of partial allocations that are simultaneously \(1/2\)-MMS and EFX, which was later improved to \(2/3\)-MMS and EFX by \cite{AkramiRathi2025simultaneous}. For additive valuations, \cite{AshuriMXS-EFL} prove the existence of complete allocations that are MXS (a share-based notion weaker than MMS) and EFL (an envy-based notion weaker than EFX) at the same time. Very recently, this was improved to EEFX and EFL by \cite{akrami2026achievingef1epistemicefx}. \cite{feige2025residualmaximinshare}, introduced the \emph{residual MMS (RMMS)} notion and proved there exists partial allocations that are RMMS and EFX, and complete allocations that are both RMMS and EFL. 

Most of the aforementioned results combine \emph{multiplicative (cardinal) approximations} of MMS with EFX or its relaxations. Another important and well-studied class of relaxations of MMS is given by \emph{ordinal} approximations which was introduced by \cite{budish2011combinatorial} together with the original definition of MMS. In this relaxation, instead of partitioning the goods into $n$ bundles to obtain agents' desired threshold (MMS), we partition them into $d > n$ bundles. Similarly, we say an allocation $A$ is \emph{$1$-out-of-$d$ MMS}, iff $v_i(A_i) \geq \MMS_i^d(M)$. By setting $d=n$, we get the exact MMS. Ordinal MMS guarantees depend only on the relative ranking of the subsets of goods rather than their exact values \cite{Hosseini2021OrdinalMS}. \cite{budish2011combinatorial} demonstrated that a $1$-out-of-$(n+1)$ MMS allocation can be achieved when \emph{excess goods} are allowed. Absent such excess goods, only weaker guarantees are known. The earliest non-trivial result in the standard setting established the existence of $1$-out-of-$d$ MMS allocations with $d = 2n - 2$~\cite{AignerHorev2019EnvyfreeMI}. Subsequent work progressively tightened this bound, first to $d = \lceil 3n/2 \rceil$~\cite{hosseini2021guaranteeing}, then to $d = \lfloor 3n/2 \rfloor$~\cite{Hosseini2021OrdinalMS}, culminating in the best-known guarantee of $d = 4\ceil{n/3}$ \cite{MMS-l-out-of-d}. Despite these advances, whether a $1$-out-of-$(n+1)$ MMS allocation exists in the standard setting remains unresolved.

\subsection{Our Results}
In this work, we initiate the study of the \emph{compatibility between ordinal relaxations of MMS and envy-based fairness notions}, such as EFX and its relaxations. To the best of our knowledge, this is the first systematic investigation of how ordinal share-based guarantees interact with envy-based notions in the allocation of indivisible goods. 

We primarily focus on \emph{ordered instances}. In such instances, all agents agree on the relative ordering of goods; that is, there exists a renaming of the goods as \(g_1, \ldots, g_m\) such that for all $i \in N$,
$v_i(g_1) \geq \cdots \geq v_i(g_m)$.

For share-based notions, it is known that any instance can be reduced to an ordered instance without loss of generality \cite{barman2020approximation}. Consequently, when studying share-based guarantees alone, one may assume that the instance is ordered. However, this reduction does not preserve envy-based properties. While a simple envy-cycle elimination algorithm \cite{lipton2004approximately}, applied by allocating goods in decreasing order of value, guarantees EFX for ordered instances, the existence of EFX allocations for general (non-ordered) instances remains one of the most important open problems in fair division.

Our first main result establishes the existence of \emph{complete} allocations—where all goods are allocated—that are simultaneously EFX and \(1\)-out-of-\(\lceil 3n/2 \rceil\) MMS for ordered instances.
\begin{restatable}{theorem}{theoremOne}\label{thm:1}
    For every ordered instance $(M,N,(v_i)_{i\in N})$ of fair division, there exists a complete allocation that is 1-out-of-$\ceil{3n/2}$ MMS and EFX at the same time.
\end{restatable}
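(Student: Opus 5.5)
The plan is to take the classical envy-cycle elimination procedure of \cite{lipton2004approximately}, run on the goods in decreasing order $g_1,g_2,\dots,g_m$, and prove that its output, possibly after a mild modification in the tie-breaking rule, is also $1$-out-of-$\ceil{3n/2}$ MMS. EFX comes for free. In an ordered instance, whenever a good $g_t$ is added to a bundle $A_j$, it is the least valuable good of $A_j$ for every agent. The bundle receiving it is unenvied just before the addition, so for every agent $i$ we get $v_i(A_j\setminus\{g\}) \le v_i(A_j\setminus\{g_t\}) \le v_i(A_i)$ for all $g\in A_j$. Cycle rotations only increase every agent's value and keep the bundles intact, so EFX is preserved throughout. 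Moreover each agent's value never decreases, so it suffices to analyse the final allocation. Write $d=\ceil{3n/2}$ and $\mu_i=\MMS_i^d(M)$, and suppose for contradiction that some agent $i$ ends with $v_i(A_i)<\mu_i$.

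First I reduce the instance. Take any bundle $A_j$ that is a singleton $\{g\}$ with $v_i(g)\ge \mu_i$. Removing $g$ and one bundle gives $\MMS_i^{d-1}(M\setminus\{g\}) \ge \mu_i$ by the standard monotonicity argument: delete $g$ from an optimal $d$-partition and merge its part into another part. After removing all $k$ such singletons and their agents, agent $i$ faces $n'=n-k$ bundles and has a share of $\mu_i$ with respect to $d'=d-k \ge \ceil{3n'/2}$ parts. Every remaining good is worth less than $\mu_i$ to her: otherwise, since $g_1,\dots$ are allocated first and each lands in an empty bundle while empty bundles remain, a good worth at least $\mu_i$ would sit either in a high-value singleton or in $i$'s own bundle. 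Hence $v_i(M')\ge d'\mu_i$.

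Next I bound the other bundles from above using EFX together with the ordering. For every $j$ we have $v_i(A_j) < \mu_i + v_i(\min A_j)$. For $|A_j|\ge 2$ the smallest good is at most $v_i(A_j)/2$, so $v_i(A_j)<2\mu_i$. This crude bound only yields $2n'-1$ versus $\ceil{3n'/2}$, so I need a sharper count. This is the main obstacle. The idea I will try first is to exploit the structure of the allocation sequence. The first $n'$ goods go one per bundle, and the goods $g_{n'+1},\dots$ are each smaller than every first good. The pigeonhole bounds from the ordinal literature, such as $\mu_i \le v_i(g_{d'}) + v_i(g_{d'+1})$ and the analogous bounds $\mu_i\le v_i(g_{d'-t})+v_i(g_{2d'-\,\cdot})$ obtained by looking at how the top $2d'$ goods must be split among $d'$ parts, show that the second goods added are small relative to $\mu_i$. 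In particular, a bundle whose smallest good has index beyond $d'$ exceeds $\mu_i$ by less than roughly $\mu_i/2$.

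Combining these, I aim to show $\sum_{j\ne i} v_i(A_j) + v_i(A_i) < \ceil{3n'/2}\,\mu_i$, which contradicts $v_i(M')\ge d'\mu_i$. If the unmodified procedure does not yield this (for instance because many bundles finish with a mid-ranked second good), I would modify the rule for choosing which unenvied agent receives the next good. Specifically, I would prefer giving goods $g_{n'+1},\dots,g_{d'}$ to bundles whose owners are currently below their share, mimicking the bag-filling used for $\ceil{3n/2}$ in \cite{hosseini2021guaranteeing}. I would then verify that any such choice among unenvied bundles still preserves EFX by the argument of the first paragraph.
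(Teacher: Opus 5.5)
\textbf{Gap: the share guarantee is never proved.} Your EFX argument for envy-cycle elimination in decreasing order is correct, and so is the singleton reduction. But the proposal stops exactly where the theorem's content lies: you never establish $v_i(A_i)\ge\mu_i$.

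Charging each bundle to its \emph{smallest} good gives only $v_i(A_j)<2\mu_i$, i.e.\ $1$-out-of-$(2n-1)$, as you note. The sharpening you sketch relies on bounds that do not hold:
\begin{itemize}
\item The inequality $\mu_i\le v_i(g_{d'})+v_i(g_{d'+1})$ is false whenever value is spread over many small goods. For example, $m$ goods of value $1$ give $\mu_i^{d'}=\lfloor m/d'\rfloor$.
\item The good of index $d'+1$ can be worth nearly $\mu_i$, even after your reduction. Take $d'+1$ goods of value $0.9\mu_i$ plus small goods of total value $0.1(d'-1)\mu_i$. So ``a bundle whose smallest good has index beyond $d'$ exceeds $\mu_i$ by less than about $\mu_i/2$'' is not available.
\end{itemize}
Pigeonhole on the top goods gives \emph{lower} bounds on parts of an MMS partition, not upper bounds on $\mu_i$. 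The fallback of changing the tie-breaking rule is not specified. As it stands, the proposal proves only EFX.

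\textbf{The missing idea.} Charge each bundle's \emph{second most valuable} good, and compare with $M\setminus\{g_{n+1},\dots,g_{2n}\}$ (padding with zero goods so that $m\ge 2n$).
\begin{itemize}
\item Removing the pairs $\{g_k,g_{3n-k+1}\}$ one at a time from an optimal $\ceil{3n/2}$-partition yields $\mu_i^n(M\setminus\{g_{n+1},\dots,g_{2n}\})\ge\mu_i$. Hence $v_i(M)-v_i(\{g_{n+1},\dots,g_{2n}\})\ge\sum_{k\le n}\max\{\mu_i,v_i(g_k)\}$.
\item If each of $g_1,\dots,g_n$ goes to an agent with an empty bundle (always a source), every final bundle $A_j$ contains exactly one of $g_1,\dots,g_n$. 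So its second good $h_j$ (if any) has index $>n$, giving $\sum_j v_i(h_j)\le v_i(\{g_{n+1},\dots,g_{2n}\})$.
\item If $|A_j|\ge 2$ and $\ell_j$ is its last-added good, then $v_i(A_j\setminus\{h_j\})\le v_i(A_j\setminus\{\ell_j\})\le v_i(A_i)<\mu_i$. This holds because $A_j\setminus\{\ell_j\}$ was unenvied when $\ell_j$ arrived.
\item Summing over $j$ (singletons contribute $v_i(g_k)$, and $A_i$ contributes less than $\mu_i$) gives $v_i(M)-\sum_j v_i(h_j)<\sum_{k\le n}\max\{\mu_i,v_i(g_k)\}$. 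This contradicts the two previous bounds.
\end{itemize}
This is precisely the counting of Lemmas~\ref{3/2ordered:lem:leftover}--\ref{3/2ordered:lem:MMS}, which the paper applies to its bag-filling Algorithm~\ref{3/2ordered:alg:main}. With this argument supplied, your unmodified envy-cycle procedure would in fact suffice; without it, there is no proof of the $1$-out-of-$\ceil{3n/2}$ guarantee.
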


Our second main result extends the compatibility of \(1\)-out-of-\(\lceil 3n/2 \rceil\) MMS and envy-based fairness to a more general setting, which we call \emph{top-\(n\) instances}. Top-$n$ instances have recently received heightened attention. They were introduced in \cite{Markakis2023_topN}, and further expanded upon in \cite{Mancho2025_topN}. In this setting, agents agree only on the set of the top \(n\) most valuable goods; that is, there exists a renaming of the goods \(g_1, \ldots, g_m\) such that
\[
v_i(g_j) \geq v_i(g_{j'}) \quad \text{for all } i \in N,\ j \in [n],\ j' \in [m] \setminus [n].
\]
This assumption is strictly weaker than full orderedness. Under this setting, we prove the existence of complete allocations that are simultaneously EF1 and \(1\)-out-of-\(\lceil 3n/2 \rceil\) MMS. 
\begin{restatable}{theorem}{theoremTwo}
    Given a top-$n$ instance $I$, there exists a partial allocation that is both 1-out-of-$\ceil{3n/2}$ MMS and EFX, and there exists a complete allocation that is both 1-out-of-$\ceil{3n/2}$ MMS and EF1.    
\end{restatable}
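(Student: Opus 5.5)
Both parts come from one partial allocation that gives every agent exactly one of the top-$n$ goods $g_1,\dots,g_n$ plus possibly some further goods. We first make it $1$-out-of-$\ceil{3n/2}$ MMS and EFX. Then we allocate the leftover goods by envy-cycle elimination to reach a complete EF1 allocation. Write $d=\ceil{3n/2}$.

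\textbf{Key MMS observation.} For each agent $i$, any partition of $M$ into $d$ bundles has a least valuable bundle. At most $n$ of the $d$ bundles contain a good from $\{g_1,\dots,g_n\}$, so at least $d-n=\ceil{n/2}$ bundles avoid these goods entirely. Hence $\MMS_i^d(M)\le \frac{1}{d-n} v_i(M\setminus\{g_1,\dots,g_n\})$, i.e.\ at most the average of $i$ over roughly $n/2$ bundles of low goods. A sharper version is the standard one for ordered instances. In an optimal $d$-partition, at least $d-n$ bundles contain no top-$n$ good. The pigeonhole argument on goods $g_1,\dots,g_{d+1}$ gives $\MMS_i^d\le v_i(g_j)+v_i(g_{j'})$ for a suitable pair. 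In particular $\MMS_i^d(M)\le v_i(g_{n+1})+v_i(g_{n+2})+\dots$. The point of the top-$n$ assumption is that every agent, whatever her ranking of the low goods, can bound her share by low goods alone.

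\textbf{Construction.} I would run a bag-filling procedure in the spirit of the $\ceil{3n/2}$ algorithm of \cite{hosseini2021guaranteeing}, adapted to keep EFX.
\begin{enumerate}
\item Give each agent one top good $g_k$, $k\le n$.
\item While some agent $i$ has value below $\MMS_i^d$, let her pick goods from the pool of unallocated low goods, one at a time, each time taking her most valued remaining good, until she reaches her threshold.
\item To keep EFX, each addition should be a good that no other agent values more than the top good in their own bundle. Since bundles contain a top good, which every agent values at least as much as any low good, an agent $j$ looking at $i$'s bundle $\{g_k\}\cup S$ needs $v_j(g_j\text{-bundle})\ge v_j(g_k\cup S\setminus\{x\})$.
\item I would maintain the invariant that each $i$ whose bundle grew stops as soon as she reaches $\MMS_i^d$, so her last good was necessary. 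In addition, envious agents are handled by the usual trick: if $j$ envies $i$'s enlarged bundle beyond EFX, give $j$ the minimal EFX-violating subset instead, as in \cite{chaudhury2021little}. This keeps $j$ above her share because she only ever gains value.
\end{enumerate}

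\textbf{Counting.} We must show the pool never runs out before all agents are satisfied. The averaging bound says the low goods suffice for about $n/2$ bundles each worth $\MMS_i^d$ to agent $i$. Each agent satisfied by a single top good consumes nothing. An agent needing additions consumes, by her greedy stopping rule, less than $\MMS_i^d$ plus one good of value at most $\MMS_i^d$ from the pool, as she values it. The delicate part, and the main obstacle, is arguing this counting for heterogeneous rankings of low goods. I would use the standard ordering trick of \cite{hosseini2021guaranteeing}: pair top good $g_k$ with $g_{2n+1-k}$-type goods. An agent whose top good plus one low good falls short has $\MMS_i^d$ large, and the $d$-partition argument limits how many such agents exist to about $n/2$. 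I would try to prove this by a lower-bound-on-removed-value argument in agent $i$'s own view, showing each other agent removed at most $\MMS_i^d$ worth of $i$'s value from the low pool.

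\textbf{Completion.} Finally, given the partial MMS+EFX allocation, we allocate the remaining goods by envy-cycle elimination: repeatedly give an unallocated good to a source of the envy graph, rotating cycles. This preserves EF1, since additions go to unenvied agents and cycle rotation only raises values. It also preserves MMS, since no agent's value decreases. This gives the complete EF1 allocation.
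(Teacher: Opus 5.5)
There is a genuine gap. Your completion step is fine and matches how the paper finishes: run envy-cycle elimination from a partial MMS+EFX allocation, which keeps EF1 and never lowers anyone's value.

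The problem is the step you flag as ``the main obstacle,'' and it is the whole content of the theorem. Your counting bounds what each agent removes from the pool in \emph{her own} valuation. Whether an unserved agent $i$ can still be satisfied depends on how much the others removed in \emph{$i$'s} valuation. Steps 1--2 give no control over that when agents rank the low goods differently.

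Concretely, let $n=2$ (so $d=3$) with goods $t_1,t_2,x_1,\dots,x_7,y_1,\dots,y_{22}$. Let $v_1$ be $3,2,1$ and $v_2$ be $11,10,0$ on each $t$, $x$, $y$ respectively. Then $\{t_1,t_2\}$ is the common top-$2$ set, $\mu_1^3=14$ and $\mu_2^3=30$. Suppose agent~1, holding $t_1$, picks first. Five $x$'s give her only $13$, so she takes a sixth and stops at $15$. Your step~3 does not block these picks, since $v_2(x_j)<v_2(t_2)$. Agent~2 can then reach at most $v_2(\{t_2,x_7\})=21<30$.

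Your step~4 would fire here, because agent~2 strongly envies $\{t_1,x_1,x_2\}$. But step~4 is not specified. It leaves open who keeps the rest of the raided bundle and where the envier's old bundle goes. It also leaves open whether every bundle still has exactly one top good, why the process terminates, and why the agent who loses goods can still reach her share. ``$j$ only ever gains value'' says nothing about that agent. Your averaging bound $\mu_i^d\le v_i(M\setminus\{g_1,\dots,g_n\})/(d-n)$ and the claim that only ``about $n/2$'' agents need large top-ups are also unproven, and they do not address this issue.

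The missing idea is an invariant: every bundle handed out is worth less than the threshold (scaled to $1$) to every agent who is still unserved. This is what makes a count in a single agent's valuation possible. The paper gets it from a lone-divider scheme (Algorithm~\ref{3/2top:alg:lone_divider}):
\begin{itemize}
\item An unserved agent splits the unallocated goods into $|U|$ bags, each worth $\ge 1$ to her and each containing exactly one top-$n$ good.
\item The bags are shrunk to inclusion-minimal ones still acceptable to someone in $U$.
\item They are assigned through a nonempty envy-free matching in the threshold graph, so unmatched agents value every assigned bag below $1$.
\item If an assigned agent strongly envies a new bag, the most envious one swaps into a minimal subset of it, which is again worth below $1$ to everyone in $U$.
\end{itemize}

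Feasibility of the division (Lemma~\ref{3/2top:lem:lone_divider}) is then a one-agent count. Sort the goods by $v_i$; the top-$n$ property makes the top goods $i$'s first $n$. Run Algorithm~\ref{3/2ordered:alg:main} on $n-k$ copies of $i$. Then combine Lemmas~\ref{3/2ordered:lem:leftover}--\ref{3/2ordered:lem:biggoods} with $v_i(M\setminus L)<k$, which is exactly the quantity your procedure cannot bound. EFX of the partial allocation follows from minimality, the envy-free matching and the swap rule, after which your completion step applies.
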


Finally, we establish the compatibility of weaker envy-based notions with stronger ordinal MMS guarantees for ordered instances:
\begin{restatable}{theorem}{theoremThree}
    Given an ordered instance $I$, there exists a complete allocation that is both $1$-out-of-$4\ceil{n/3}$ MMS and EF1.
\end{restatable}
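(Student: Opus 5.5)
The approach is to build on the $1$-out-of-$4\ceil{n/3}$ MMS algorithm of \cite{MMS-l-out-of-d} for ordered instances and add envy control. Set $d = 4\ceil{n/3}$ and normalize so that for each agent $i$, $\MMS_i^d(M)=1$. Since the instance is ordered, an EF1 allocation can be built greedily by an envy-cycle/round-robin style procedure that processes goods in decreasing order $g_1,\dots,g_m$. The plan has two phases. First, allocate a small ``core'' of high-value goods in a structured way that gives every agent a bundle worth at least $1$. Second, allocate the remaining goods by envy-cycle elimination in decreasing order, which preserves the MMS guarantee because values only increase.

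\textbf{Phase 1: bag filling on the top goods.} Following the template of \cite{MMS-l-out-of-d}, I would first apply the standard reductions. If some agent values $g_1 \ge 1$, give her $g_1$ alone. If some agent values $\{g_{k},g_{k+1}\}\ge 1$ for the appropriate indices (here $d$ and $n-1$ interact, with $d$ decreasing by $4$ when $n$ decreases by $3$), use that pair instead. These reductions keep the remaining agents' $1$-out-of-$(d')$ MMS at least $1$. After them, each agent values each good below a constant fraction of $1$. Next form bags seeded with $g_j$ and $g_{2n+1-j}$-type pairs, as in the $3n/2$ and $4\ceil{n/3}$ constructions, and fill them in decreasing order of goods until some remaining agent values the bag at least $1$. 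To keep EF1, I would seed each bag with at most one ``large'' good and fill it with goods smaller than any good already allocated. Then any envy toward a bag $B$ is removed by deleting its most valuable good, provided the filler goods are bounded by what the envious agent already holds.

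\textbf{Phase 2: EF1 maintenance.} The key invariant is that at each point, each allocated bag $B$ minus its top good is worth at most $1$ (or at most the bag value) to every agent not yet served. This is the usual bag-filling property: the bag was just below $1$ before the last good was added, and the last good is at most the top good. I would argue that served agents do not envy unserved-agent bags beyond EF1 by ordering who picks which bag, so that earlier bags contain larger seeds. After every agent holds a bag, allocate the leftover goods in decreasing order by envy-cycle elimination (always giving the next good to a source of the envy graph). Lipton et al. show this preserves EF1 starting from an EF1 partial allocation, and values never decrease, so $1$-out-of-$d$ MMS is retained.

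\textbf{Main obstacle.} The hard step is proving that the bag-filling of \cite{MMS-l-out-of-d} succeeds, meaning that enough goods remain so every agent gets a bag worth at least $1$, while being simultaneously EF1. Their analysis may use bags assembled in ways where a bag contains two large goods, for example $g_j$ and $g_{2n+1-j}$. Such a bag might be envied beyond EF1 by an agent whose own bag is a single mid-sized good plus small fillers. I would first try to show that any such envy can be fixed by swapping bags along an envy cycle, using the ordering to argue an agent envying up to one good can take the envied bag while it still meets her threshold. If that fails, I would fall back on a potential argument: repeatedly resolve EF1 violations by replacing a violating bag with a minimal subset that still satisfies its owner's MMS threshold. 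This strictly decreases the multiset of allocated goods lexicographically, and minimality then gives EF1 relative to that owner's threshold.
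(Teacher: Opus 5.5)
Your completion step (envy-cycle elimination from an EF1 partial allocation) is fine and matches the paper. The gap is that the central step is never proved. You name ``every agent gets a bag worth at least $1$ while staying EF1'' as the main obstacle and offer only fallbacks, and neither works.

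Ordering bags so that earlier bags have larger seeds cannot control the envy of an already-served agent $a$ toward a bag $C$ filled later. Filling $C$ stops only when some \emph{unserved} agent values it at least $1$. So under $v_a$ the fillers can sum to far more than $v_a(A_a)$: orderedness bounds each filler, not their number. Deleting one good from $C$ then need not remove the envy.

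Shrinking a violating bag to a minimal subset meeting its \emph{owner's} threshold says nothing about $v_a(B\setminus\{g\})$ for other agents $a$. The goods it returns to the pool are also not covered by any MMS counting argument you give.

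The missing idea is the swap step of Algorithm~\ref{4/3ordered:alg:main}. A good is added only when no unserved agent values a bag of $L$ at least $1$ and no served agent strictly prefers a bag of $L$ to her current bundle. Hence, if $g$ is the last good added to $C$, there are two cases. Either $a$ was unserved then, so $v_a(C\setminus\{g\})<1\le v_a(A_a)$. Or she was served, so $v_a(C\setminus\{g\})$ is at most her value at that moment, which only increases afterwards.

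Your worry about bags with two large goods is unfounded and pushed you in the wrong direction. The bag $\{k,2n-k+1\}$ is EF1 after deleting $k$, because every bag ever held contains a good of $[n]$, which everyone values at least as much as $2n-k+1$. So you should keep the $\{j,2n-j+1\}$ seeding, on which the $4\ceil{n/3}$ counting relies, rather than switch to single-large-good seeds.

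Even with swaps, the MMS guarantee must be re-proved, and rescaling so that $\MMS_i^d=1$ is not enough. The analysis of \cite{MMS-l-out-of-d} needs a $d$-normalized instance, which is not without loss of generality once envy matters (Example~\ref{example}). The paper runs the algorithm on the original $v_i$ and analyzes it with an order-preserving normalization $\bar v_i\le v_i$ (Lemma~\ref{4/3ordered:lem:normalization}). Swaps also destroy the fixed bag-filling order used in that analysis. So the case $\bar v_i(2n-\ell^*)\ge 1/3$ has to be redone with the order-independent groups $A^+,A^1,A^2$ of Definition~\ref{4/3ordered:def:partition}.

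Finally, your Phase~1 reductions do not compose with $d=4\ceil{n/3}$. For $n=3k$, after giving one agent one good you only know $\MMS_i^{4k-1}\ge 1$ on the remaining goods. The guarantee available for $3k-1$ agents is only $1$-out-of-$4\ceil{(3k-1)/3}=4k$, which need not reach $1$. The paper uses no reductions at all; it only pads $n$ to a multiple of $3$ with copies of an agent, whose goods are redistributed by envy-cycle elimination at the end.
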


We conclude by briefly comparing our results with analogous work on combining \emph{cardinal} MMS approximations with envy-based guarantees. While existing results for cardinal MMS and EFX (or its relaxations) apply to general instances \cite{AkramiRathi2025simultaneous}, the best-known guarantees for ordered instances are no stronger. In particular, to the best of our knowledge, the largest value of $\alpha$ for which $\alpha$-MMS can be guaranteed together with EFX or EF1—even for ordered instances—is $\alpha=2/3$, despite the fact that the state-of-the-art MMS approximation alone achieves $\alpha=7/9$ \cite{huang2025fptas79}. 

For ordered instances our result attains the state-of-the-art \emph{ordinal} MMS guarantee of $1$-out-of-$4\ceil{n/3}$ while simultaneously satisfying EF1. Moreover, prior results combining $\alpha$-MMS with EFX apply only to \emph{partial} allocations, whereas Theorem~\ref{thm:1} guarantees complete allocations that satisfy EFX.

\subsection{Technical Contribution}


The main technical challenges addressed in this paper stem from the fact that existing algorithms for \(1\)-out-of-\(d\) MMS are largely agnostic to the structural properties required to ensure envy-freeness. Consequently, these algorithms rely on assumptions that cease to be without loss of generality once envy-based guarantees are introduced. Two such assumptions are \emph{orderedness} and \emph{normalization} (see Definition \ref{def:normal}) of instances. 

To relax the former, we introduce the notion of \emph{top-\(n\)} instances. Although this property is not without loss of generality, it constitutes a substantially weaker requirement than full orderedness and significantly broadens the class of instances that can be handled by Algorithm~\ref{3/2top:alg:lone_divider}. 

To address the normalization assumption, we propose a novel normalization technique that preserves any given ordering of valuations; see Lemma~\ref{4/3ordered:lem:normalization}. This construction allows us to decouple the valuations used by Algorithm~\ref{4/3ordered:alg:main} from those employed in the analysis. As a result, we retain the analytical advantages of working with normalized instances while ensuring that the algorithm operates on the original valuations, which is crucial for guaranteeing EF1. 

Finally, the only previously known algorithm achieving $1$-out-of-$4\lceil n/3\rceil$ MMS~\cite{MMS-l-out-of-d} relies on a bag-filling process that allocates bags in a fixed order and fills them according to a prescribed item order. These properties are incompatible with our setting, as allowing agents who have already received a bag to swap bundles precludes maintaining either order. Consequently, the existing analysis cannot be applied directly. We therefore develop a modified analysis that decouples the performance guarantees from these ordering assumptions, enabling correctness under the additional flexibility required for envy-freeness.


\subsection{Further Related Work}\label{sec:related-work}
In the context of the EFX criterion, \cite{plaut2020almost} established existence for the case of two agents with monotone valuations. 
For the case of three agents, \cite{chaudhury2020efx} proved that EFX allocations always exist for three agents when valuations are additive. This was extended to broader valuation classes in a sequence of works~\cite{chaudhury2020efx,berger2021almost,AkramiACGMM23} and to three types of additive valuations \cite{VishwaEFX}. Existence has also been shown in several restricted valuation domains, including identical~\cite{plaut2020almost}, binary~\cite{halpern2020fair}, and bi-valued~\cite{amanatidis2021maximum} preferences.

In parallel, researchers have explored both approximate notions~\cite{chaudhury2021little,amanatidis2020multiple,chan2019maximin,farhadi2021almost} and weakened variants~\cite{amanatidis2021maximum,caragiannis2019envy,berger2021almost,mahara2021extension,chasmjahan23,aram22,ef2x} of EFX, making this an active area within discrete fair division. Motivated by the epistemic perspective introduced by \cite{ABCGL18}, \cite{Caragiannis2023} proposed a novel relaxation known as \emph{epistemic EFX} (EEFX). It has been shown that EEFX allocations always exist under monotone valuations~\cite{AR24eefx} and admit polynomial-time algorithms in the additive case~\cite{Caragiannis2023}.

More relevant to our setting, Markakis and Santorinaios \cite{Markakis2023_topN} show that $2/3$-EFX allocations can be guaranteed for top-$n$ instances. Additionally, a relaxation of EFX—termed envy-freeness up to any flip—is introduced and studied in this setting by Mancho, Markakis, and Protopapas \cite{Mancho2025_topN}.

Turning to maximin share, since the impossibility result by \cite{procaccia2014fair}, a substantial body of work has focused on achieving multiplicative approximations of MMS \cite{amanatidis2017approximation,kurokawa2018fair,ghodsi2018fair,barman2020approximation,garg2020improved,FST21,simple,akrami2024breaking,heidari2025improved1013} for additive valuations leading to $7/9$ \cite{huang2025fptas79}.

For a broader treatment of these results and related work, we refer the reader to the survey by~\cite{survey2022}. Closely related is the extensively studied problem of allocating \emph{chores} rather than goods; see~\cite{guo2023survey} for a comprehensive overview.

\section{Definitions and Notation}\label{sec2}
We denote an instance of fair division by $I=(M,N,(v_i)_{i\in N})$, where $M$ denotes the $m$ indivisible goods, and $N$ the $n$ agents. For every agent, $v_i:2^M\to\mathbb R_{\geq0}$ denotes an additive valuation function, i.e. $v_i(X)=\sum_{g\in X}v_i(\{g\})$. For ease of notation, we write $v_i(g)$ instead of $v_i(\{g\})$ for goods $g\in M$. Without loss of generality, we can write $M=[m]$ and $N=[n]$, where $[k]=\{1,\dots,k\}$ for all integers $k$. 

\begin{definition}
    The \emph{1-out-of-$d$ MMS} value of an agent $i\in N$ is defined by 
    $$\MMS_i^d(M):=\max_{A\in \Pi_d(M)}\min_{A_j\in A}v_i(A_j),$$
    where $\Pi_d(M)$ denotes the partitions of $M$ into $d$ subsets. If $M$ is clear from context, we omit it and write $\MMS_i^d$. If $d=n$, we simply call this the \emph{MMS} value and write $\MMS_i$.
\end{definition}

A partition $P$ of $M$ into $d$ bundles is a $\MMS^d(M)$ partition for agent $i$, if $\min_{j \in d} v_i(P_j) = \MMS_i^d(M)$.

\begin{definition}
    An allocation $A$ is called \emph{1-out-of-$d$ MMS}, if for all $i \in N$, 
    $v_i(A_i)\geq \MMS_i^d(M)$.

\end{definition}

\begin{definition}
    Given a bundle $B$ and a (partial) allocation $(A_1,\dots, A_n)$, we say $i$ strongly envies $B$ if there exists $g\in B$ such that $v_i(A_i)<v_i(B\setminus\{g\})$.
\end{definition}

\begin{definition}
    An allocation $A$ is called \emph{envy-free up to any good}, or \emph{EFX}, if for any pair of agents $i\neq j$ and all $g \in A_j$, we have 
    $v_i(A_i)\geq v_i(A_j\setminus\{g\})$.
    In other words, no agent strongly envies another agent.


    Similarly, $A$ is called \emph{envy-free up to one good}, or \emph{EF1}, if for any pair of agents $i\neq j$ we have $v_i(A_i)\geq v_i(A_j)$, or there exists $g\in A_j$ with
    $v_i(A_i)\geq v_i(A_j\setminus\{g\})$.
\end{definition}

\begin{definition}
    Given a bundle $B$, and a (partial) allocation $A$, an agent $i$ is called \emph{most envious} of $B$, if there exists a proper subset $B'\subsetneq B$ such that $v_i(B')>v_i(A_i)$, and no other agent strongly envies $B'$.
\end{definition}
If there exists an agent who strongly envies $B$, there is also a most envious agent of $B$. Namely 
let $B'$ be a minimal subset of $B$ envied by some agent. An agent $i$ who envies $B'$ is a most envious agent.

\begin{definition}
    An instance $I=(M,N,(v_i)_{i\in N})$ is called \emph{ordered} if there exists an ordering $g_1,\dots, g_m$ of $M$ such that 
    $v_i(g_1)\geq v_i(g_2)\geq \dots\geq v_i(g_m)$ 
    for all agents $i\in N$.
\end{definition}

\begin{definition}\label{def:topK}
    An instance $I=(M,N,(v_i)_{i\in N})$ is a \emph{top-$k$} instance for $k\in[m]$, if the set of the $k$ most valuable goods is identical for all agents. We call goods in this set the \emph{top $k$} goods of $I$.
\end{definition}
Note that the agents need not agree on the values, or even the order among the items in the \emph{top $k$} goods of $I$. 

\begin{remark}
    Definition \ref{def:topK} is weaker than orderedness. In particular, every ordered instance is top-$k$ for any $k\in[m]$. Hence, for share based fairness, any instance can be assumed to be top-$k$ without loss of generality. For envy based notions however, 
    no such reduction is known.
\end{remark}

All algorithms presented in this paper only compute partial allocations at first, i.e. allocations where not all goods are necessarily assigned to agents. If we only wish to achieve an MMS-based guarantee, we can assign the remaining goods to agents arbitrarily. However, this does not work for envy-based guarantees. To remedy this, we employ the \emph{envy-cycle elimination algorithm}, first presented in \cite{lipton2004approximately}. Briefly summarized, the algorithm successively adds the remaining goods to existing bags in the following way. Consider the digraph $G$ where the nodes are agents, and there exists an edge $(i,j)$ if $i$ envies $j$. If $G$ contains an agent $i$ with no incoming edge, we can add the most valuable (for $i$) unassigned item to $i$'s bundle. If $G$ contains no such agent, then it must contain a directed cycle $C$. In this case, we can assign each agent in $C$ the bundle of the next agent along $C$. We eliminate cycles this way, until we find an agent with no incoming edge to assign the next good to. For a full description of the algorithm, see Algorithm \ref{common:alg:envycycle} in Appendix \ref{app2}. This procedure leads to the following lemma:

\begin{restatable}{lemma}{lemEnvyCycle}\label{3/2ordered:lem:envy_cycles}
    Let $I$ be an instance, and $A$ be a partial allocation that is EF1. Then there exists a complete allocation $A'$ that is also EF1, and $v_i(A_i)\leq v_i(A_i')$ for all agents $i$.
    
    If additionally $I$ is ordered, $A$ is EFX, and 
    each allocated good is at least as valuable as each unallocated good, there exists a complete allocation $A'$ that is also EFX, and $v_i(A_i)\leq v_i(A_i')$ for all agents $i$.
\end{restatable}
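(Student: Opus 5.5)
We run the envy-cycle elimination procedure (Algorithm~\ref{common:alg:envycycle}) starting from the partial allocation $A$. We prove by induction on the number of steps that two invariants hold throughout: (a) the current allocation is EF1 (respectively EFX in the second part), and (b) every agent's value for her own bundle never decreases. Since each step either allocates one more good or removes a cycle, and cycle removals only happen when no source exists, the procedure terminates with a complete allocation $A'$. Invariant (b) then gives $v_i(A_i)\le v_i(A_i')$.

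\emph{Cycle rotation.} Let $C$ be a directed envy cycle. Each agent on $C$ receives the bundle she envied, so her value strictly increases. Agents off $C$ keep their bundles, so their values are unchanged. The multiset of bundles is unchanged as well. Hence for every agent $i$ and every other bundle $B$, the comparison between $v_i(\text{own bundle})$ and $v_i(B\setminus\{g\})$ can only improve. EF1 and EFX are therefore preserved, and (b) holds. To see termination, note that each rotation strictly decreases the number of envy edges: every edge now points to the owner of a bundle, bundles are permuted, and the edges into the agents of $C$ from $C$ itself disappear. A standard argument then shows only finitely many rotations occur between consecutive good additions.

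\emph{Adding a good.} Suppose agent $s$ is a source, i.e., no one envies $s$, and we add a good $g$, giving $A_s\cup\{g\}$. For EF1, any agent $j$ satisfies $v_j(A_j)\ge v_j(A_s)=v_j((A_s\cup\{g\})\setminus\{g\})$. Other bundles are unchanged and $s$'s own value only increases, so EF1 and (b) hold.

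\emph{EFX case, the main obstacle.} Here we must show $v_j(A_j)\ge v_j((A_s\cup\{g\})\setminus\{h\})$ for every $h\in A_s\cup\{g\}$, not only for $h=g$.
\begin{itemize}
\item \emph{Choice of $g$.} Because the instance is ordered, we always add the next good $g$ in the common order. All previously allocated goods are at least as valuable as every unallocated good: initially by hypothesis, and this is maintained because we allocate in decreasing order. Hence $v_j(g)\le v_j(h)$ for every $h\in A_s$, for every agent $j$.
\item \emph{Removing $h\neq g$.} We get $v_j((A_s\cup\{g\})\setminus\{h\})=v_j(A_s)-v_j(h)+v_j(g)\le v_j(A_s)\le v_j(A_j)$, using that $j$ does not envy $s$.
\item \emph{Removing $g$ itself.} This case is immediate, as in the EF1 argument.
\item \emph{Edge case $A_s=\emptyset$.} The new bundle is a singleton, which is trivially EFX for others.
\end{itemize}
Thus EFX is preserved. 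Choosing the globally next good in the common order, rather than the source's favourite, is consistent with the algorithm's rule of taking the most valuable good for $s$, since in an ordered instance these coincide up to ties. Combining the two step types via induction yields the lemma.
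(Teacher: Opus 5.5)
Your proof is correct and follows the same route as the paper: you run envy-cycle elimination, observe that rotations only permute bundles while raising the values of the agents on the cycle, and show that adding a good to a source preserves EF1 (remove the new good) and, in the ordered case, EFX (the new good is the least valuable in the bundle for every agent). The differences are minor. You bound rotations by a decreasing number of envy edges, where the paper uses strictly increasing total utility; the cleanest justification is that each agent on $C$ afterwards envies a strict subset of the bundles she envied before. Your rule of adding the next good in the common order is also slightly more careful than the paper's ``most valuable for $i$'' rule, which with ties in $v_i$ could break the invariant that allocated goods dominate unallocated ones for every agent.
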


Since all valuation functions we consider here are additive, we can divide $v_i(g)$ by $\MMS_i^d$ for all $i\in N, g\in M$, and receive an equivalent instance with $\MMS_i^d=1$ for all agents.  
Note that all the envy-based notions that we study are scale-invariant. Hence, without loss of generality, we assume $\MMS^d_i = 1$ for $d=\ceil{3n/2}$ in Sections \ref{sec:3/2ordered} and \ref{sec4}, and $d=4\ceil{n/3}$ in Section \ref{sec5}.

\section{\boldmath $1$-out-of-$\ceil{3n/2}$ MMS Together with EFX for Ordered Instances}\label{sec:3/2ordered}
In this section, we present an algorithm that gets an ordered instance and outputs an allocation that is $1$-out-of-$\ceil{3n/2}$ MMS and EFX at the same time (Algorithm \ref{3/2ordered:alg:main}). 

Algorithm~\ref{3/2ordered:alg:main} follows the approach of \cite{MMS-l-out-of-d}, but is augmented with an additional phase that allows agents to claim a bundle even if they have already been allocated one. As in \cite{AkramiRathi2025simultaneous}, this extra phase mitigates the myopic behavior that agents would otherwise exhibit. A more subtle yet crucial departure from \cite{MMS-l-out-of-d} lies in the initialization of the bags: unlike their approach, we permit bags of size one, a modification that is essential for guaranteeing the EFX property.

\begin{algorithm}[tb]
    \DontPrintSemicolon

    \SetKwInOut{Input}{Input}
    \SetKwInOut{Output}{Output}
    \Input{An ordered instance $(M,N,(v_i)_{i\in N})$}
    \Output{A partial allocation that is 1-out-of-$\ceil{3n/2}$ MMS and EFX}
    \BlankLine
    $k\gets1$,\quad $U\gets N$\;
    \While(\tcp*[f]{Initialization}){$\exists i\in U:v_i(k)\geq 1$}{
        $B_k\gets \{k\}$\;
        $A_i\gets B_k$,\quad $U\gets U\setminus\{i\}$\;
        $k\gets k+1$\;
    }
    \For{$j$ from $k$ to $n$}{
        $B_j\gets \{j,2n-j+1\}$\;
    }
    $g\gets 2n+2-k$,\quad $L\gets \{B_1,\dots,B_n\}$\;
    \While(\tcp*[f]{Bag-Filling}){$U\neq \emptyset$}{
        \uIf{$\exists i\in U, B\in L:v_i(B)\geq 1$}{
            $L\gets L\setminus\{B\}$,\quad $U\gets U\setminus\{i\}$\;
            $A_i\gets B$\;
        }\uElseIf{$\exists i\in N\setminus U, B\in L:v_i(B)>v_i(A_i)$}{
            $L\gets (L\cup\{A_i\})\setminus \{B\}$\;
            $A_i\gets B$\;
        }\Else{
            Let $B\in L$ arbitrary: $B\gets B\cup\{g\}$\;
            $g\gets g+1$\;
        }
    }
    \Return $(A_1,\dots ,A_n)$\;
    \caption{\sc{1-out-of $3n/2$ MMS + EFX}}
    \label{3/2ordered:alg:main}
\end{algorithm}

We can assume without loss of generality, that $m\geq2n$.\footnote{Otherwise, we add dummy goods of value $0$ to every agent, until this is true.} Now consider the bag filling approach shown in Algorithm \ref{3/2ordered:alg:main}: we first initialize $n$ bags, such that bag $B_j$ contains good $j$, and possibly also good $2n-j+1$. Then, we successively add the most valuable leftover good to an arbitrary bag, and allow agents without a bag to claim bags they value at least 1, and agents with a bag to swap for a bag they value more than their current bag. The latter is crucial to ensure EFX. 

We first prove that Algorithm \ref{3/2ordered:alg:main} produces a partial allocation that is 1-out-of-$\lceil 3n/2\rceil$ MMS, and later that the allocation is also EFX. Since we only assign bags to agents who value them at least 1, if the algorithm allocates a bag to every agent, the allocation is 1-out-of-$\lceil 3n/2\rceil$ MMS. Towards a contradiction, let us assume that there exists an agent $i$ that does not receive a bag in this process. We call the bag that results from $B_j$ at the end of the algorithm $\Hat{B_j}$.
\begin{lemma}\label{3/2ordered:lem:leftover}
    For each bag $\Hat B_j$, let $g_j$ be the second most valuable good in $\Hat B_j$, if it exists (and $\emptyset$ otherwise). We then have
    $\sum_{j=1}^nv_i(\Hat B_j\setminus\{g_j\})<\sum_{j=1}^n\max\{1,v_i(j)\}.$
\end{lemma}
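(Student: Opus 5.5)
The plan is to prove the inequality bag by bag: show $v_i(\Hat B_j\setminus\{g_j\})\le \max\{1,v_i(j)\}$ for every $j\in[n]$, and show that at least one of these inequalities is strict. Summing over $j$ then gives the lemma. Throughout, $i$ is the agent who never receives a bag. So $i\in U$ for the entire run, and every time the \emph{else-if} or \emph{else} branch executes, no bag $B\in L$ has $v_i(B)\ge 1$. I also use the convention that the algorithm ends, with $i$ still unassigned, once no branch can act: goods are exhausted and the first branch does not fire. At that point every bag remaining in $L$ has $v_i$-value $<1$.

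\textbf{Unfilled bags.} Suppose $\Hat B_j$ never received a good in the Bag-Filling phase. Then $\Hat B_j=B_j$ is either $\{j\}$ or $\{j,2n-j+1\}$.
\begin{itemize}
\item If $\Hat B_j=\{j\}$, there is no second good, so $v_i(\Hat B_j\setminus\{g_j\})=v_i(j)$.
\item If $\Hat B_j=\{j,2n-j+1\}$, orderedness gives $v_i(j)\ge v_i(2n-j+1)$. So we may take $g_j=2n-j+1$, and again $v_i(\Hat B_j\setminus\{g_j\})=v_i(j)$.
\end{itemize}
In both cases the value is at most $\max\{1,v_i(j)\}$.

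\textbf{Filled bags.} Suppose $\Hat B_j$ received at least one filling good, and let $h$ be the last such good.
\begin{itemize}
\item Just before $h$ was added, the \emph{else} branch executed and the bag was in $L$. Since $i\in U$ and the first branch did not fire, $v_i(\Hat B_j\setminus\{h\})<1$.
\item Filling goods are added in increasing index order, and every index used is larger than all initial goods. Hence $h$ is the least valuable good of $\Hat B_j$ for $i$.
\item The bag has at least two goods, so $g_j$ exists and $v_i(g_j)\ge v_i(h)$.
\end{itemize}
Therefore $v_i(\Hat B_j\setminus\{g_j\})\le v_i(\Hat B_j\setminus\{h\})<1\le\max\{1,v_i(j)\}$, which is strict.

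\textbf{Strictness.} If some bag was filled, the filled-bag case already gives a strict inequality. Otherwise, note that at most $n-1$ agents hold bags at the end, since $i$ holds none. So some final bag $\Hat B_j$ lies in $L$, and by the termination convention $v_i(\Hat B_j)<1$. Then $v_i(\Hat B_j\setminus\{g_j\})\le v_i(\Hat B_j)<1\le \max\{1,v_i(j)\}$, which is strict. Summing the $n$ inequalities proves the lemma.

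The main obstacle I expect is handling bags that were swapped back into $L$ by the \emph{else-if} branch, and bags of size one from the initialization. The argument above avoids both issues: it uses only the state of a bag at the moment its last good was added, and the final state of bags in $L$. It never depends on which agent held a bag at intermediate times. The other point to verify carefully is that every filling good has a larger index than every initial good, so that $h$ really is the least valuable good in its bag. This holds because $g$ starts at $2n+2-k$, which exceeds every initial index.
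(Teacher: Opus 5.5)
Your proposal is correct and follows the paper's proof: the same per-bag bound (an unfilled bag contributes exactly $v_i(j)$, and a filled bag's last added good $h$ is its least valuable good with $v_i(\Hat B_j\setminus\{h\})<1$), summed over $j\in[n]$. The only difference is the strictness step, where the paper uses $m>2n$ to force some bag to be filled, while you note that some bag is still unclaimed in $L$ when the algorithm stalls and so is worth less than $1$ to $i$; this has the small advantage of not needing $m>2n$, since the paper's footnote only ensures $m\ge 2n$.
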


\begin{proof}
    If no items get added to $B_j$ after initialization, we have $v_i(\Hat B_j\setminus\{g_j\})=v_i(j)$. Otherwise, let $g$ be the last item added to $\Hat B_j$. Then by orderedness, $v_i(g)\leq v_i(g_j)$. Also, $v_i(\Hat{B_j}\setminus\{g\})<1$, since otherwise, $g$ would not have been added and $i$ would have claimed $\Hat B_j$. Hence we have 
    \[v_i(\hat B_j\setminus\{g_j\})\leq v_i(\Hat B_j\setminus\{g\})<1.\]
    Adding up the inequalities gives
    \[\sum_{j=1}^nv_i(\Hat B_j\setminus\{g_j\})<\sum_{j=1}^n\max\{1,v_i(j)\}.\]
    Since $m>2n$, at least one of the contributing inequalities is of the second kind, and so is strict. This yields the assertion.
\end{proof}

\begin{lemma}\label{3/2ordered:lem:reduction}
    For all $k\in[\ceil{\frac{3n}{2}}]$, it holds
    \[\MMS_i^k(M\setminus[k+1,3n-k])\geq 1.\]
    In particular, $\MMS_i^n(M\setminus[n+1,2n])\geq 1$. 
\end{lemma}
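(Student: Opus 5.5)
The plan is a downward induction on $k$, starting at $k=\ceil{3n/2}$ and removing one bundle and two goods per step. Throughout, goods are indexed $1,\dots,m$ in the common non-increasing order of value. If fewer than $3n$ goods exist, we first pad with dummy goods of value $0$. This changes neither the $\MMS$ values nor the statement. Write $M_k := M\setminus[k+1,3n-k]$.

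\textbf{Base case $k=\ceil{3n/2}$.} Here $3n-k=\lfloor 3n/2\rfloor\le k$, so the interval $[k+1,3n-k]$ is empty and $M_k=M$. The claim is exactly the normalization $\MMS_i^{\ceil{3n/2}}(M)=1$.

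\textbf{Inductive step.} Suppose $2\le k\le\ceil{3n/2}$ and $\MMS_i^k(M_k)\ge 1$. Then $M_{k-1}=M_k\setminus\{k,3n-k+1\}$, and we must produce a $(k-1)$-partition of $M_{k-1}$ in which every bundle has value at least $1$ for $i$. Fix a $k$-partition $P_1,\dots,P_k$ of $M_k$ with $v_i(P_j)\ge 1$ for all $j$.
\begin{itemize}
\item Since the goods $k+1,\dots,3n-k$ are missing from $M_k$, the $k+1$ most valuable goods of $M_k$ are $1,\dots,k$ and $3n-k+1$.
\item By pigeonhole, some bundle, say $P_1$, contains two of these $k+1$ goods. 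Call them $a<b$ in index order.
\item Then $a\le k$, so $v_i(a)\ge v_i(k)$. Also $b$ is at most the $(k+1)$-st good of $M_k$, so $v_i(b)\ge v_i(3n-k+1)$.
\item Dissolve $P_1$. In the bundle containing good $k$, replace $k$ by $a$. In the bundle containing $3n-k+1$, replace it by $b$. Distribute the remaining goods of $P_1\setminus\{a,b\}$ arbitrarily among the surviving bundles.
\end{itemize}
Each surviving bundle loses a good of value at most what it gains, so its value stays at least $1$. The result is a $(k-1)$-partition of $M_{k-1}$, which gives $\MMS_i^{k-1}(M_{k-1})\ge 1$. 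Taking $k=n$ yields the ``in particular'' statement, since $M_n=M\setminus[n+1,2n]$.

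\textbf{Main obstacle.} The only delicate point is the bookkeeping in the swap when $k$ or $3n-k+1$ already lies in $P_1$, or when both lie in the same surviving bundle.
\begin{itemize}
\item If $k\in P_1$ or $3n-k+1\in P_1$, that good simply disappears together with $P_1$, and no replacement is needed for it.
\item If good $k$ is itself $a$ or $b$, we match the remaining pair instead: $a$ with the smaller-index removed good and $b$ with the other. This stays valid because $a,b$ are the two smallest-index goods in $P_1$ among the top $k+1$.
\item If both removed goods lie in the same surviving bundle, that bundle receives both $a$ and $b$. Its value still does not decrease.
\end{itemize}
One routine check remains in the first bullet. If, say, $k\in P_1$ but $k\notin\{a,b\}$, then $3n-k+1$ may still lie in a surviving bundle $Q$. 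In that case we put $b$ (or $a$) into $Q$. This is fine because both $a$ and $b$ are at least as valuable as good $3n-k+1$, the least valuable of the top $k+1$ goods of $M_k$.
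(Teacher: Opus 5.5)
Your overall approach matches the paper's. It is a downward induction from $k=\lceil 3n/2\rceil$, with pigeonhole on the goods $[k]\cup\{3n-k+1\}$ of $M_k$ and an exchange that frees one bundle without lowering any surviving bundle. Your bookkeeping for the cases where $k$ or $3n-k+1$ lies in $P_1$ is also correct.

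There is, however, one case where a step fails as written. When $n$ is odd and $k=\lceil 3n/2\rceil=(3n+1)/2$, you have $3n-k+1=k$. So the set $\{1,\dots,k\}\cup\{3n-k+1\}$ has only $k$ elements, not $k+1$, and pigeonhole gives nothing. For instance, with $n=1$ and $k=2$, goods $1$ and $2$ may lie in different bundles of the $2$-partition. Your description of the top $k+1$ goods of $M_k$ is also wrong in this case: $[k+1,3n-k]$ is empty, $M_k=M$, and the top $k+1$ goods are $1,\dots,k+1$.

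The paper treats this case separately, and the fix is easy. Here $M_{k-1}=M_k\setminus\{k\}$, so only one good must be removed. Delete the bundle containing $k$ and add its other goods to any surviving bundle. Alternatively, apply pigeonhole to $\{1,\dots,k+1\}$ and only replace $k$ by $a$. With this case added, your proof is complete.
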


\begin{proof} 
    Notice that for $k=\ceil{\frac{3n}{2}}$, we have $\MMS^k_i(M)=1$ by assumption. We show by induction, that by decreasing $k$ one at a time, the left hand side never decreases. Consider a $\MMS^k(M\setminus[k+1,3n-k])$ partition for agent $i$. If $k\neq 3n-k+1$ (which is only violated for $k=\ceil{3n/2}$ and $n$ odd), by the pigeonhole principle, two of the items $[k]\cup\{3n-k+1\}$ are in the same bag. We swap these two items with $\{k,3n-k+1\}$, and redistribute the rest of the goods in this bag arbitrarily. If $k=3n-k+1$, we simply redistribute the other goods in the bag containing $k$. We have now created a partition into $k$ bags, one of which is exactly $\{k,3n-k+1\}$, and each of the others having value no less than before. Hence, by removing this bag we have a $(k-1)$ partition where each bag has value $\geq 1$ to agent $i$, hence $\MMS_i^{k-1}(M \setminus [k,3n-k+1])\geq 1$.
\end{proof}

\begin{lemma}\label{3/2ordered:lem:biggoods}
    Let $G=\{g\in M\mid v_i(g)>1\}$, i.e. the items that agent $i$ values more than one. 
    We have $v_i(M\setminus[n+1,2n])\geq n+\sum_{g\in G}(v_i(g)-1)$.
\end{lemma}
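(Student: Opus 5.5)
Here $M\setminus[n+1,2n]$ is the good set from the last part of Lemma~\ref{3/2ordered:lem:reduction}. That lemma gives $\MMS_i^n(M\setminus[n+1,2n])\geq 1$. So we may fix a partition $P_1,\dots,P_n$ of $M\setminus[n+1,2n]$ with $v_i(P_\ell)\geq 1$ for every $\ell$. The bound $v_i(M\setminus[n+1,2n])\geq n$ follows immediately by summing, so the task is to gain the extra term $\sum_{g\in G}(v_i(g)-1)$.

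First I will locate the goods of $G$. By orderedness, $G$ is a prefix $\{1,\dots,|G|\}$ of the goods. Every good in $G$ has value greater than $1$ and $\MMS_i^{\ceil{3n/2}}(M)=1$, so I expect $|G|\le n$. Indeed, if $|G|\geq n+1$, then goods $1,\dots,n+1$ all exceed value $1$. Since $\MMS^k_i$ is nonincreasing in $k$ (merge bags), and the Lemma~\ref{3/2ordered:lem:reduction} argument only removes items $k$ and $3n-k+1$ with $k\ge n+1$, we would get a contradiction. Actually $|G|\le n$ is not even needed for the counting below, but it will guarantee that $G\subseteq M\setminus[n+1,2n]$. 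This lets every good of $G$ lie in some $P_\ell$.

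The key step is a per-bag bound. For each bag $P_\ell$, let $G_\ell=P_\ell\cap G$. If $G_\ell=\emptyset$, use $v_i(P_\ell)\geq 1$. If $G_\ell\neq\emptyset$, then
\[v_i(P_\ell)\geq \sum_{g\in G_\ell} v_i(g)=|G_\ell|+\sum_{g\in G_\ell}(v_i(g)-1)\geq 1+\sum_{g\in G_\ell}(v_i(g)-1).\]
In both cases $v_i(P_\ell)\geq 1+\sum_{g\in G_\ell}(v_i(g)-1)$. Summing over $\ell$ then gives
\[v_i(M\setminus[n+1,2n])\geq n+\sum_{g\in G\cap (M\setminus[n+1,2n])}(v_i(g)-1).\]

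The main obstacle is the last step: making sure no good of $G$ lies in $[n+1,2n]$, i.e.\ $|G|\le n$. I would argue it as follows. Suppose $v_i(n+1)>1$. Then goods $1,\dots,n+1$ each exceed $1$. Pair good $j$ with good $3n+1-j$ for $j$ between $n+2$ and $\ceil{3n/2}$, which uses goods $n+2,\dots,2n$ roughly. I would then compare with an $\MMS^{\ceil{3n/2}}$ partition via the pigeonhole style argument of Lemma~\ref{3/2ordered:lem:reduction}.

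A cleaner route is the following. If $v_i(n+1)>1$, the initialization loop assigns singletons to at least $n+1$ agents, which is impossible, so that case is vacuous in the algorithm's context. I would present that justification, or else simply allow $G$-goods in $[n+1,2n]$ and note that each such good can be attached to its own bag when $|G|>n$.
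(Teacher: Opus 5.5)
Your skeleton is exactly the paper's proof. You take the partition of $M\setminus[n+1,2n]$ from Lemma~\ref{3/2ordered:lem:reduction}, bound each part by $1+\sum_{g\in G\cap P_\ell}(v_i(g)-1)$, and sum. You are also right that this only controls $G\cap(M\setminus[n+1,2n])$, so one needs $G\cap[n+1,2n]=\emptyset$; the paper uses this silently.

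Your first justification of that fact is false. $|G|\le n$ does not follow from $\MMS_i^{\ceil{3n/2}}(M)=1$, and no pigeonhole comparison with an $\MMS^{\ceil{3n/2}}$ partition will give it. Take $n=4$, so $\ceil{3n/2}=6$. Let agent $i$ value five goods at $1.1$, one good at $1$, and all remaining goods (at least two, so $m\ge 2n$) at $0$. Then $\MMS_i^6(M)=1$ and $\MMS_i^4(M\setminus[5,8])=1.1\geq 1$, yet
\[v_i(M\setminus[5,8])=4.4<4.5=n+\sum_{g\in G}(v_i(g)-1).\]
So the lemma is false as a standalone statement. For the same reason your final fallback cannot work: you proposed allowing $G$-goods in $[n+1,2n]$ and attaching each to its own bag. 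Such a good does not lie in $M\setminus[n+1,2n]$ at all, so its excess $v_i(g)-1$ cannot be charged to any part of $P$; it is simply lost.

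The lemma holds only under the section's standing hypothesis that agent $i$ never receives a bag, and your ``cleaner route'' uses it correctly. Since $i$ stays in $U$ throughout, each pass of the initialization loop of Algorithm~\ref{3/2ordered:alg:main} removes a distinct agent other than $i$. The loop therefore halts at some $k\le n$ with $v_i(k)<1$. By orderedness $v_i(g)<1$ for all $g\ge n$, so $G\subseteq[n-1]$, which is what you need. State this hypothesis explicitly, commit to that argument, and drop the other two justifications.
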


\begin{proof}
     In a $\MMS^n(M\setminus[n+1,2n])$ partition $P$ for agent $i$, each bag has value $\geq 1$, by Lemma \ref{3/2ordered:lem:reduction}. If a given bag $B$ in $P$ contains goods from $G$, its value is 
     $$v_i(B)\geq\sum_{g\in G \cap B}v_i(g)\geq1+ \sum_{g\in G\cap B}(v_i(g)-1).$$
     Hence, we have $$v_i(M\setminus[n+1,2n])=\sum_{B\in P}v_i(B)\geq n+\sum_{g\in G}(v_i(g)-1).$$
\end{proof}

\begin{lemma}\label{3/2ordered:lem:MMS}
    In Algorithm \ref{3/2ordered:alg:main}, all agents receive a bundle worth at least 1.
\end{lemma}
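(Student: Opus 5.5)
We argue by contradiction. Suppose some agent $i$ remains in $U$ when the goods run out. (Under our convention every agent reaches value $1$ as soon as it claims a bag, and swaps only increase its value.) The plan is to compare an upper bound on $v_i(M)$, coming from the final bags, with the lower bound of Lemma~\ref{3/2ordered:lem:biggoods}.

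First we check that $i$'s presence in $U$ at the end means every good was placed into some bag, so that $M$ is partitioned by the final bags $\hat B_1,\dots,\hat B_n$. Next, Lemma~\ref{3/2ordered:lem:leftover} gives
\[\sum_{j}v_i(\hat B_j\setminus\{g_j\})<\sum_{j=1}^n\max\{1,v_i(j)\}.\]
Then we account for the removed second goods $g_j$. For each bag that received goods beyond its initial ones, the second most valuable good is the good $2n-j+1$ if it was present, or else the first added good. For singleton-initialized bags, the second good is the first added good, which is at most as valuable as the item it replaces in the canonical pairing. Either way, orderedness lets us injectively charge each $g_j$ to a distinct good of $[n+1,2n]$ of at least the same value. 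The point of the charging is that
\[\sum_j v_i(g_j)\le v_i([n+1,2n]).\]
We also need that the goods removed in Lemma~\ref{3/2ordered:lem:leftover}, together with the goods $g_j$, exhaust $M$. Summing the two bounds gives
\[v_i(M)<\sum_{j=1}^n\max\{1,v_i(j)\}+v_i([n+1,2n]).\]

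On the other side, Lemma~\ref{3/2ordered:lem:biggoods} gives
\[v_i(M)=v_i(M\setminus[n+1,2n])+v_i([n+1,2n])\ge n+\sum_{g\in G}(v_i(g)-1)+v_i([n+1,2n]).\]
Here $\sum_{g\in G}(v_i(g)-1)\ge\sum_{j\le n}(\max\{1,v_i(j)\}-1)$, since every $j\le n$ with $v_i(j)>1$ lies in $G$. The right-hand side is therefore at least $\sum_{j=1}^n\max\{1,v_i(j)\}+v_i([n+1,2n])$, which contradicts the strict upper bound. Hence no agent remains unassigned, and every agent holds a bundle of value at least $1$.

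The main obstacle is the charging step. We must verify that the second-best goods $g_j$ can be matched injectively to goods of $[n+1,2n]$ of no smaller value. This is delicate for bags initialized as singletons $\{k\}$, because the good $2n-k+1$ is then not placed there but becomes the first leftover good ($g$ starts at $2n+2-k$). We would handle this by noting that the goods $2n+2-k,\dots,2n$ are exactly the good indices skipped in $[n+1,2n]$ by the pairing. Since goods are added in decreasing order, the first added goods of the singleton bags are each at most as valuable as a distinct one of these. The good $2n+1-k$ is handled symmetrically.
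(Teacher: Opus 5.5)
Your proof is correct and is essentially the paper's argument: it combines Lemma~\ref{3/2ordered:lem:leftover}, Lemma~\ref{3/2ordered:lem:biggoods} and the charging bound $\sum_j v_i(g_j)\le v_i([n+1,2n])$, rearranged by adding $v_i([n+1,2n])$ to both sides where the paper subtracts it, and your inequality $\ge$ in place of the paper's equality $n+\sum_{g\in G}(v_i(g)-1)=\sum_{j\le n}\max\{1,v_i(j)\}$ avoids the tacit use of $G\subseteq[n]$. You also justify the charging bound, which the paper only asserts; the version that works is the sorted injection in your last paragraph, which sends the distinct first-added goods of the singleton bags into $\{2n+2-k,\dots,2n\}$ while the paired bags' second goods are exactly $\{n+1,\dots,2n+1-k\}$ --- not a bag-by-bag comparison with $2n-j+1$ --- and the good $2n+1-k$ needs no separate treatment, since it is simply the second good of $B_k$.
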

\begin{proof}
    Let $G=\{g\in M\mid v_i(g)>1\}$, and $g_j$ be the second most valuable good in $\Hat B_j$ if it exists (and $\emptyset$ otherwise) as before. Additionally, let $\bar M=M\setminus\{g_j:j\in [n]\}$. Notice that $v_i(\{g_j:j\in[n]\})\leq v_i(\{n+1,\dots,2n\})$. We then have
    \begin{align*}
        v_i(\bar M)&\geq v_i(M)-v_i(\{n+1,\dots,2n\})\\
        &\geq n+\sum_{g\in G}(v_i(g)-1)  &\text{(Lemma  \ref{3/2ordered:lem:biggoods})}\\
        &=\sum_{j=1}^n\max\{1,v_i(j)\}\\
        &>\sum_{j=1}^n v_i(\Hat B_j\setminus\{g_j\}) & \text{(Lemma \ref{3/2ordered:lem:leftover})} \\
        &=v_i(\bar M);
    \end{align*}
    a contradiction. Hence, every agent receives a bag.
\end{proof}

\begin{lemma}\label{3/2ordered:lem:EFX}
    The allocation computed by Algorithm \ref{3/2ordered:alg:main} is EFX.
\end{lemma}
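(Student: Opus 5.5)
The plan is to prove, by induction over the iterations of Algorithm~\ref{3/2ordered:alg:main}, a slightly stronger invariant. At every point of the execution, every agent $i \in N\setminus U$ is EFX towards \emph{every} bag currently in existence, both the bags in $L$ and the bags held by other agents. Since the final allocation consists of bags held by agents, EFX follows at termination.

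The first step is a structural observation that reduces EFX to one inequality per bag. Goods enter bags in increasing index order: the initialization uses goods $1,\dots,2n-k+1$, and bag-filling starts at $g=2n+2-k$ and increments $g$. Hence the good last added to a bag is its least valuable good for every agent, by orderedness. With additive valuations, $v_i(B\setminus\{g\})$ is maximized by removing the least valuable good. So agent $i$ does not strongly envy $B$ if and only if $v_i(A_i)\ge v_i(B\setminus\{\ell(B)\})$, where $\ell(B)$ is the last good added to $B$. For a never-extended bag, $\ell(B)$ is good $2n-j+1$ in $B_j=\{j,2n-j+1\}$, or the unique good of a singleton. A singleton bag can never be strongly envied.

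Next I would record a bound on all bags. For every bag $B$ that is not an initialization singleton, I claim $v_{i'}(B\setminus\{\ell(B)\})<1$ for every agent $i'$ who was still in $U$ at the moment $\ell(B)$ was placed. There are two cases.
\begin{itemize}
\item If $B=B_j=\{j,2n-j+1\}$ with $j\ge k$, then $B\setminus\{\ell(B)\}=\{j\}$. Every agent $i'$ still in $U$ after initialization has $v_{i'}(j)\le v_{i'}(k)<1$, by the termination condition of the initialization loop and orderedness.
\item If $\ell(B)$ was added in the else-branch, then just before that step no agent of $U$ valued $B$ at least $1$.
\end{itemize}
Bags never change while they are held, so this bound persists for the whole run.

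The inductive step checks each type of operation.
\begin{itemize}
\item \emph{Initialization.} Agents hold singletons $\{k'\}$ with $v_i(k')\ge1$. Every other bag minus its last good is either empty or a single good $j\ge k'$, which $i$ values at most $v_i(k')$.
\item \emph{Claim step.} Agent $i$ leaves $U$ with $v_i(A_i)\ge 1$. Since $i$ was in $U$ throughout, the bound above gives $v_i(B\setminus\{\ell(B)\})<1\le v_i(A_i)$ for every bag $B$. Other agents' values are unchanged and the set of bags is unchanged, so they remain EFX.
\item \emph{Swap step.} The swapping agent's value strictly increases, and the set of existing bags is unchanged. Because the invariant covers bags in $L$ as well as held bags, every agent remains EFX towards the newly held bag and the returned bag.
\item \emph{Else-branch.} Good $g$ is added to some $B\in L$ only when no agent $i\in N\setminus U$ has $v_i(B)>v_i(A_i)$. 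The new last good is $g$, so $v_i((B\cup\{g\})\setminus\{g\})=v_i(B)\le v_i(A_i)$ for all holders.
\end{itemize}

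The main obstacle I expect is the claim step for agents who joined $N\setminus U$ late, relative to bags that were filled long before. The point to verify carefully is that the bound $<1$ on $B\setminus\{\ell(B)\}$ was established relative to all of $U$ at that earlier time, and $U$ only shrinks, so the late agent is covered. The other subtle point is the orderedness argument that the last-added good is the minimal one, which relies on the index bookkeeping of the initialization $\{j,2n-j+1\}$ and the starting value $g=2n+2-k$.
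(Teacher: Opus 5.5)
Your proof is correct and uses the same ingredients as the paper's proof. By orderedness, the last-added good of a bag is its least valuable one. A bag minus that good is worth less than $1$ to any agent still in $U$ when the good was added, and at most $v_i(A_i)$ to any agent $i$ already holding a bag (otherwise that agent would have claimed or swapped). Finally, holders' values only increase.

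The paper phrases this as a four-case check on the final allocation rather than as an invariant covering all bags, including those in $L$. Your operation-by-operation steps correspond directly to its cases. You also state the swap-case inequality correctly as non-strict, and you treat singleton holders facing unextended two-item bags more explicitly than the paper's Case 2 does.
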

\begin{proof}
    Assume an agent $i$, having received a bag $B$ strongly envies an agent $j$, holding bag $C$, in the end. Notice that the items in $C$ were added in decreasing order of value, so in particular, the last item added is the least valuable. Hence we only need to consider this good for showing EFX. We now distinguish 4 cases:
    
    \textbf{Case 1:} $C=\{g\}$, i.e. a singleton claimed during setup. Singleton Bags can never be strongly envied.
    
    \textbf{Case 2:} $C=B_k$, i.e. a two item bag claimed right after setup. Now $v_i(k)\geq 1$, so $i$ should have claimed a singleton during setup.

    \textbf{Case 3:} $C$ has received items during bag filling, and $i$ had not claimed a bag at that time. Let $g$ be the last good added to $C$. Now $v_i(C\setminus\{g\})<1\leq v_i(B)$, otherwise $g$ would not have been added. Since $g$ is the least valuable good in $C$, there is no strong envy.

    \textbf{Case 4:} $C$ has received items during bag filling, and $i$ had a claimed bag $B'$ at that time. Let $g$ be the last item added to $C$. Now $v_i(C\setminus\{g\})<v_i(B')$, since otherwise $i$ would have swapped for $v_i(C\setminus\{g\})$, instead of adding $g$. Also $v_i(B')\leq v_i(B)$, since the value of each agents bag only increases by swapping bags. Again, since $g$ is the least valuable good in $C$ there is no strong envy.
\end{proof}

Combining these results, and then applying the envy-cycle elimination algorithm, we get:
\theoremOne*
\begin{proof}
    By Lemmas \ref{3/2ordered:lem:MMS} and \ref{3/2ordered:lem:EFX}, Algorithm \ref{3/2ordered:alg:main} computes a partial allocation that is 1-out-of-$\ceil{3n/2}$ MMS and EFX. Since the goods were added in decreasing order of value, the requirements for Lemma \ref{3/2ordered:lem:envy_cycles} are met. We can thus apply the envy-cycle elimination algorithm to the result of Algorithm \ref{3/2ordered:alg:main}, and we obtain a complete allocation that does not decrease the share for any agent.
\end{proof}

\section{\boldmath $1$-out-of-$\ceil{3n/2}$ MMS Together with EF1 for Top-$n$ Instances}\label{sec4}

In this section, we present an algorithm that gets a top-$n$ instance and outputs a partial allocation that is $1$-out-of-$\ceil{3n/2}$ MMS and EFX at the same time (Algorithm \ref{3/2top:alg:lone_divider}), and a way to complete this allocation while maintaining EF1. Algorithm \ref{3/2top:alg:lone_divider} is algorithmically similar to the one presented in \cite{Hosseini2021OrdinalMS}, but with differences in important details, and differs majorly in its analysis. Firstly, when wanting only the MMS guarantee, our analysis is significantly shorter and simpler. Also, since we do not have orderedness, to achieve an ordinal MMS guarantee, we maintain the additional invariant that every bag created throughout the algorithm contains exactly one of the top $n$ goods of the instance.

\begin{algorithm}[tb]
    \DontPrintSemicolon

    \SetKwInOut{Input}{Input}
    \SetKwInOut{Output}{Output}
    \Input{A top-$n$ instance $(M,N,(v_i)_{i\in N})$}
    \Output{A partial allocation that is 1-out-of-$\ceil{3n/2}$ MMS and EFX}
    \BlankLine
    $B_j\gets \emptyset$ for all $j\in N$, \quad $U\gets N$, \quad $L\gets M$\;
    \While{$U\neq \emptyset$\label{ln:lone_divider:while}}{
        Let  an arbitrary ``lone divider'' agent $i\in U$ divide the items in $L$ into $|U|$ bags $C_j$, each with value $\geq 1$ to $i$ and each containing exactly one of the top $n$ goods\label{ln:lone_divider:lone_divider} \;
        \For{$j\in U$}{
            Let $C'_j \subseteq C_j$ be inclusion-wise minimal, containing its top $n$ good, such that $\exists i'\in U$ with $v_i(C_j')\geq 1$\label{ln:lone_divider:shrink}\;
            \If{$\exists a\in N\setminus U$ such that $a$ strongly envies $C_j$\label{ln:lone_divider:if}}{
                Let $a^*\in N\setminus U$ be a most envious agent of $C_j$\; 
                Let $C_j''\subseteq C_j'$ be inclusion-wise minimal, containing its top $n$ good, with $v_{a^*}(C_j'')>v_{a^*}(B_{a^*})$ for some agent $a^*$, and no agent strongly envies $C_j''$ \;
                $L \gets L\cup B_{a^*}\setminus C_j'',\quad B_{a^*} \gets C_j''$ \label{ln:lone_divider:swap}\;
                Go to Line \ref{ln:lone_divider:while}\;
            }
        }
        Find a nonempty envy-free matching in the threshold graph $T_{C',\bone}$ between the bags $C'=(C_1',\dots,C_{|U|}')$ and $U$\label{ln:lone_divider:ef_matching}\;
        \For{Agents $i$ matched to a bag $C_j'$}{
            $B_i \gets C_j'$, \quad $U \gets U\setminus\{i\}$, \quad $L \gets L\setminus B_i$\label{ln:lone_divider:assign}\;
        }
    }
    \Return $(B_1,\dots ,B_n)$\;
    \caption{\sc{LoneDivider}}\label{3/2top:alg:lone_divider}
\end{algorithm}

In order to state the algorithm, we require the following two definitions:

\begin{definition}
    Given a set of agents $U$ and a partition of (a subset of) the goods $B=(B_i)_{i\in U}$, as well as a threshold vector $t=(t_i)_{i\in U}$, the \emph{threshold graph $T_{B,t}$} is a bipartite graph with nodes corresponding to agents in $U$ on one side and nodes corresponding to bags in $B$ on the other side, with an edge between an agent $i$ and a bag $B_j$ iff $v_i(B_j)\geq t_i$.
\end{definition}

\begin{definition}
    Given a threshold graph $T_{B,t}$, a matching $M$ in $G$ is called \emph{envy-free}, if for every matched bag $B_j\in B$ and unmatched agent $i$, we have $v_i(B_j)<t_i$, i.e. there is no edge from $i$ to $B_j$.
\end{definition}

The algorithm starts by assigning an arbitrary agent $i$ the role of ``lone divider''. They are tasked to divide the goods into bags that $i$ values at least 1, while disregarding other agents' valuations. Next, we compute an envy-free matching. This serves two purposes: it maintains envy-freeness from unmatched agents to the matched ones, while also ensuring that leftover goods still have sufficient value to satisfy the MMS guarantee for the remaining agents. Now we shrink some bags in order to eliminate envy between matched agents. Lastly, we give agents who were assigned a bundle in a previous iteration the option to swap for a bag we have just created. This ensures that envy cannot be created between agents who were assigned in different iterations. If this was not necessary, we assign each agent the bag they were matched to, and iterate this process until every agent receives a bag.

Algorithm \ref{3/2top:alg:lone_divider} relies on finding envy-free matchings in the threshold graph with threshold 1 for all agents, since every agent is meant to receive bag of value at least 1. 

\begin{restatable}[\cite{AkramiRathi2025simultaneous}]{lemma}{lemEFMatching}\label{3/2top:lem:EFMatching}
    The threshold graph $T_{C',\bone}$ from Algorithm \ref{3/2top:alg:lone_divider} always contains a nonempty envy-free matching.
\end{restatable}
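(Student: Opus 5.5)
The plan is to use the standard Hall-type argument for envy-free matchings. It rests on a single structural fact: at the moment Line \ref{ln:lone_divider:ef_matching} is reached, the lone divider $i$ is adjacent in $T_{C',\bone}$ to every bag $C'_j$.

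First, I would check that every $C'_j$ is well defined and that $i$ is adjacent to it. The lone divider produced $C_j$ with $v_i(C_j)\ge 1$ and $i\in U$, so some agent in $U$ values $C_j$ at least $1$. Hence a minimal subset $C'_j$ with this property (containing its top-$n$ good) exists. I would read the condition in Line \ref{ln:lone_divider:shrink} as $v_{i'}(C'_j)\ge 1$ for some $i'\in U$, so each $C'_j$ has at least one neighbour in $U$. The divider $i$ need not remain adjacent to $C'_j$, however, so the argument cannot rely on $i$ alone. What it can rely on is that every bag has degree at least one and that there are exactly $|U|$ bags and $|U|$ agents.

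Next I would invoke the general lemma: in a bipartite graph with sides of equal size $|U|$ in which every bag vertex has positive degree, a nonempty envy-free matching exists. My first attempt at a proof goes through Hall's theorem and deficiency. Take a maximum matching and let $Z$ be the set of bags reachable from unmatched agents by alternating paths. By K\H{o}nig-type reasoning, the matching restricted to the bags outside $Z$, together with their partners, is envy-free: an unmatched agent adjacent to a matched bag outside $Z$ would put that bag in $Z$. So it remains to show that some matched bag lies outside $Z$.

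Alternatively, I would use the cleaner known criterion that if $|N_T(X)|\ge |X|$ fails in a suitable way, an envy-free matching exists whenever the agent side is not larger than the set of bags with nonempty neighbourhood. Since every bag has a neighbour and the counts are equal, the ``surplus on the bag side'' condition of Aigner-Horev--Segal-Halevi holds weakly, and I would cite or reprove the version that gives a nonempty envy-free matching when $|U|\ge 1$ and $N_T(\text{bags})\neq\emptyset$.

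The main obstacle is ruling out the degenerate case where the maximum matching is perfect on one side but every matched bag is reachable from an unmatched agent. I would handle this by induction on $|U|$. If a perfect matching exists, it is trivially envy-free (there are no unmatched agents) and nonempty. Otherwise, Hall's condition fails on the bag side: there is a set $Y$ of bags with $|N(Y)|<|Y|$. I would take a maximal such $Y$ (maximal deficiency) and match inside the complement, where Hall's condition holds. Agents outside $N(Y)$ are not adjacent to $Y$, which gives the envy-freeness. Nonemptiness follows because $|N(Y)|<|Y|\le |U|$ leaves some agent outside $N(Y)$. That agent has a neighbour among the bags, which can only lie outside $Y$, so the complement side is nonempty and matchable.
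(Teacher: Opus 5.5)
Your decisive argument, the last paragraph meant to handle the ``main obstacle'', fails in both of its steps.

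\textbf{Envy-freeness.} Take a maximum-deficiency set $Y$ of bags and match $C'\setminus Y$ into $U\setminus N(Y)$. Every agent of $N(Y)$ is then left unmatched, and nothing prevents such an agent from being adjacent to a matched bag outside $Y$. The fact you invoke, that agents outside $N(Y)$ are not adjacent to $Y$, concerns the wrong pair of sets, since the matched bags are not in $Y$. Concretely, take bags $b_1,b_2,b_3$, agents $a_1,a_2,a_3$ and edges $a_1b_1,a_1b_2,a_1b_3,a_2b_3,a_3b_3$. Every bag has a neighbour, and $Y=\{b_1,b_2\}$ is the unique maximum-deficiency set. Any matching of $b_3$ into $\{a_2,a_3\}$ leaves $a_1$ unmatched but adjacent to $b_3$.

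\textbf{Nonemptiness.} You assert that the agent outside $N(Y)$ has a neighbour, but only bags are guaranteed one. Nothing in the shrink step (Line~\ref{ln:lone_divider:shrink}) prevents an agent of $U$ from valuing every $C'_j$ below $1$, and your own general lemma allows isolated agents. With bags $b_1,b_2$, agents $a_1,a_2$ and edges $a_1b_1,a_1b_2$, the maximum-deficiency set is all bags and the complement is empty. Also, your opening claim that the divider is adjacent to every $C'_j$ contradicts your later remark, and no induction is actually carried out.

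\textbf{What is missing.} With a Hall violator on the \emph{bag} side you must match \emph{inside} it, which is what the paper's proof does. Take $X$ inclusion-minimal with $|N(X)|<|X|$. Since every single bag has a neighbour, $|X|\ge 2$. For $x\in X$, the set $Y=X\setminus\{x\}$ satisfies Hall's condition, so $|N(X)|\ge|N(Y)|\ge|Y|=|X|-1\ge|N(X)|$. This gives a perfect matching between $Y$ and $N(Y)$. It is nonempty, and it is envy-free because every agent adjacent to a bag of $Y$ lies in $N(Y)$ and is matched.

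Matching the complement is correct only when the deficient set is on the \emph{agent} side, as in Aigner-Horev--Segal-Halevi. Their hypothesis is $|N_T(U)|\ge|U|\ge 1$, which holds here because every bag has a neighbour. It is not the weaker condition you end up stating, that some edge exists.

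Your first, alternating-path route can also be finished directly. If the maximum matching is not perfect, some bag $b$ is unmatched. Its neighbour $a$ must be matched (otherwise augment), say to $z$. If $z\in Z$, then $a$ and hence $b$ would be reachable, giving an augmenting path. So $z$ is a matched bag outside $Z$, which is exactly the step you left open.
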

For completeness, we provide the proof of Lemma \ref{3/2top:lem:EFMatching} in Appendix \ref{app4}.

The main thing left to show is that Line \ref{ln:lone_divider:lone_divider} can always be executed, i.e. any given agent $i$ can be a lone divider, as well as that the algorithm terminates. 


\begin{lemma}\label{3/2top:lem:lone_divider}
    In line \ref{ln:lone_divider:lone_divider} of Algorithm \ref{3/2top:alg:lone_divider}, any agent $i \in U$ can fulfill the role of lone divider, i.e. can divide the unallocated goods $L$ into $|U|$ bundles of value at least 1, each of which contains exactly one of the top $n$ goods.
\end{lemma}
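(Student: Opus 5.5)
We prove the statement by establishing an invariant that survives every iteration of the while loop. Fix an agent $i\in U$ and write $u=|U|$. Let $T\subseteq M$ be the top $n$ goods. Since every bag ever created contains exactly one top-$n$ good, $L$ contains exactly $u$ top-$n$ goods.

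The invariant, checked at every execution of Line~\ref{ln:lone_divider:while}, is: for every $i\in U$, the set $L$ admits a partition into $u$ bundles, each of value at least $1$ to $i$ and each containing exactly one good of $T\cap L$.

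\textbf{Base case ($u=n$, $L=M$).} The start of Lemma~\ref{3/2ordered:lem:reduction} needs orderedness, so we redo it using only the top-$n$ structure. Take a $\MMS_i^{d}$ partition $P$ with $d=\ceil{3n/2}$; all its bags have value at least $1$. Call a bag \emph{rich} if it contains at least one good of $T$ and \emph{poor} otherwise. Let $r$ be the number of rich bags and $p=d-r$ the number of poor bags.

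Each poor bag consists of non-top goods. Its value is at least $1$, and every good in it is worth at most $v_i(t)$ for every $t\in T$. The plan is as follows.
\begin{enumerate}
\item Split each rich bag containing several top goods by keeping one top good in it and moving each extra top good $t$ out as a singleton. Then $\{t\}$ has value $v_i(t)\ge$ the value of any single non-top good.
\item A standard pairing argument closes the gap: pair each extra top good with a poor bag, or merge poor bags pairwise into rich bags. Since $d\ge 3n/2$, there are enough poor bags ($p\ge d-n\ge n/2$) to absorb the at most $n-r$ "missing" rich bags. For each top good not yet alone with value $\ge 1$, we attach one poor bag. If $v_i(t)\ge1$, the singleton $\{t\}$ already suffices.
\item The counting must show that the number of top goods lacking a poor partner is at most the available supply. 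Bags that get merged lose nothing, and leftover poor bags are dumped into arbitrary rich bags.
\end{enumerate}

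For a cleaner route we mimic Lemma~\ref{3/2ordered:lem:reduction}. Instead of removing pairs $\{k,3n-k+1\}$, we repeatedly remove a bag consisting of one top good plus one poor bag, or we merge two poor bags with a top good. Each step lowers $d$ by one while consuming one top good or two poor bags. After $d-n$ reductions we have $n$ bags and each contains exactly one top good. This is the main obstacle. The careful case analysis is needed when rich bags hold multiple top goods: splitting them creates singletons $\{t\}$ whose value may be below $1$, and these must be compensated by poor bags.

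\textbf{Maintenance.} There are two kinds of steps.

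First, in the assignment step (Line~\ref{ln:lone_divider:assign}), the matched bags $C'_j$ are removed. They are subsets of the divider's bags, shrunk to be minimal, and the matching is envy-free. For an unmatched $i$, every matched bag $C'_j$ has $v_i(C'_j)<1$. We then take $i$'s invariant partition $Q$ of $L$ and, by a Hall-type argument, match the removed bags to bundles of $Q$ sharing their top goods. Removing a bundle of value $<1$ to $i$ from $L$ and merging the residues of $Q$ keeps $u'$ bundles, each with one top good and value $\ge 1$. Concretely, the top good of each removed bag frees one bundle of $Q$, whose non-top remainder is redistributed.

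Second, in the swap step (Line~\ref{ln:lone_divider:swap}), $L$ gains $B_{a^*}$ and loses $C''_j$. Both contain exactly one top good. Because $C''_j\subseteq C'_j$ is minimal, no agent in $U$ values $C''_j\setminus\{g\}$ at $\ge1$. The same exchange argument then restores the invariant: replace the top good in $i$'s bundle and add $B_{a^*}$'s top good, so values only drop by amounts that the minimality conditions bound.

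Termination follows from a potential argument. Each swap strictly increases the value of $a^*$'s bundle, and there are finitely many bundles, so swaps cannot go on forever. Each matching step decreases $|U|$.
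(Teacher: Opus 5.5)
Your maintenance step does not go through, and this is a genuine gap. The invariant you carry is that $L$ has a partition into $|U|$ bundles, each containing one top good and worth at least $1$ to $i$. This invariant is not preserved when you remove bags that each contain one top good and are worth less than $1$ to $i$. Goods are indivisible, so the non-top remainder of the freed bundle need not be able to cover the deficits of the bundles that lost goods.

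For instance, let $L$ consist of top goods $t_1,t_2,t_3$ and non-top goods $w,a_2,a_3$, with $v_i(t_1)=0.7$, $v_i(t_2)=v_i(t_3)=0.9$, $v_i(w)=0.3$ and $v_i(a_2)=v_i(a_3)=0.1$. Then $(\{t_1,w\},\{t_2,a_2\},\{t_3,a_3\})$ satisfies your invariant, and the matched bag $\{t_1,a_2,a_3\}$ is worth $0.9<1$ to $i$. Yet the residue $\{t_2,t_3,w\}$, although worth $2.1>2$, cannot be split into two bundles each containing one of $t_2,t_3$ and worth at least $1$.

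In the swap step your hypotheses are weaker still. Nothing prevents $v_i(C''_j)=0.9$ and $v_i(B_{a^*})=0.1$. Starting from $v_i(L)=|U|$, the value of $L$ then drops below $|U|$, and no partition of the required kind exists. What excludes such configurations in the actual run is only the global fact $\mu_i^{\lceil 3n/2\rceil}(M)\ge 1$, and your invariant forgets this after the base case.

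The base case itself is left as a sketch, and the reason you give for redoing it is mistaken. Lemmas~\ref{3/2ordered:lem:reduction} and~\ref{3/2ordered:lem:biggoods} concern a single agent, so you may index goods by non-increasing $v_i$. The top-$n$ property then makes the first $n$ goods exactly $T$.

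The missing idea is to carry only a robust invariant and rebuild $i$'s partition from scratch in every iteration, as the paper does. The invariant is that every allocated bag contains exactly one top good and is worth less than $1$ to every agent in $U$. This holds by envy-freeness of the matching, and because $C''_j$ is a proper subset of the minimal $C'_j$. Swaps do not change the number $k$ of allocated bags.

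Now index goods by non-increasing $v_i$ and run Algorithm~\ref{3/2ordered:alg:main} on $L$ with $n-k$ copies of $i$. Each resulting bag's most valuable good $g'_j$ is one of the $n-k$ top goods left in $L$. Hence its second most valuable good $g_j$ lies outside $[n]$, and $\sum_j v_i(g_j)\le v_i([n+1,2n])$. Suppose a copy were left unserved. Let $\bar L=L\setminus\{g_j\}_j$ and $G=\{g\in M: v_i(g)>1\}$. Then
\begin{align*}
v_i(\bar L)&\ge v_i(M)-v_i(M\setminus L)-v_i([n+1,2n])\ge n-k+\sum_{g\in G}(v_i(g)-1)\\
&\ge\sum_j\max\{1,v_i(g'_j)\}>v_i(\bar L),
\end{align*}
using $v_i(M\setminus L)\le k$, Lemma~\ref{3/2ordered:lem:biggoods} and Lemma~\ref{3/2ordered:lem:leftover}. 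The bag-filling argument is what handles indivisibility, and the slack between $\lceil 3n/2\rceil$ and $n$ pays exactly for the discarded goods $g_j$. Your termination paragraph belongs to Lemma~\ref{3/2top:lem:LDterminates}, not to this statement.
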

\begin{proof}
    Let $S$ be the set of agents who are already satisfied at the start of the iteration where $i$ is meant to become a lone divider, let $k=|S|$, and let $(B_j)_{j\in S}$ be the bags of these agents. Also, without loss of generality, we can order the goods in $M$ by non-increasing $v_i$. In particular, the top $n$ goods are now exactly $[n]$. First observe that $v_i(B_j)<1$ for all $j\in S$: all matchings up to this point have been envy-free, so all bags allocated in Line \ref{ln:lone_divider:assign} have value less than 1. 
    
    Bags assigned in Line \ref{ln:lone_divider:swap} are proper subsets of $C'_j$, as some agent $a$ strongly envied $C'_j$. Now since $C'_j$ was minimal, $v_i(C_j'')<1$. We also know that exactly $n-k$ of the top $n$ goods are still in $L$, since every bag created throughout the algorithm contains exactly one of them. We now find a partition of $L$ by using Algorithm \ref{3/2ordered:alg:main}: we apply the algorithm to $L$ and $n-k$ copies of agent $i$. Since all agents are identical, this is now an ordered instance with respect to the order we chose. Note in particular that the most valuable good $g_j'$ in each bag $B_j, j\notin S$ is from $[n]$, and since there are exactly $n-k$ many of those left, the second most valuable good $g_j$ in each bag is from $M\setminus[n]$ (if it exists). Let $\Bar L = L\setminus\{g_j\mid j\notin S\}$, i.e. all leftover items, except the second most valuable in each bag. 
    Define $G=\{g\in L\mid v_i(g)>1\}$. We now assume that one of the bags at the end of the algorithm has value $<1$ to $i$. This is equivalent to one of the copies of $i$ not receiving a bag. We do however need to modify the proof of Theorem \ref{3/2ordered:lem:MMS} slightly to account for the missing agents and goods:
    \begin{align*}
        v_i(\Bar L)&=v_i(M)-v_i(M\setminus L)-\sum_{j\notin S} v_i(g_j)\\
        &\geq v_i(M)-v_i(M\setminus L)-v_i([n+1,2n])\\
        &>v_i(M)-k-v_i([n+1,2n])\\
        &\geq n-k+\sum_{g\in G}(v_i(g)-1)&\hidewidth\text{(Lemma \ref{3/2ordered:lem:biggoods})}\\
        &\geq \sum_{j\notin S} \max\{1,v_i(g_j')\} > v_i(\Bar L). &\hidewidth\text{(Lemma \ref{3/2ordered:lem:leftover})}
    \end{align*}
    This is, again, a contradiction. Hence, $i$ can divide the goods in the way that line \ref{ln:lone_divider:lone_divider} of Algorithm \ref{3/2top:alg:lone_divider} requires.
\end{proof}

The proof of the following three lemmas are similar to the ones provided in \cite{AkramiRathi2025simultaneous}.

\begin{lemma}\label{3/2top:lem:LDterminates}
    Algorithm \ref{3/2top:alg:lone_divider} terminates.
\end{lemma}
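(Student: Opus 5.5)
We prove termination with a potential argument. The outer while loop (Line \ref{ln:lone_divider:while}) is entered in one of two ways: either an envy-free matching is computed and agents are assigned in Line \ref{ln:lone_divider:assign}, or a swap happens in Line \ref{ln:lone_divider:swap} followed by a restart. By Lemma \ref{3/2top:lem:lone_divider}, Line \ref{ln:lone_divider:lone_divider} can always be executed. By Lemma \ref{3/2top:lem:EFMatching}, the matching in Line \ref{ln:lone_divider:ef_matching} is nonempty. Hence every iteration that reaches the matching step strictly decreases $|U|$, and there can be at most $n$ such iterations.

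It therefore suffices to show that between two consecutive decreases of $|U|$ only finitely many swaps occur. For this, consider the vector of values $(v_a(B_a))_{a\in N\setminus U}$. During a swap, the chosen agent $a^*$ replaces $B_{a^*}$ by $C_j''$ with $v_{a^*}(C_j'')>v_{a^*}(B_{a^*})$. All other satisfied agents keep their bundles, since only $B_{a^*}$ is returned to $L$. So along a phase with fixed $U$, each satisfied agent's value weakly increases and one agent's value strictly increases at every swap.

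Each bundle is a subset of the finite set $M$, so each agent's value ranges over the finite set $\{v_a(X) : X\subseteq M\}$. The sum $\sum_{a\in N\setminus U}v_a(B_a)$ therefore takes only finitely many values. Since it strictly increases with every swap, the number of swaps while $|U|$ is fixed is bounded, for instance by $\prod_a 2^m$ or simply by finiteness. Between swaps the algorithm does only finite work: finitely many minimal-subset computations and a finite scan over $j\in U$.

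Two points need care. First, the swap line must be well defined, i.e., a set $C_j''$ with the stated properties exists. This follows from the existence of a most envious agent noted after the definition. We take $C_j''$ to be a minimal subset of $C_j'$, containing its top-$n$ good, that is envied by some satisfied agent. Note that $a^*$ envies a proper subset of $C_j$, and the strong envy is checked against $C_j$ while the shrinking is done within $C_j'$. I would argue that we may pass to a minimal envied subset, retaining the top-$n$ good, because removing a non-top good keeps that good. If the only envied subsets lacked the top good, the agent would still envy that subset together with the top good, so minimality preserves the invariant.

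Second, I expect the main obstacle to be that values can also decrease for satisfied agents. This would happen if bundles were recomputed after a restart. They are not: a restart only recomputes $C_j$ from $L$ and never touches the $B_a$ with $a\neq a^*$. So the monotonicity claim holds, and termination follows.
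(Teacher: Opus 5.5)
Your proof is correct and follows the same route as the paper. The paper also argues that a swap strictly raises $\sum_{i\in N}v_i(B_i)$ while leaving all other bundles untouched, and that a matching step strictly enlarges $N\setminus U$ by Lemma~\ref{3/2top:lem:EFMatching}; your observation that this sum ranges over the finite set of subset values is exactly the finiteness step the paper leaves implicit. The paragraph on well-definedness of $C_j''$ is not needed for termination, since the paper takes the swap step as given, and as written it does not actually resolve the $C_j$ versus $C_j'$ mismatch you point out, so you can simply drop it.
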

\begin{proof}
    If the condition in Line \ref{ln:lone_divider:if} is true, $a^*$ receives a bag worth more than before, while everything else stays the same, so the total value of allocated goods increases. If it is false, the algorithm allocates at least one bag to an unallocated agent by Lemma \ref{3/2top:lem:EFMatching}.

    Now $(\sum_{i\in N}v_i(B_i), |N\setminus U|)$ increases in every iteration and the algorithm terminates.
\end{proof}

\begin{lemma}\label{3/2top:lem:LDMMS}
    Algorithm \ref{3/2top:alg:lone_divider} returns a partial allocation that is 1-out-of-$\ceil{3n/2}$ MMS.
\end{lemma}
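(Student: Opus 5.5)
The plan is to show that, once Algorithm~\ref{3/2top:alg:lone_divider} terminates (Lemma~\ref{3/2top:lem:LDterminates}), every agent $a\in N$ holds a bag $B_a$ with $v_a(B_a)\geq 1$. Recall that we normalized so that $\MMS^{\ceil{3n/2}}_a=1$ for every agent. Since $U=\emptyset$ at termination, every agent has been assigned a bag. Moreover, the bags are pairwise disjoint, because each assignment removes the bag's goods from $L$ and each swap returns the old bag to $L$. So the output is a partial allocation, and it only remains to verify the value guarantee. I would prove this through an invariant: at every point of the execution, every agent $a\in N\setminus U$ satisfies $v_a(B_a)\geq 1$.

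The invariant is checked at the two places where $B_a$ changes for some $a\in N\setminus U$.

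\textbf{Line~\ref{ln:lone_divider:assign}.} An agent $i$ receives $C'_j$ only if it is matched to $C'_j$ in the threshold graph $T_{C',\bone}$. By definition of $T_{C',\bone}$ this gives $v_i(C'_j)\geq 1$. This step needs that the lone divider in Line~\ref{ln:lone_divider:lone_divider} could actually produce the bags $C_j$, which is exactly Lemma~\ref{3/2top:lem:lone_divider}. It also needs that the matching exists, which is Lemma~\ref{3/2top:lem:EFMatching}.

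\textbf{Line~\ref{ln:lone_divider:swap}.} Here $a^*$ replaces $B_{a^*}$ by $C''_j$, and $C''_j$ is chosen so that $v_{a^*}(C''_j)>v_{a^*}(B_{a^*})$. By the invariant, before the swap $v_{a^*}(B_{a^*})\geq 1$. Hence after the swap $v_{a^*}(B_{a^*})>1$. Every other allocated agent keeps its bag, so the invariant is preserved.

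The one subtle point is in the swap step: we need that a set $C''_j$ with the required properties exists whenever the \texttt{if} condition holds. Take $a^*$ to be a most envious agent of $C'_j$ with witness set $B'\subsetneq C'_j$, so that $v_{a^*}(B')>v_{a^*}(B_{a^*})$ and no agent strongly envies $B'$. We may add the top-$n$ good of $C'_j$ to $B'$; this only raises $v_{a^*}$, though it could in principle create strong envy. Starting from this set, we shrink to an inclusion-minimal subset that still contains the top-$n$ good and is still valued by $a^*$ above $v_{a^*}(B_{a^*})$. This subset can be taken so that nobody strongly envies it, by the standard minimality argument for a most envious agent. Even so, the MMS invariant relies only on the strict inequality $v_{a^*}(C''_j)>v_{a^*}(B_{a^*})$, so the existence question belongs to well-definedness rather than to this lemma.

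Combining the two cases with termination, every agent ends with a bag worth at least $1=\MMS_a^{\ceil{3n/2}}$. Hence the output is a partial allocation that is 1-out-of-$\ceil{3n/2}$ MMS.
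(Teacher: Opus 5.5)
Your proof is correct and is essentially the paper's argument: bags are only handed out through the threshold-$1$ matching in Line~\ref{ln:lone_divider:ef_matching}, and the swap in Line~\ref{ln:lone_divider:swap} only increases $v_{a^*}(B_{a^*})$; you package these two observations as an invariant and add termination. Your remarks on disjointness and on the existence of $C''_j$ go beyond what the paper addresses, and you are right to treat the latter as a well-definedness issue rather than part of this lemma. As written, though, that aside is loose: with $a^*$ fixed first, a minimal set need not be free of strong envy from other agents. The clean route is to take a minimal subset containing the top-$n$ good that is envied by \emph{some} allocated agent, and let that agent be $a^*$. This works because the other goods in the bag are not top-$n$ goods.
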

\begin{proof}
    Agents are only assigned bags worth at least 1, since that is the threshold value in Line \ref{ln:lone_divider:ef_matching}. In the swap step of Line \ref{ln:lone_divider:swap} the value of the bag assigned to $a^*$ only increases.
\end{proof}

\begin{lemma}\label{3/2top:lem:LDEFX}
    Algorithm \ref{3/2top:alg:lone_divider} returns a partial allocation that is EFX.
\end{lemma}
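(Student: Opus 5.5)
We prove the statement by maintaining the invariant that, at the start of every iteration of the while loop in Line \ref{ln:lone_divider:while}, the current partial allocation $(B_a)_{a\in N\setminus U}$ is EFX. Since the algorithm terminates by Lemma \ref{3/2top:lem:LDterminates}, the returned allocation is then EFX. The invariant holds trivially at the start, because all bags are empty. An iteration changes the allocation in one of two ways: a swap in Line \ref{ln:lone_divider:swap}, or an assignment through the envy-free matching in Line \ref{ln:lone_divider:assign}. We treat these separately.

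\textbf{Swap step.} Agent $a^*$ replaces $B_{a^*}$ by $C_j''$, where $v_{a^*}(C_j'')>v_{a^*}(B_{a^*})$, and $C_j''$ is chosen so that no agent strongly envies it. We check three kinds of pairs.
\begin{itemize}
\item Other agents toward $C_j''$: there is no strong envy, by the choice of $C_j''$.
\item Agent $a^*$ toward others: $a^*$'s own value strictly increased while the other bundles are unchanged, so EFX from $a^*$ is preserved.
\item All other pairs: nothing changed, and the old $B_{a^*}$ returns to $L$ and is no longer anyone's bundle.
\end{itemize}
So the swap preserves EFX.

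\textbf{Assignment step.} This is reached only when no already-satisfied agent $a\in N\setminus U$ strongly envies any $C_j$, and hence none strongly envies any $C_j'\subseteq C_j$. Therefore EFX holds from old agents toward the new bags. Old agents' bundles and values are unchanged, so EFX among old agents persists.

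For a newly matched agent $i$ looking at an old agent's bundle $B_a$, we need $v_i(B_a)<1\le v_i(C'_{j(i)})$; then there is not even envy. We establish $v_i(B_a)<1$ as in the first part of the proof of Lemma \ref{3/2top:lem:lone_divider}, by looking at how $B_a$ entered the allocation.
\begin{itemize}
\item If $B_a$ was assigned through an earlier matching while $i$ was unmatched, the envy-free matching property gives $v_i(B_a)<1$.
\item If $B_a$ was a swapped-in bag $C''$, it is a proper subset of some $C'$ that was minimal with respect to agents in $U$. Since $i$ was in $U$ at that time, $v_i(C'')<1$.
\end{itemize}
I expect this second case to be the main subtlety. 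The minimality in Line \ref{ln:lone_divider:shrink} must be read as: no proper subset (containing the top-$n$ good) is worth $\ge 1$ to any agent in $U$. Under that reading, a proper subset is worth less than $1$ to every agent still in $U$, including $i$.

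Finally we check pairs among agents matched in the same round. Each new bag is $C_j'$, which is inclusion-minimal among subsets containing its top-$n$ good that are worth at least $1$ to some agent of $U$. Take an agent $i$ holding $C_{j(i)}'$ with $v_i(C_{j(i)}')\ge1$, and consider another new bag $C_\ell'$ and any $g\in C_\ell'$.
\begin{itemize}
\item If $g$ is not the top-$n$ good of $C_\ell'$, then $C_\ell'\setminus\{g\}$ still contains that top-$n$ good. By minimality, $v_i(C_\ell'\setminus\{g\})<1\le v_i(C'_{j(i)})$.
\item If $g$ is the top-$n$ good, take instead a non-top good $h\in C_\ell'$, if one exists. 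Then $v_i(C_\ell'\setminus\{g\})\le v_i(C_\ell'\setminus\{h\})$, because in a top-$n$ instance every top-$n$ good is worth at least as much to $i$ as any non-top good. This reduces to the first case.
\item If $C_\ell'$ is a singleton, removing a good leaves the empty set, so there is no strong envy.
\end{itemize}
The same argument also covers new agents looking at old swapped bags, giving a second route to that case. Combining the cases, EFX is preserved in every iteration, which proves the statement.
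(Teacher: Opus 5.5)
Your proof is correct and follows essentially the same route as the paper: induction over the swap and assignment steps, with old-to-new pairs handled by the failed check in Line~\ref{ln:lone_divider:if}, new-to-new pairs by minimality of the $C_j'$, and new-to-old pairs via $v_i(B_a)<1\le v_i(C'_{j(i)})$. The one difference in bookkeeping is that the paper carries $v_i(B_a)<1$ for unassigned $i$ as a second part of its inductive invariant, whereas you re-derive it from how $B_a$ entered the allocation. Your explicit treatment of removing the top-$n$ good itself (using the top-$n$ property, plus the singleton case) fills a small step that the paper leaves implicit when it simply appeals to minimality.
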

\begin{proof}
    We prove this claim via induction, i.e. that at any given point in the algorithm, the allocation restricted to $N\setminus U$ is EFX, and for $i\in U$, $v_i(B_j)<1$ for all assigned bags $B_j$. At the beginning, the allocation is empty, so the claim holds. The two steps where this can change are the swap in Line \ref{ln:lone_divider:swap} and the assignment in Line \ref{ln:lone_divider:assign}.\\

    In the swap operation, by choice of $C_j''$, none of the assigned agents may now strongly envy $a^*$. The agent $a^*$ can also not envy another agent, since their value only increased. Also, since $C_j'$ was already minimal and $C_j''\subsetneq C_j'$, we have $v_i(C_j'')<1$ for all $i\in U$.\\

    In the assign operation we need to check envy from three groups towards the newly assigned bags: The agents who are still unassigned value each new bag less than 1, since the matching was envy-free. The newly assigned agents cannot strongly envy one another, since we chose the $C_j'$ in line \ref{ln:lone_divider:shrink} minimal; any strong envy towards a bag $C$ implies that $C$ could have been shrunken further. And finally, no already assigned agents may strongly envy a new bag, since the condition in Line \ref{ln:lone_divider:if} was false. We also check envy of the newly assigned agents $i$ towards already satisfied agents $i'$. By induction, $v_i(B_{i'})<1\leq v_i(B_i)$, so this case also does not produce any envy.

    Hence, the algorithm always maintains a partial allocation that in EFX.
\end{proof}

\theoremTwo*

\begin{proof} 
    Lemmas \ref{3/2top:lem:LDterminates}, \ref{3/2top:lem:LDMMS} and \ref{3/2top:lem:LDEFX} together imply that Algorithm \ref{3/2top:alg:lone_divider} computes a partial allocation that is both 1-out-of-$\ceil{3n/2}$ MMS and EFX.
    
    To see the second assertion, consider the envy cycle elimination procedure from \cite{lipton2004approximately} again, which we discussed in Lemma \ref{3/2ordered:lem:envy_cycles}. Since the instance is not ordered, we can only use the weaker formulation. So applying this procedure to the result of Algorithm \ref{3/2top:alg:lone_divider}, we get a full allocation that is both 1-out-of-$\ceil{3n/2}$ MMS and EF1.
\end{proof}

\section{\boldmath $1$-out-of-$4\ceil{n/3}$ MMS Together with EF1 for Ordered Instances}\label{sec5}
In this section we present an algorithm that computes an allocation that is both 1-out-of-$4\ceil{n/3}$ MMS and EF1 for ordered instances (Algorithm \ref{4/3ordered:alg:main}). It is conceptually similar to the one presented in \cite{MMS-l-out-of-d}, with significant adjustments that make it compatible with envy based notions. See Appendix \ref{app5} for the missing proofs of this section.

In this section, we assume $n$ to be a multiple of 3. We can achieve this by copying an arbitrary agent 1 or 2 times. This does not change $4\ceil{n/3}$, and does not disrupt the ordering condition. Hence, a complete 1-out-of-$4\ceil{n/3}$ MMS and EF1 allocation for this modified instance is a partial allocation with the same guarantees for the original instance. If we wish to obtain a complete allocation of the goods, we need to decide whom to allocate the goods that were previously allocated to the dummy agents. We do this in Lemma \ref{4/3ordered:lem:EF}.

\subsection{Working with Normalization}
Many state of the art algorithms for computing share based allocations use normalization in their analysis:
\begin{definition}\label{def:normal}
    An instance $(M,N,(v_i)_{i\in N})$ of fair division is called \emph{$d$-normalized}, if in a $\MMS^d$ partition for any agent $i$, all bundles have value exactly 1.
\end{definition}
This can be assumed without loss of generality if we only want share based fairness guarantees.
\begin{restatable}[\cite{simple}]{lemma}{lemNormalShare}\label{4/3ordered:lem:normal_share}
    If a share-based fairness result holds for $d$-normalized instances, it holds for all instances with $\MMS^d_i>0$ for all agents $i$.
\end{restatable}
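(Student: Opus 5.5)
The plan is to reduce an arbitrary instance with $\MMS_i^d>0$ for all $i$ to a $d$-normalized one, in which every agent's value for each good weakly decreases while the $1$-out-of-$d$ MMS values stay the same. A share-based guarantee for the normalized instance then transfers back, because every bundle is worth at least as much under the original valuations. As in Section~\ref{sec2}, first rescale so that $\MMS_i^d=1$ for every agent; this is harmless because share-based guarantees of the form $v_i(A_i)\ge \MMS_i^d$ are scale-invariant.

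\textbf{Step 1: define the normalized valuation.} For each agent $i$, fix a $\MMS^d$ partition $P^i=(P^i_1,\dots,P^i_d)$; each bundle has $v_i(P^i_j)\ge 1$. Inside each bundle $P^i_j$, lower the values of goods one at a time, never below $0$, until the bundle has value exactly $1$. This is possible because values vary continuously from $v_i(P^i_j)\ge1$ down to $0$. Call the result $v_i'$. Then $v_i'(g)\le v_i(g)$ for all $g$, and $P^i$ is a partition in which every bundle has $v_i'$-value exactly $1$.

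\textbf{Step 2: check that the MMS value is unchanged.} By construction $v_i'(M)=d$, so every $d$-partition has minimum bundle value at most $1$ under $v_i'$, and $P^i$ attains exactly $1$. Hence $\MMS_i^d$ under $v_i'$ equals $1$. For the same reason, every $\MMS^d$ partition under $v_i'$ has all bundles of value exactly $1$: the bundles sum to $d$ and each is at least $1$. So the new instance $I'$ is $d$-normalized in the sense of Definition~\ref{def:normal}.

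\textbf{Step 3: transfer the guarantee.} Apply the assumed result to $I'$ to obtain an allocation $A$ with $v_i'(A_i)\ge \MMS_i^d(I')=1$ (or $\ge \alpha$, for an approximate share-based notion). Then
\[v_i(A_i)\ge v_i'(A_i)\ge 1=\MMS_i^d(I).\]
This works for any share-based notion whose threshold is a fixed function of the agent's own MMS value.

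\textbf{Main obstacle.} The delicate part is making precise what counts as a ``share-based fairness result''. The transfer needs only two properties: the guarantee has the form $v_i(A_i)\ge f(\MMS_i^d)$, and $\MMS_i^d$ is preserved while values weakly decrease. A further subtlety is that the reduction can destroy structural properties such as orderedness or top-$n$ membership, because lowering values inside one bundle can reorder goods. The lemma as stated does not claim to preserve these, which is exactly why the paper later needs an order-preserving normalization (Lemma~\ref{4/3ordered:lem:normalization}).
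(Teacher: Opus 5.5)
Your proof is correct and follows the same route as the paper. Both fix an $\MMS^d$ partition, shrink the values inside each bundle down to the MMS value, and transfer the guarantee because values only decrease. The paper does the shrinking by proportional scaling, $\bar v_i(g)=\MMS_i^d\, v_i(g)/v_i(P_j)$ for $g\in P_j$, while you lower values arbitrarily after first rescaling to $\MMS_i^d=1$. Your Step 2 also explicitly checks that $\MMS_i^d$ is preserved and that every $\MMS^d$ partition of the new instance has all bundles equal to $1$, a step the paper leaves implicit.
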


A common way of normalizing an instance is to do the following. Let $P=(P_1, \ldots, P_d)$ be a $\mu^d(M)$ partition for agent $i$. Then for all $g \in M$ and $j \in [d]$, if $g \in P_j$, we set $\bar{v}_i(g) = v_i(g)/v_i(P_j)$. 
If we also want to include an envy based guarantee, normalization can no longer be generated this way.
\begin{example}\label{example}
    We look at the following instance $I=(M,N,(v_i)_{i\in N})$ and a normalization $I'$: $M=[5],N=[3]$, $v_1=v_2\equiv 1$ and $v_3(g)=1$ for $g\in\{1,2,3,5\}, v_3(4)=2$. Now $(\{1,2\},\{3,4\},\{5\})$ is a $\MMS^3$ partition for agents 1 and 2, hence $\Normal v_1(g)=\Normal v_2(g)=1/2$ for $g\in\{1,\dots,4\}$, and  $\Normal v_1(5)=1$. On the other hand, $(\{1,2\},\{3,5\},\{4\})$ is a $\MMS^3_3$ partition. Hence $\Normal v_3(g)=1/2$ for $g\in\{1,2,3,5\}$ and $\Normal v_3(4)=1$. The allocation $(\{5\},\{1,2,3\},\{4\})$ is therefore EFX in $I'$, but in $I$ agent 1 strongly envies agent 2.
\end{example}
In order to adapt algorithms that utilize normalized instances it is still instructive to look into methods for normalization. In particular, since we can not order the input instance ourselves, and instead assume it to be ordered, we have to ensure that the method we use to normalize the instance respects the given order; we can not, as is done in the usual analysis, normalize first, and then order the result. 

\begin{lemma}\label{4/3ordered:lem:normalization}
    Given an ordered instance $I=(M,N,(v_i)_{i\in N})$ of fair division, there is an instance $I'=(M,N,(\Normal v_i)_{i\in N})$ that is $d$-normalized and ordered with respect to the order of $I$, and has $\Normal v_i(g)\leq v_i(g)$ for all $g\in M$.
\end{lemma}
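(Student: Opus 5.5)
The plan is to construct $\Normal v_i$ separately for each agent $i$. Normalization is a per-agent condition, and orderedness with respect to the fixed order $1,\dots,m$ means only that each $\Normal v_i$ is non-increasing in the index. So fix $i$ and scale so that $\MMS_i^d=1$, assuming $\MMS_i^d>0$ as in Lemma~\ref{4/3ordered:lem:normal_share}. Take a $\MMS^d$ partition $P=(P_1,\dots,P_d)$ for $i$; every bundle has $v_i(P_j)\ge 1$. The aim is to lower item values, never raising any, until every bundle of a (possibly modified) partition sums to exactly $1$, while keeping the values non-increasing in the index.

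Once this is done, normalization is immediate. The total value becomes $\Normal v_i(M)=d$, so every $d$-partition has minimum bundle value at most the average, which is $1$. The final partition attains exactly $1$ in every bundle, so it is a $\MMS^d$ partition under $\Normal v_i$. Moreover, any $\MMS^d$ partition under $\Normal v_i$ has minimum $1$ and total $d$, so all of its bundles have value exactly $1$. This shows $d$-normalization.

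\textbf{Reduction step.} Process the bundles one at a time. Take a bundle $P_j$ with current excess $e=\Normal v_i(P_j)-1>0$. Let $h$ be the highest-indexed good of $P_j$ with positive current value, and let $c$ be the current value of good $h+1$ (or $c=0$ if $h=m$). Lowering $h$ to any value in $[c,\Normal v_i(h)]$ keeps the values non-increasing, since every later good has value at most $c$. If $\Normal v_i(h)-c\ge e$, lower $h$ by $e$ and the bundle is finished. Otherwise, lower $h$ to $c$; now $h$ ties with $h+1$.

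\textbf{Handling ties.} If $h+1$ lies in another bundle $P_{j'}$, exchange the memberships of $h$ and $h+1$ between $P_j$ and $P_{j'}$. The two goods have equal current value, so no bundle sum changes, and already-finished bundles stay at exactly $1$. Then continue with the new highest-indexed positive good of $P_j$. This process walks down the order, and in the worst case it drives goods of $P_j$ to $0$. Since the goods' values can always be lowered all the way to $0$, the excess is eventually removed, and the bundle cannot be pushed below $1$ because we stop exactly when the excess reaches $0$. Termination follows because each step either finishes the bundle or strictly increases the index of the good being lowered, or sets a good to $0$.

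\textbf{Main obstacle.} I expect the delicate part to be making this tie-and-swap bookkeeping rigorous. One must check that swapping equal-valued goods never breaks monotonicity or alters the sums of other bundles, and must handle tie classes of length more than two, where one moves the last element of the tie class into $P_j$. The properties $\Normal v_i\le v_i$ and order preservation hold by construction at every step.
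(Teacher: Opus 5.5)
Your proof is correct and follows the same shrink-and-swap idea as the paper. You start from an $\MMS^d$ partition and lower values inside each over-full bundle; whenever a lowered good ties with a good in another bundle, you exchange the two equal-valued goods, which preserves every other bundle sum and the monotone order. The only difference is the shrinking schedule: you lower the lowest-ranked positive good of $P_j$ one at a time (possibly to $0$), whereas the paper shrinks all of $P_j$ continuously and uniformly. Your schedule gives a cleaner termination argument, since each non-final step either sets a good to $0$ or moves the active index up by one, and your chain of adjacent swaps already handles longer tie classes, so the ``main obstacle'' you flag is not a real one.
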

\begin{proof}
    We construct a method of normalizing $I$ that respects its ordering. Let $P=(P_1,\dots,P_d)$ be an $\MMS^d$ partition for an agent $i$ and $P_j\in P$ be a bundle with $v_i(P_j)>1$. We continuously and uniformly shrink goods in $P_j$ for agent $i$ until either $\Normal v_i(P_j)=1$, in which case we stop and move on to the next bundle, or a good $g\in P_j$ has equal value to any good $h_k\notin P_j$. In that case, we swap $g$ with the good $h$ that has the highest index among the candidates for $h_k$. We then continue shrinking.

    This method terminates, since swaps only occur at times where goods have values that other goods already had. These are just the beginning valuations and any values that goods in already shrunken bags have. This can only happen finitely many times. It also computes a $d$-normalized instance, as in the end we have a $\MMS^d$ partition where every bundle has value 1. The ordering remains intact by construction. 
\end{proof}

\begin{remark}
    Since this method does not increase the value of goods, the argument from Lemma \ref{4/3ordered:lem:normal_share} still works: For any share based fairness guarantees we can without loss of generality assume that the instance has been normalized in this way.
\end{remark}

We have now established that we can assume normalization that respects the given ordering of an instance if we wish to attain an MMS guarantee of some sort, but not if we simultaneously wish for an envy-based guarantee. On the other hand, analysis of the algorithm in Section \ref{sec:4/3ordered:algorithm} requires normalization in its MMS guarantee. To circumvent this, we run Algorithm \ref{4/3ordered:alg:main} using the original valuations $v_i$. During the analysis of the MMS guarantee, we will then consider the normalized valuations $\Normal v_i$, as given by Lemma \ref{4/3ordered:lem:normalization}. Since normalizing in this way cannot increase valuations, if we can prove that every agent receives their $\MMS^d$ value with respect to $\Normal v$, they will also receive it with respect to $v$. On the other hand, whenever we consider envy throughout the algorithm, it will be with respect to the original valuations $v$, such that normalization can not interfere with our envy-freeness guarantee.

Some notable points to consider for this concept are, that all arguments in the analysis that call upon the ordering can do so before and after normalization, since Lemma \ref{4/3ordered:lem:normalization} guarantees that the order does not change. In particular, if the algorithm handles goods in a particular order, this property is available in the analysis. Furthermore, some bags $B$ might have $v_i(B)\geq1>\Normal v_i(B)$, so the algorithm considers $B$ to be fit for agent $i$ while the analysis does not. Hence we can never consider the valuations of satisfied agents during the MMS part of the analysis. We may only upper bound the value of assigned bags for 
unsatisfied agents, which works since normalization can only decrease values of goods. 

\subsection{The Algorithm}\label{sec:4/3ordered:algorithm}

\begin{algorithm}[tb]
    \DontPrintSemicolon

    \SetKwInOut{Input}{Input}
    \SetKwInOut{Output}{Output}
    \Input{An ordered instance $(M,N,(v_i)_{i\in N})$}
    \Output{A partial allocation that is 1-out-of-$4\ceil{n/3}$ MMS and EF1}
    \BlankLine
    \For(\tcp*[f]{Initialization}){$j$ from $1$ to $n$}{
        $B_j\gets\{j,2n-j+1\}$ \;
    }
    $g\gets2n+1$,\quad $L\gets\{B_1,\dots,B_n\}$, \quad $U\gets N$\;
    \While(\tcp*[f]{Bag-Filling}){$U\neq \emptyset$}{
        \uIf{$\exists i\in U, B\in L:v_i(B)\geq 1$}{
            $L\gets L\setminus\{B\}$,\quad $U\gets U\setminus\{i\}$\;
            $A_i\gets B$\;
        }\uElseIf{$\exists i\in N\setminus U, B\in L:v_i(B)>v_i(A_i)$\label{ln:4/3ordered:swap}}{
            $L\gets (L\cup\{A_i\})\setminus \{B\}$\;
            $A_i\gets B$\;
        }\Else{
            Let $B\in L$ arbitrary: $B\gets B\cup\{g\}$\;
            $g\gets g+1$\;
        }
    }
    \Return $(A_1,\dots ,A_n)$\;
    \caption{\sc{1-out-of-$4\ceil{n/3}$ MMS + EF1}}
    \label{4/3ordered:alg:main}
\end{algorithm}

Algorithm \ref{4/3ordered:alg:main} is a bag-filling procedure 
 and differs from the one presented in \cite{MMS-l-out-of-d} mainly in Line \ref{ln:4/3ordered:swap}. This is a swap step similar to the one seen in Section \ref{sec:3/2ordered}. We have seen there that this step does not disturb the MMS guarantees, if the analysis is sufficiently symmetrical, i.e. does not rely 
on the order of the bags, items and agents. This is not the case for the analysis presented in \cite{MMS-l-out-of-d}, as it uses that bags are filled one after the other, and by increasing index of the bags. This is a property that is not maintainable throughout the swap step. It nonetheless prompts us to consider that analysis, and adapt it in a way that renders this condition redundant. 

To show that the allocation produced by Algorithm \ref{4/3ordered:alg:main} is in fact 1-out-of-$4\ceil{n/3}$ MMS, we begin by assuming towards a contradiction that some agent $i$ does not receive a bag. This happens if the algorithm runs out of goods to fill bags before agent $i$ is assigned a bag. In particular, there exists an initial bag $B_j$ with $\Normal v_i(B_j)\leq v_i(B_j)<1$. Let $\ell^*$ be the lowest index such that $\Normal v_i(B_{\ell^*+1})<1$. We distinguish the cases $\Normal v_i(2n-\ell^*)\geq 1/3$ and $\Normal v_i(2n-\ell^*)<1/3$.

The first of these cases uses a partition of the bags into three groups. In \cite{MMS-l-out-of-d}, this partition is dependent on the algorithm filling the bags one after the other, and in increasing order of index. Since the swap step in Algorithm \ref{4/3ordered:alg:main} makes this infeasible, we resort to Definition \ref{4/3ordered:def:partition}. That these groups behave functionally identical becomes clear in the detailed proofs provided in Appendix \ref{app5}. 

\begin{definition}\label{4/3ordered:def:partition}
    Let $A^+=\{B_1,\dots,B_{\ell^*}\}$. Let $A^1$ be the set of all bags that receive exactly one item during bag filling before any bag receives two items. Let $A^2$ be the set of all other bags. 
\end{definition}

Notice in particular that $A^+\cap A^1=\emptyset$, since before an item would be added to $B_j\in A^+$, agent $i$ could claim $B_j$.

With minor adjustments to the analysis from \cite{MMS-l-out-of-d}, concerning the normalization and the aforementioned Definition \ref{4/3ordered:def:partition}, we achieve the following theorem:

\begin{restatable}{theorem}{PartOne}\label{4/3ordered:thm:part1}
    If Algorithm \ref{4/3ordered:alg:main} does not allocate a bag to some agent $i$, then $\Normal v_i(2n-\ell^*)<1/3$.
\end{restatable}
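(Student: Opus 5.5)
We argue by contradiction. Assume agent $i$ receives no bag and $\Normal v_i(2n-\ell^*)\geq 1/3$, and derive $\Normal v_i(M)<4\ceil{n/3}=d$. This is impossible, because $\Normal v_i$ is $d$-normalized by Lemma \ref{4/3ordered:lem:normalization}, so $\Normal v_i(M)=d$.

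Throughout we work only with $\Normal v_i$, and we only ever upper bound the values of bags, using $\Normal v_i\le v_i$. Orderedness is preserved under normalization, so items are added in non-increasing order of $\Normal v_i$ as well. The key observation, exactly as in Lemma \ref{3/2ordered:lem:leftover}, is the following. Whenever an item $g$ is added to a bag $B$, agent $i$ is still unassigned, so $v_i(B)<1$ at that moment; otherwise $i$ would have claimed $B$. In particular, each final bag $\Hat B$ that received items during bag filling satisfies $\Normal v_i(\Hat B\setminus\{\text{last item}\})<1$. The swap step only exchanges which agent holds a bag in $L$. It never changes bag contents in any way relevant to $i$, so this observation survives the swaps.

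We bound the three groups of Definition \ref{4/3ordered:def:partition} separately.

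\textbf{Bags in $A^+$.} These are $B_1,\dots,B_{\ell^*}$. None of them receives items, so their total is $\sum_{j\le\ell^*}\Normal v_i(\{j,2n-j+1\})$. We do not bound these bags directly. Instead, as in the reduction of Lemma \ref{3/2ordered:lem:reduction}, we remove them from the normalized instance together with $\ell^*$ normalized bundles. A pigeonhole and swap argument on the $\MMS^d$ partition shows that removing the large goods $[\ell^*]$ and their partners costs at most $\ell^*$ bundles of value $1$.

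\textbf{Bags in $A^1$.} Each such bag received exactly one item, while no bag yet had two. Its final value is therefore below $1$ plus a single late item.

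\textbf{Bags in $A^2$.} Each bag has value below $1$ plus its last item. The last items are taken after all $A^1$ items, so each is worth at most the $A^1$-items.

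We then use $\Normal v_i(2n-\ell^*)\ge 1/3$. The goods $\ell^*+1,\dots,2n-\ell^*$ all have value at least $1/3$. In a normalized partition each bundle has value exactly $1$, so such a bundle contains at most three of these goods, and a bundle with three of them contains nothing else. Counting how the $2n-2\ell^*$ medium goods plus the unallocated remainder are packed into the $d-\ell^*$ remaining unit bundles gives a lower bound on $\Normal v_i(M)$. Against this we set the upper bound from the bags: the bags outside $A^+$ number $n-\ell^*$, and each is worth less than $1+$ (one item). This should yield $\Normal v_i(M)<\ell^*+\tfrac43(n-\ell^*)\le 4n/3$, contradicting $\Normal v_i(M)=4n/3$.

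The main obstacle is the accounting for the extra items (the last item of each $A^2$ bag and the single item of each $A^1$ bag). We must show these contribute at most $1/3$ per bag on average, and the previous analysis in \cite{MMS-l-out-of-d} relied on sequential, index-ordered filling to do this. The plan is to replace that reliance with the definition of $A^1$ alone. All $A^1$ items precede any second item, so the $A^1$ items are exactly a prefix $2n+1,\dots,2n+|A^1|$ of the filling order, and every later item is no more valuable. Then $\Normal v_i(\{j,2n-j+1\})<1$ for $j>\ell^*$ together with $\Normal v_i(2n-\ell^*)\ge1/3$ caps the small partners below $2/3$. A charging argument then assigns each extra item to a distinct medium good of the normalized partition, showing the extra items' total is at most $(n-\ell^*)/3$. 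This step needs only the item order, never the bag order.
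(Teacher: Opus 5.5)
There is a genuine gap. The two steps that carry your argument are, respectively, false and missing. First, your treatment of $A^+$ fails. You claim that a pigeonhole-and-swap argument lets you delete $B_1,\dots,B_{\ell^*}$ together with $\ell^*$ unit bundles, so that the remaining goods still split into $d-\ell^*$ bundles of $\bar v_i$-value at least $1$. Since $\bar v_i(M)=d$, this would force $\bar v_i(B_1\cup\dots\cup B_{\ell^*})\le \ell^*$. But each of these bags is worth at least $1$ by the definition of $\ell^*$, and can be worth more. For example, if good $1$ alone is a bundle of the normalized partition and $\bar v_i(2n)>0$, then already $\bar v_i(M\setminus B_1)<d-1$. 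The argument of Lemma~\ref{3/2ordered:lem:reduction} does not transfer. There the removed pair $\{k,3n-k+1\}$ consists of the two \emph{least} valuable among the top $k+1$ remaining goods, so any two of those sharing a bundle dominate it. Your pairs $\{j,2n-j+1\}$ with $j\le\ell^*$ contain the \emph{most} valuable goods, which need not be dominated by anything. The paper never removes $A^+$. Writing $\bar v_i(2n-\ell^*)=1/3+x$ with $x\ge 0$, it only uses $\bar v_i(\hat B_j)\le 1+\bar v_i(2n-\ell^*)=4/3+x$ for these bags (Lemma~\ref{4/3ordered:lem:badbags}).

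Second, you defer the accounting for the added goods to an unspecified charging argument, and that is where the whole difficulty lies. Your target $\bar v_i(M)<\ell^*+\tfrac{4}{3}(n-\ell^*)$ needs the added goods to total at most $(n-\ell^*)/3$. However, the good added to an $A^1$ bag is bounded only by $\bar v_i(2n-\ell^*)=1/3+x$, which exceeds $1/3$ in your case. Charging added goods injectively to medium goods gives no upper bound, since those are worth \emph{at least} $1/3$. The filling order by itself cannot supply one either. The paper proceeds as follows.
\begin{itemize}
\item Let $B_k$ be the first bag to receive a second good. $B_k$ plus its first added good is worth less than $1$, while both initial goods of $B_k$ are worth at least $1/3+x$.
\item Hence every later good, in particular the last good of each $A^2$ bag, is worth less than $1/3-2x$. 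So $A^2$ bags are worth less than $4/3-2x$ (Lemma~\ref{4/3ordered:lem:goodbags}).
\item These per-bag bounds alone only bound $\bar v_i(M)$ by $4n/3+3x(\ell+\ell^*)$.
\item The excess is removed by showing that, in the main case, a set $S$ of $3(\ell+\ell^*)$ bags (all of $A^1$ plus suitable bags of $A^+$) has total value at most $4(\ell+\ell^*)$.
\item That bound comes from the $\mu^d$ partition restricted to $\{1,\dots,8n/3+\ell\}$: its bundle sizes, and the number $t$ of singleton bundles holding a good of $[\ell^*]$ (Lemmas~\ref{4/3ordered:lem:bagsizes}--\ref{4/3ordered:lem:lem6}).
\end{itemize}
Your remark that a unit bundle holds at most three goods worth $\ge 1/3$ is in this spirit, but the proposal never turns it into such a bound.
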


For the second case, we only bound the value of the initial bags $B_j$. Hence, neither the normalization, nor the deviation from the order of the bags during bag-filling can impact the analysis. All proofs work precisely as they do in \cite{MMS-l-out-of-d}. We thus achieve:

\begin{theorem}[\protect{\cite{MMS-l-out-of-d}}]\label{4/3ordered:thm:part2}
    If Algorithm \ref{4/3ordered:alg:main} does not allocate a bag to some agent $i$, then $\Normal v_i(2n-\ell^*)\geq 1/3$.
\end{theorem}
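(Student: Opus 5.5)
We argue by contradiction. Assume agent $i$ receives no bag and $\Normal v_i(2n-\ell^*)<1/3$. Since the partition is $d$-normalized for $d=4\ceil{n/3}=4n/3$, we have $\Normal v_i(M)=d=4n/3$. The goal is to show that the goods actually handled by Algorithm \ref{4/3ordered:alg:main} have total normalized value strictly less than $4n/3$. Because we only ever upper-bound values for the unsatisfied agent $i$, and $\Normal v_i\le v_i$ itemwise (Lemma \ref{4/3ordered:lem:normalization}), the swap step and the filling order play no role. Only the initial bags $B_j=\{j,2n-j+1\}$ and the ordering enter, which is why the argument of \cite{MMS-l-out-of-d} transfers verbatim.

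\textbf{Step 1: bags that are filled.} Every good of index at least $2n-\ell^*$ has normalized value below $1/3$ by orderedness. Every good added during bag filling has index at least $2n+1$. Take a bag $\Hat B_j$ that received goods during filling, and let $g$ be the last good added. Then $v_i(\Hat B_j\setminus\{g\})<1$, since otherwise $i$ would have claimed it, and so $\Normal v_i(\Hat B_j\setminus\{g\})<1$. This gives $\Normal v_i(\Hat B_j)<4/3$.

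\textbf{Step 2: bags that are never filled.} For $j>\ell^*$ we have $\Normal v_i(B_{\ell^*+1})<1$. Ordering gives $\Normal v_i(j)\le\Normal v_i(\ell^*+1)$, and the partner $2n-j+1$ has index at most $2n-\ell^*$, so its value is below $1/3$. I would use $\Normal v_i(B_j)<1 + 1/3$ for these bags after a short comparison with $B_{\ell^*+1}$. The delicate part is the first $\ell^*$ bags, whose value can approach $2$.

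\textbf{Step 3: the main obstacle, bounding $\sum_{j\le\ell^*}\Normal v_i(B_j)$.} Fix a normalized $\MMS^d$ partition $P$ for $i$ with $d$ bundles of value exactly $1$. Each good has value at most $1$.
\begin{itemize}
\item The goods $1,\dots,2n-\ell^*-1$ are the ones that can have value at least $1/3$ (all later goods are below $1/3$).
\item A bundle of $P$ contains at most two goods of value above $1/2$.
\item Counting how the top $2n-\ell^*-1$ goods are distributed among the $4n/3$ bundles, by pigeonhole, forces the high goods to be packed densely.
\end{itemize}
This should yield $\Normal v_i(1,\dots,\ell^*)+\Normal v_i(2n-\ell^*+1,\dots,2n)\le \tfrac{4}{3}\ell^* + (\text{slack})$, where the slack is absorbed by the strict inequalities from Steps 1 and 2. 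Summing over all $n$ bags then gives total normalized value of the allocated goods, plus any leftovers, strictly below $4n/3=\Normal v_i(M)$, which is the contradiction. There are no leftovers, since failure means all goods were used.

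The plan for this counting is to follow the case analysis of \cite{MMS-l-out-of-d}, splitting goods by the thresholds $1/3$, $1/2$ and $2/3$. The step I expect to be hardest is making the pigeonhole bound on the top goods tight enough that it still produces a gap with $d=4n/3$. Here one uses $3\mid n$, and one must ensure that no assumption on the filling order sneaks in. It should not, since only initial bags and item ranks are used.
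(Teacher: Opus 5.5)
\textbf{The proposal has a genuine gap: Step 2 rests on a reversed inequality, and the bound it needs is false in this case.} The paper does not reprove this case; it imports it from the cited work. So everything rests on your argument, and that argument breaks at Step 2.

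\textbf{Step 2.} Take $j>\ell^*$. The partner good $2n-j+1$ has index at most $2n-\ell^*$, so orderedness gives $\Normal v_i(2n-j+1)\ge \Normal v_i(2n-\ell^*)$, not a value below $1/3$. The ordering together with $\Normal v_i(B_{\ell^*+1})<1$ only yields $\Normal v_i(B_j)\le 2\Normal v_i(\ell^*+1)<2-2\Normal v_i(2n-\ell^*)$. This is below $4/3$ only when $\Normal v_i(2n-\ell^*)\ge 1/3$, which is the other case; that is exactly how Lemma~\ref{4/3ordered:lem:goodbags} handles the bags in $A^2$.

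In the present case your claim is false. Take $n=3$ and $d=4$, with $\Normal v_i=(0.9,0.9,0.9,0.5,0.5)$ followed by six goods of value $0.05$. A normalized $\MMS^4$ partition is $\{1,6,7\},\{2,8,9\},\{3,10,11\},\{4,5\}$. Then $\Normal v_i(B_1)=0.95$, so $\ell^*=0$ and $\Normal v_i(2n-\ell^*)=0.05<1/3$. Yet $\Normal v_i(B_2)=\Normal v_i(B_3)=1.4>4/3$. Other agents may claim such bags before any filling happens, so they enter your sum at full value.

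\textbf{Step 3.} This step works on bags that are already easy. For $j\le\ell^*$ we have $2n-j+1>2n-\ell^*$, and normalization gives $\Normal v_i(j)\le 1$. Hence $\Normal v_i(B_j)\le 1+\Normal v_i(2n-\ell^*)<4/3$ immediately. These bags are also never filled, since $v_i(B_j)\ge 1$ while $i\in U$. They cannot approach $2$. Separately, a bundle of value $1$ contains at most one good above $1/2$, not two.

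\textbf{What is missing.} Suppose $\Normal v_i(\ell^*+1)\le 2/3$. Then Lemmas~\ref{4/3ordered:lem:obs1} and~\ref{4/3ordered:lem:obs2} with $k=\ell^*+1$ bound every bag by $4/3$. Some bag left in $L$ has value below $1$, and you get the contradiction.

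The real work is the subcase $\Normal v_i(\ell^*+1)>2/3$. There, unfilled bags $\{j,2n-j+1\}$ with $\Normal v_i(j)>2/3$ and $\Normal v_i(2n-j+1)>1/3$ can exceed $4/3$. You must show that their total excess is covered by the slack $1/3-\Normal v_i(h)$ of the other bags, where $h$ is the small second good or last-added good of that bag. This is where the normalized $\MMS^d$ partition must be used:
\begin{itemize}
\item a good $g$ above $2/3$ shares its bundle only with goods of total value $1-\Normal v_i(g)<1/3$;
\item at most two goods above $1/3$ fit in one bundle.
\end{itemize}

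The example shows that crude bounds cannot suffice. If $B_2$ and $B_3$ go to other agents, bounding the remaining bag by $4/3$ gives $2.8+4/3>4$, which is no contradiction. One needs a sharper bound, such as $1+\Normal v_i(7)=1.05$, or below $1$ if the bag stays in $L$. Your proposal contains no such argument, so as it stands it does not prove the statement.
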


Theorems \ref{4/3ordered:thm:part1} and \ref{4/3ordered:thm:part2} together cover all cases for $\Normal v_i(2n-\ell^*)$ and lead to a contradiction. Hence, no agent can be left without a bag at the end of Algorithm \ref{4/3ordered:alg:main}.

The last thing towards proving the correctness of Algorithm \ref{4/3ordered:alg:main} is to show that the partial allocation is in fact EF1.

\begin{lemma}\label{4/3ordered:lem:EF}
    The allocation computed by Algorithm \ref{4/3ordered:alg:main} is EF1.
\end{lemma}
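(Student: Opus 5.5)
We prove EF1 for the partial allocation returned by Algorithm \ref{4/3ordered:alg:main}, following the proof of Lemma \ref{3/2ordered:lem:EFX}. We then complete it and remove the dummy agents via Lemma \ref{3/2ordered:lem:envy_cycles}.

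The argument rests on two structural facts about the algorithm. First, an agent's value for her own bag never decreases: a claim gives value at least $1$, and a swap happens only when it strictly increases her value. Second, goods are added in decreasing order of index, so by orderedness the last good added to any bag is the least valuable good in that bag that was added during bag filling.

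Fix agents $i$ and $j$ with final bags $B$ and $C$. We distinguish how $C$ looks when the algorithm ends.

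\textbf{Case 1:} $C$ is an initial bag $\{k,2n-k+1\}$ that received no good during bag filling. Removing one good leaves a singleton $\{k\}$, and we must show $v_i(k)\le v_i(B)$. If $i$ was unassigned at the moment $C$ left $L$, we use that $i$ could not claim anything before, so $v_i(C')<1$ for all bags $C'$ in $L$ at that time. This gives $v_i(k)\le v_i(C)<1\le v_i(B)$. If $i$ was already assigned, no swap to $C$ was preferred, so $v_i(C)\le v_i(A_i)\le v_i(B)$ by monotonicity. If $C$ was taken at the very start, $v_i(k)<1$ unless $i$ could have claimed it, and in either case we compare against $i$'s value $\ge 1$ or against her current bag. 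This case is routine.

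\textbf{Case 2:} $C$ received goods during bag filling; let $g$ be the last good added. Just before $g$ was added, the Else branch executed, so no unassigned agent valued $C\setminus\{g\}$ at least $1$, and no assigned agent valued it above her current bag. If $i$ was unassigned then, $v_i(C\setminus\{g\})<1\le v_i(B)$. If $i$ was assigned, $v_i(C\setminus\{g\})\le v_i(A_i)\le v_i(B)$. Either way EF1 holds with respect to $g$.

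This is where we get only EF1 rather than EFX. A bag may be filled in a nonmonotone fashion, and the initial partner $2n-k+1$ may be less valuable than later goods? No: later goods have larger index and hence smaller value. So the obstacle is subtler. A bag whose bag-filling goods were added while it was held by an agent and then swapped back into $L$ has the same content, so the argument above still applies verbatim. The step I expect to need the most care is this one: bags return to $L$ through swaps and may receive further goods, so I must verify that "$g$ is the last good added" still yields the Else-branch condition at that moment. It does, because goods are only added to bags currently in $L$.

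Finally, we take the $n$ possibly including the 1 or 2 dummy copies. Dropping the dummy agents leaves an EF1 partial allocation for the original agents, with each value at least $\MMS_i^{4\ceil{n/3}}$. By the first part of Lemma \ref{3/2ordered:lem:envy_cycles}, envy-cycle elimination allocates the remaining goods, including those formerly held by dummies, while keeping EF1 and not decreasing any agent's value. This completes the plan.
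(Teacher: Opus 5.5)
\textbf{Case 1 has a genuine gap: you remove the wrong good.} Removing $2n-k+1$ leaves $\{k\}$, so you must prove $v_i(k)\le v_i(B)$. That inequality is false in general.

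Counterexample: take $n=3$, $d=4$, goods $1,\dots,7$, and three identical agents with $v(1)=9$ and $v(g)=1/2$ for $g\ge 2$, so $\MMS_i^4=1$. All three initial bags $\{1,6\},\{2,5\},\{3,4\}$ are worth at least $1$ to everyone. The claim branch may therefore give $\{1,6\}$ to $j$ and $\{2,5\}$ to $i$. Then $v_i(\{1\})=9>1=v_i(B)$.

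Your supporting claims fail for the same reason.
\begin{itemize}
\item The first branch picks an arbitrary feasible (agent, bag) pair. So when $j$ claims $C$, the unassigned agent $i$ may well have $v_i(C)\ge 1$. ``$i$ could not claim anything before'' does not follow.
\item Claims take priority over swaps. So an already assigned $i$ may value $C$ above her current bag without ever having had the chance to swap for it. ``No swap to $C$ was preferred'' does not follow either.
\end{itemize}
Your sub-case ``$C$ was taken at the very start'' is exactly the counterexample above. Untouched two-good initial bags are precisely where strong envy can arise. This is why this algorithm only yields EF1, whereas Algorithm~\ref{3/2ordered:alg:main} initializes with singletons to get EFX. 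It also answers the question you raised and then abandoned in your ``only EF1 rather than EFX'' paragraph.

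\textbf{The fix is short and is the paper's Case 1: remove $k$ instead.} Every bag in the algorithm descends from some initial bag $\{k',2n-k'+1\}$ with $k'\in[n]$ and only ever grows. Hence $B$ contains some $k'\in[n]$. Since $2n-k+1\ge n+1>k'$, orderedness gives
$v_i(B)\ge v_i(k')\ge v_i(2n-k+1)=v_i(C\setminus\{k\})$.

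\textbf{Case 2 is correct.} It is in fact stated more carefully than in the paper: you condition on whether $i$ was assigned at the moment the last good $g$ was added, rather than when $j$ received $C$. That is exactly what the Else-branch argument needs. Your check that goods are only ever added to bags currently in $L$ is also right.

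Your final paragraph on dummy agents and envy-cycle elimination belongs to the proof of the theorem, not this lemma. It is consistent with what the paper does there.
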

\begin{proof}
    Let $i$ and $j$ be two agents, having been assigned bags $A_i, A_j$ respectively. We show that there exits $g \in A_j$ such that $v_i(A_i) \geq v_i(A_j \setminus \{g\})$. 
    We distinguish three cases:

    \textbf{Case 1:} $A_j=B_k=\{k,2n-k+1\}$, i.e. a bag created during initialization. We can choose to remove good $k$. Now $A_i$ contains a good from $[n]$, which is more valuable than $2n-k+1$, so $v_i(A_i)\geq v_i(A_j\setminus\{k\})$.

    \textbf{Case 2:} $A_j$ has received goods during bag-filling, and $j$ was assigned $A_j$ before $i$ was assigned a bag. In this case, let $g$ be the last good added to $A_j$. Now $v_i(A_j\setminus\{g\})<1\leq v_i(A_i)$, since otherwise, $i$ would have claimed $A_j\setminus\{g\}$ (before adding $g$).

    \textbf{Case 3:} $A_j$ has received goods during bag-filling, and $j$ was assigned $A_j$ while $i$ was already assigned a bag $A$ at that time. Let $g$ be the last good added to $A_j$. Now $v_i(A_j\setminus\{g\})\leq v_i(A)$, since else $i$ would have swapped bags instead of $g$ being added. Also, $v_i(A)\leq v_i(A_i)$, since agents only swap bags for higher value ones. Thus, $v_i(A_j\setminus\{g\})\leq v_i(A_i)$. 
\end{proof}

\theoremThree*
\begin{proof}
    Theorems \ref{4/3ordered:thm:part1} and \ref{4/3ordered:thm:part2} together with Lemma \ref{4/3ordered:lem:EF} show that the partial allocation resulting from Algorithm \ref{4/3ordered:alg:main} is 1-out-of-$4\ceil{n/3}$ MMS and EF1. In order to complete this allocation, we turn back to Lemma \ref{3/2ordered:lem:envy_cycles}. Since we do not only need to allocate the goods that were possibly leftover after Algorithm \ref{4/3ordered:alg:main} concludes, but also the goods that were allocated to dummy agents, we cannot use the stronger version that promises EFX. Instead, we can only maintain the EF1 guarantee by disregarding the ordering.
    
    In conclusion, inputting the result from Algorithm \ref{4/3ordered:alg:main} into the envy-cycle elimination procedure from Lemma \ref{3/2ordered:lem:envy_cycles} yields a complete allocation that is 1-out-of-$4\ceil{n/3}$ MMS and EF1 at the same time.
\end{proof}

\section{Conclusion}

We studied the compatibility between \emph{ordinal} maximin share guarantees and envy-based fairness notions in the allocation of indivisible goods. Focusing on ordered and top-\(n\) instances, we showed that strong ordinal MMS guarantees can be achieved simultaneously with EFX and EF1. 

Our results highlight the potential of ordinal approaches for reconciling share-based and envy-based fairness. An important direction for future work is to understand whether similar guarantees can be obtained for general instances or with stronger envy-based notions, and to further explore the limits of combining ordinal and envy-based fairness criteria.

\bibliographystyle{alpha}
\bibliography{ref}

\newpage 
\appendix
\newpage

\section{Missing Proofs of Section \ref{sec2}}\label{app2}
\lemEnvyCycle*
\begin{proof}
    We analyze Algorithm \ref{common:alg:envycycle}. It works by adding remaining goods to bags in $A$, while maintaining EF1 or EFX respectively. Consider the digraph $G$ where the nodes are agents, and there exists an edge $(i,j)$ if $i$ envies $j$. If $G$ contains an agent $i$ with no incoming edge, we can add the most valuable unassigned item $g$ to $i$'s bundle. If we are in the weaker case of the lemma, we maintain EF1: no agent envied $i$ before, so removing $g$ from $A_i$ resolves all envy. Otherwise, since $I$ is ordered, and $A$ consisted of the most valuable goods, $g$ is now $i$'s least valuable good for every agent. Removing $g$ from $A_i$ resolves all envy as before, and since it is the least valuable good, so does removing any other good.

    If $G$ contains no such agent, then it must contain a directed cycle $C$. In this case, we can assign each agent in $C$ the bundle of the next agent along $C$. This does not change the bundles, only their owner, and each affected agent gets a bag of higher value. Thus, the result is still EF1 or EFX respectively. Now the total utility $\sum_{i\in N}v_i(A_i)$ has strictly increased, so we only need to eliminate finitely many cycles this way to find an agent with in-degree 0 to assign a new good to.

    Both of these steps keep the requirements of the lemma invariant, so can be iterated until all goods are assigned.
\end{proof}

\begin{algorithm}[tb]
    \DontPrintSemicolon

    \SetKwInOut{Input}{Input}
    \SetKwInOut{Output}{Output}
    \Input{A partial allocation $A$}
    \Output{A complete allocation}
    \BlankLine
    \While{$M\setminus A\neq \emptyset$}{
        \While{The envy graph $G$ does not contain a source}{
            Let $C$ be a cycle in $G$\;
            $A_i\gets A_j$ for $(i,j)\in C$\;
        }
        Let $i$ be a source in $G$, let $g\in M\setminus A$ with $v_i(g)$ maximum\;
        $A_i\gets A_i\cup\{g\}$\;
    }
    \Return $(A_1,\dots ,A_n)$\;
    \caption{\sc{EnvyCycleElimination}}
    \label{common:alg:envycycle}
\end{algorithm}


\section{Missing Proofs of Section \ref{sec4}}\label{app4}
\lemEFMatching*
\begin{proof}
    A perfect matching is envy free, so if $T_{C',\bone}$ contains a perfect matching, the lemma holds. If not, there exists an inclusion wise minimal Hall-violating set of bags $X\subseteq C'$, i.e. $|N(X)|<|X|$, and this condition holds for no $Y\subsetneq X$. Hence, for every $\emptyset\neq Y\subsetneq X$ there exists a perfect matching between $Y$ and $N(Y)$. This matching is envy-free, since any agent who envies a bag in $Y$ is already in $N(Y)$ and thus matched. This proper subset $Y$ exists since every bag has at least one incident edge in $T_{C',\bone}$ by construction of the shrink step in Line \ref{ln:lone_divider:shrink}; This means $|X|>1$.
\end{proof}

\section{Missing Proofs of Section \ref{sec5}}\label{app5}

\lemNormalShare*
\begin{proof}
    Let $I=(M,N,(v_i)_{i\in N})$ be an instance of fair division. Let $P=(P_1,\dots,P_d)$ be a $\MMS^d$ partition for an agent $i$. Let $P_{min}=\min_Pv_i(P_j)$. Now setting $\Normal v_i(g)=v_i(P_{min})v_i(g)/v_i(P_j)$ with $g\in P_j$ for all $g\in M, i\in N$ induces a new instance $I'=(M,N,(\Normal v_i)_{i\in N})$ that is normalized, and $\Normal v_i(g)\leq v_i(g)$. Now by assumption, $I'$ can be fairly divided. We now take a solution of $I'$ as our solution for $I$. Each agent receives at least as much value as they received with respect to $\Normal v_i$, so the solution remains valid.
\end{proof}

For completeness, we now give part of the proof of correctness for algorithm \ref{4/3ordered:alg:main}. In particular, we give the modified proofs for 1-out-of-$4\ceil{n/3}$ MMS in the case $\Normal v_i(2n-\ell^*)\geq 1/3$. In all of the following, we will refer to the bag initialized as $\{j,2n-j+1\}$ as $B_j$ and the bag resulting from $B_j$ after the algorithm as $\Hat B_j$. Let $\Normal v_i(2n-\ell^*)=1/3-x$ for some $x>0$.

First, let us make two quick observations, which have similar proofs to \cite{MMS-l-out-of-d}:

\begin{lemma}\label{4/3ordered:lem:obs1}
    For all $j\leq k\leq n$, we have $\Normal v_i(\Hat B_j)\leq 1+\Normal v_i(2n-k+1)$.
\end{lemma}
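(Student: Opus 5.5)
The statement concerns bags $\Hat B_j$ at the end of Algorithm~\ref{4/3ordered:alg:main}, measured in the normalized valuation $\Normal v_i$ of the unsatisfied agent $i$. It is legitimate to reason about $i$'s valuation because $i$ is never satisfied. Since $\Normal v_i\le v_i$, any bag $B$ that $i$ could not claim satisfies $\Normal v_i(B)\le v_i(B)<1$. I would split into two cases, according to whether $\Hat B_j$ received goods during bag filling.

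\textbf{Case 1: no goods were added.} Then $\Hat B_j=B_j=\{j,2n-j+1\}$. I would argue that $\Normal v_i(j)\le 1$. In the normalized instance, the $\MMS^d$ partition has all bundles of value exactly $1$, and every good lies in some bundle, so each single good has value at most $1$. Moreover $2n-j+1\le 2n-k+1$ because $j\le k$. Orderedness of $\Normal v_i$ with respect to the original order is preserved by Lemma~\ref{4/3ordered:lem:normalization}, so $\Normal v_i(2n-j+1)\ge \Normal v_i(2n-k+1)$. This is the wrong direction for the claim, which needs $\Normal v_i(2n-j+1)\le \Normal v_i(2n-k+1)$, and that fails when $j<k$.

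So this case needs a different argument. The natural fix is to observe that $v_i(B_j)<1$ whenever $i$ never claims $B_j$ (otherwise $i$, unsatisfied, would have claimed it at the first opportunity). This gives $\Normal v_i(\Hat B_j)\le v_i(B_j)<1\le 1+\Normal v_i(2n-k+1)$. The only issue is whether $B_j$ is still in $L$ at that point. If it was claimed by another agent, the first \texttt{If} condition still checks $i$ while $B_j$ sits in $L$ at the very start, so $v_i(B_j)<1$ holds regardless.

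\textbf{Case 2: goods were added.} Let $g$ be the last good added to $\Hat B_j$. Just before $g$ was added, $\Hat B_j\setminus\{g\}$ was in $L$ and $i\in U$. Because the \texttt{Else} branch was taken, the first \texttt{If} failed, so $v_i(\Hat B_j\setminus\{g\})<1$, and hence $\Normal v_i(\Hat B_j\setminus\{g\})<1$. Goods are added in index order starting from $2n+1$, so $g\ge 2n+1>2n-k+1$. By orderedness of $\Normal v_i$, $\Normal v_i(g)\le \Normal v_i(2n-k+1)$. Adding the two bounds gives
\[
\Normal v_i(\Hat B_j)<1+\Normal v_i(2n-k+1).
\]

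The step needing care is the orderedness transfer to $\Normal v_i$, which is exactly why Lemma~\ref{4/3ordered:lem:normalization} was designed. Beyond that I would check that swaps do not matter. A swap puts a bag back into $L$ without changing its contents. Whenever a good is added, the \texttt{If} test has just failed for every $i\in U$ and every bag currently in $L$. So the bound depends only on the bag's contents at the moment $g$ was added, not on bag order, and the argument does not rely on the orderings used in \cite{MMS-l-out-of-d}.
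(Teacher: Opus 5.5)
\textbf{There is a genuine gap in Case 1, caused by a sign slip.} Your Case 2 is correct and matches the paper's second case.

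From $j\le k$ you get $2n-j+1\ge 2n-k+1$, not $\le$. The goods are indexed in non-increasing order of value, and Lemma~\ref{4/3ordered:lem:normalization} preserves this order for $\Normal v_i$. So the inequality gives $\Normal v_i(2n-j+1)\le \Normal v_i(2n-k+1)$, which is exactly the direction you need. You already showed $\Normal v_i(j)\le 1$ by normalization. Together these give
\[
\Normal v_i(\Hat B_j)=\Normal v_i(j)+\Normal v_i(2n-j+1)\le 1+\Normal v_i(2n-k+1).
\]
This is the paper's argument, and it is the one you started with before abandoning it.

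The replacement argument you adopted is false. The first \emph{If} test is existential, and the algorithm picks an arbitrary pair. If $v_i(B_j)\ge 1$, the test succeeds, but $B_j$ may be given to some other agent $a$ with $v_a(B_j)\ge 1$, or a different pair may be chosen first. Nothing forces $v_i(B_j)<1$ for a bag that $i$ never claims.

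The paper's analysis in fact relies on such bags. The set $A^+=\{B_1,\dots,B_{\ell^*}\}$ consists of initial bags with $\Normal v_i(B_j)\ge 1$, hence $v_i(B_j)\ge 1$. These bags go to other agents while $i$ remains unsatisfied. If your claim held, $\ell^*$ would always be $0$ and Lemma~\ref{4/3ordered:lem:badbags} would be pointless.

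So Case 1 as written is unsupported exactly for unmodified bags with $v_i(B_j)\ge 1$, which are the only bags where the bound is not trivial. Return to the direct argument above.
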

\begin{proof}
    Let $g$ be the least valuable good in $\Hat B_j$. If $g=2n-j+1$, we use that $\Normal v_i$ is normalized, so $\Normal v_i(j)\leq 1$. Hence
    \[\Normal v_i(\Hat B_j)=\Normal v_i(\{j,2n-j+1\})\leq 1+\Normal v_i(\{2n-k+1\}).\]
    Otherwise, $g$ was added during the bag-filling phase, meaning that $\Normal v_i(\Hat B_j\setminus\{g\})\leq v_i(\Hat B_j\setminus\{g\})<1$, since else $i$ would have been assigned the bag before adding $g$. Now
    \[\Normal v_i(\Hat B_j)=\Normal v_i(\Hat B_j\setminus\{g\})+\Normal v_i(g)\leq 1+\Normal v_i(2n-k+1).\]
\end{proof}

\begin{lemma}\label{4/3ordered:lem:obs2}
    For all $k\leq j\leq n$, we have $\Normal v_i(\Hat B_j)\leq \max(1+\Normal v_i(2n-k+1),2\Normal v_i(k))$.
\end{lemma}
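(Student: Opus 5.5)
We need to prove that for $k\le j\le n$, $\Normal v_i(\Hat B_j)\le \max(1+\Normal v_i(2n-k+1),\,2\Normal v_i(k))$. The plan mirrors the proof of Lemma \ref{4/3ordered:lem:obs1}: we split on whether $\Hat B_j$ received any goods during bag-filling, and in each case use the orderedness of $\Normal v_i$ guaranteed by Lemma \ref{4/3ordered:lem:normalization}.

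\textbf{Case 1: no good was added to $B_j$ during bag-filling.} Then $\Hat B_j=\{j,2n-j+1\}$. Since $k\le j$, orderedness gives $\Normal v_i(j)\le \Normal v_i(k)$. Since $j\le n<2n-j+1$, it also gives $\Normal v_i(2n-j+1)\le \Normal v_i(j)\le \Normal v_i(k)$. Hence
\[
\Normal v_i(\Hat B_j)\le 2\Normal v_i(k),
\]
which is the second term of the maximum.

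\textbf{Case 2: some good was added to $B_j$ during bag-filling.} Let $g$ be the last good added. By the same argument as in Lemma \ref{4/3ordered:lem:obs1},
\[
\Normal v_i(\Hat B_j\setminus\{g\})\le v_i(\Hat B_j\setminus\{g\})<1 .
\]
The reason is that agent $i$ is never satisfied, so if this inequality failed, $i$ would have claimed the bag before $g$ was added. The filling phase adds goods in increasing index starting from $2n+1$, so $g\ge 2n+1>2n-k+1$. Orderedness then gives $\Normal v_i(g)\le \Normal v_i(2n-k+1)$, and adding the two bounds yields $\Normal v_i(\Hat B_j)\le 1+\Normal v_i(2n-k+1)$.

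Combining the two cases gives the stated maximum bound. The only subtle point is to use the original valuation $v_i$ for the "would have claimed" argument and then pass to $\Normal v_i\le v_i$, exactly as in the previous lemma. Also, all index comparisons refer to the fixed order of $I$, which the normalization preserves. I do not expect any real obstacle. The one place to check carefully is that the filling-phase goods all have index at least $2n+1$; this holds by the initialization $g\gets 2n+1$ in Algorithm \ref{4/3ordered:alg:main}.
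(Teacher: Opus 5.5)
Your proof is correct and follows the paper's argument essentially verbatim. You use the same case split on whether $B_j$ received goods during bag-filling: orderedness gives $\bar v_i(2n-j+1)\le \bar v_i(j)\le \bar v_i(k)$ in the untouched case, and otherwise the ``$i$ would have claimed it'' argument with $\bar v_i\le v_i$ gives the bound $1+\bar v_i(2n-k+1)$. Your explicit remark that every filled good has index at least $2n+1>2n-k+1$ supplies a step the paper leaves implicit.
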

\begin{proof}
    Again, we distinguish two cases. If there were goods added to $B_j$ during bag-filling, let $g$ be the last. Then, like before
    \[\Normal v_i(\Hat B_j)=\Normal v_i(\Hat B_j\setminus\{g\})+\Normal v_i(g)\leq 1+\Normal v_i(2n-k+1).\]
    Otherwise, since $k\leq j$, and hence $\Normal v_i(2n-j+1)\leq \Normal v_i(j)\leq \Normal v_i(k)$, we get
    \[\Normal v_i(\Hat B_j)=\Normal v_i(\{j,2n-j+1\})\leq 2\Normal v_i(k)\]
\end{proof}

\begin{lemma}[\protect{\cite{MMS-l-out-of-d}}]\label{4/3ordered:lem:smallgoods}
    For all $j\geq 2n-\ell^*$, we have $\Normal v_i(j)<1/2$. In particular, $x<1/6$.
\end{lemma}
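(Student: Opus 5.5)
The plan is to read the bound off the initial bag $B_{\ell^*+1}$ and then propagate it to later goods using the ordering. Two properties of $\Normal v_i$ make this work: it is ordered with respect to the original order (Lemma~\ref{4/3ordered:lem:normalization}), and, by the definition of $\ell^*$, $\Normal v_i(B_{\ell^*+1})<1$.

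First I check that $\ell^*$ is well defined and that $B_{\ell^*+1}$ is a genuine two-item initial bag. Since agent $i$ receives no bag, there is an initial bag $B_j$ with $\Normal v_i(B_j)\leq v_i(B_j)<1$. Hence the lowest index $\ell^*$ with $\Normal v_i(B_{\ell^*+1})<1$ exists, and it satisfies $\ell^*+1\leq n$. In Algorithm~\ref{4/3ordered:alg:main} every initial bag has the form $B_{\ell^*+1}=\{\ell^*+1,\,2n-\ell^*\}$. From $\ell^*\leq n-1$ we get $\ell^*+1\leq n<n+1\leq 2n-\ell^*$, so the two goods are distinct and good $\ell^*+1$ precedes good $2n-\ell^*$ in the order.

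Next I apply orderedness of $\Normal v_i$. It gives $\Normal v_i(2n-\ell^*)\leq \Normal v_i(\ell^*+1)$, and therefore
\[2\,\Normal v_i(2n-\ell^*)\leq \Normal v_i(\ell^*+1)+\Normal v_i(2n-\ell^*)=\Normal v_i(B_{\ell^*+1})<1 .\]
So $\Normal v_i(2n-\ell^*)<1/2$. Every good $j\geq 2n-\ell^*$ comes later in the order, so orderedness again gives $\Normal v_i(j)\leq \Normal v_i(2n-\ell^*)<1/2$, which is the first assertion.

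For the ``in particular'' part, I plug in the parametrization of $\Normal v_i(2n-\ell^*)$ by $x$. In the case under consideration, $\Normal v_i(2n-\ell^*)\geq 1/3$; I read the parametrization as $\Normal v_i(2n-\ell^*)=1/3+x$ with $x\geq0$, since only then does the bound give $x<1/6$. Under that reading, $1/3+x<1/2$ yields $x<1/6$ immediately. The only delicate point is this bookkeeping: making sure that the parametrization of $x$ matches the case $\Normal v_i(2n-\ell^*)\geq 1/3$, and that the bound is taken with respect to $\Normal v_i$ rather than $v_i$. The latter is legitimate because orderedness is preserved by the normalization of Lemma~\ref{4/3ordered:lem:normalization}.
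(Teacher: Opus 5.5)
Your proof is correct. The paper gives no proof of its own here (it defers to \cite{MMS-l-out-of-d}), and your two-line argument is the standard one: $2\Normal v_i(2n-\ell^*)\le \Normal v_i(B_{\ell^*+1})<1$ via the order-preserving normalization of Lemma~\ref{4/3ordered:lem:normalization}, then monotonicity for all later goods. Your reading $\Normal v_i(2n-\ell^*)=1/3+x$ with $x\ge 0$ is also right: the paper's text writes $1/3-x$, but that is a typo, since Lemmas~\ref{4/3ordered:lem:goodbags} and~\ref{4/3ordered:lem:badbags} and Case~2 of the proof of Theorem~\ref{4/3ordered:thm:part1} all compute with $1/3+x$.
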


\begin{definition}   
    Define the following partition of the starting bags:
    First, $A^+=\{B_1,\dots,B_{\ell^*}\}$. Then $A^1$ contains all bags that receive exactly one item during bag filling before any bag receives two items. $A^2$ contains all other bags. Notice in particular that $A^+\cap A^1=\emptyset$, since before an item would be added to $B_j\in A^+$, agent $i$ could claim $B_j$.
\end{definition}

\begin{lemma}\label{4/3ordered:lem:goodbags}
    For all $B_j\in A^2$, we have $\Normal v_i(\Hat B_j)<4/3-2x$.
\end{lemma}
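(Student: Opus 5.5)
We need to show: for $B_j\in A^2$, $\bar v_i(\hat B_j)<4/3-2x$. Here $\bar v_i(2n-\ell^*)=1/3-x$.

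\textbf{Setup.} By Definition~\ref{4/3ordered:def:partition}, a bag in $A^2$ either never receives a good during bag filling, or receives its first good only after some bag has already received two goods. In the latter case, the goods it receives were added after at least one full round of single additions. We treat these two possibilities separately.

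\textbf{Case 1: $B_j\in A^2$ receives goods during bag filling.} Let $g$ be the last good added to $B_j$. As in Lemma~\ref{4/3ordered:lem:obs1}, agent $i$ was unassigned when $g$ was added. Hence $\bar v_i(\hat B_j\setminus\{g\})\le v_i(\hat B_j\setminus\{g\})<1$. It therefore suffices to show $\bar v_i(g)\le 1/3-2x$.

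First, order the goods. Since $B_j\notin A^+$, bag filling starts at good $2n+1$, and the goods are added in decreasing order. We will show that $g$ comes after roughly $n-\ell^*$ other filling goods, which places it deep in the order. Every bag of $A^1$ received one good before any bag received a second one. The first bag receiving a second good also received a first one earlier. So before $g$ is added, at least $|A^1|+1$ goods have been added, and $g$ has index at least $2n+|A^1|+2$.

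Second, bound the value. We need $\bar v_i(g)\le 1/3-2x$ from orderedness. The plan is to use the same counting as in \cite{MMS-l-out-of-d}: the goods $2n-\ell^*,\dots,2n$ together with the early filling goods, all of value at least $\bar v_i(g)$, can be packed into a $d=4n/3$ normalized partition. One then argues that if $\bar v_i(g)>1/3-2x$, too many goods have value above $1/3-2x$ for all $4n/3$ bundles of an $\MMS^d$ partition to have value exactly $1$. This uses normalization and Lemma~\ref{4/3ordered:lem:smallgoods}, since all these goods have value below $1/2$.

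Concretely, a bundle containing a good $h\ge 2n-\ell^*$ with value near $1/3$ can hold at most three such goods. The excess over $1/3$ of the larger goods forces the remaining ones below $1/3-2x$. I would reproduce the relevant inequality from \cite{MMS-l-out-of-d} with index bound $|A^1|+1$ replacing their sequential index. The only structural fact their argument needs is how many goods precede $g$, and that count is preserved by Definition~\ref{4/3ordered:def:partition} irrespective of swaps.

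\textbf{Case 2: $B_j\in A^2$ receives no goods.} Then $\hat B_j=\{j,2n-j+1\}$ with $j>\ell^*$. Since $B_j\notin A^+$ we have $j\ge\ell^*+1$, so $2n-j+1\ge 2n-\ell^*$, giving $\bar v_i(2n-j+1)\le 1/3-x$. We also have $\bar v_i(j)\le 1$, which only gives $\bar v_i(\hat B_j)\le 4/3-x$. To gain the extra $x$, we need more. Either we use that $B_j$ was skipped while bags received second goods, or we use Lemma~\ref{4/3ordered:lem:obs2} with $k=\ell^*+1$ together with $\bar v_i(B_{\ell^*+1})<1$. This would show that $j$ itself has value at most $1-x$, presumably via the minimality of $\ell^*$ and the pairing $\{j,2n-j+1\}$ versus $\{\ell^*+1,2n-\ell^*\}$.

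\textbf{Main obstacle.} The hardest step is Case 1's index-to-value bound $\bar v_i(g)\le 1/3-2x$. This is where the sequential-filling argument of \cite{MMS-l-out-of-d} must be replaced by counting goods via $A^1$. I would attempt it first by directly adapting their inequality chain, with the count of preceding goods as the only input.
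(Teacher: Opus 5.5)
There is a genuine gap. Your Case~1 depends entirely on $\bar v_i(g)\le 1/3-2x$ for the last good $g$ of $\hat B_j$, and you never establish it. You outline a global counting argument over an $\MMS^d$ partition and defer to \cite{MMS-l-out-of-d}, but you do not supply the inequality that would make it work. A count of goods of value above $1/3-2x$ cannot do the job by itself: for $x>0$, three goods of value in $(1/3-2x,1/3]$ fit into one bundle of value $1$, so $4n/3$ bundles can hold $4n>3n+2$ of them.

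The missing idea is local. Let $B_k$ be the first bag to receive a second good, and let $g'$ be the first good it received.
\begin{itemize}
\item Agent $i$ is never assigned, so no bag of $A^+$ ever receives a good: there $v_i\ge\bar v_i\ge1$, and $i$ would claim it. Hence $k\ge\ell^*+1$.
\item So both $k$ and $2n-k+1$ are at most $2n-\ell^*$. Order preservation (Lemma~\ref{4/3ordered:lem:normalization}) then gives $\bar v_i(B_k)\ge 2\bar v_i(2n-\ell^*)$.
\item $B_k\cup\{g'\}$ still received another good, so $\bar v_i(B_k)+\bar v_i(g')\le v_i(B_k\cup\{g'\})<1$. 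Therefore $\bar v_i(g')<1-2\bar v_i(2n-\ell^*)$.
\item By the definition of $A^1$, every good entering a bag of $A^2$ is added no earlier than $g'$. Either it is a later good of $B_k$, or its bag received nothing before $B_k$'s second good. Hence $\bar v_i(g)\le\bar v_i(g')$.
\end{itemize}
Combined with $\bar v_i(\hat B_j\setminus\{g\})<1$, this finishes Case~1.

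Your Case~2 contains a reversed inequality. From $j\ge\ell^*+1$ you get $2n-j+1\le 2n-\ell^*$, not $\ge$. So $\bar v_i(2n-\ell^*)$ is a \emph{lower} bound on $\bar v_i(2n-j+1)$, and your estimate $\bar v_i(\hat B_j)\le 1+\bar v_i(2n-\ell^*)$ is unjustified. The second option you mention is the right one. Both $j$ and $2n-j+1$ have index at least $\ell^*+1$, so $\bar v_i(\hat B_j)\le 2\bar v_i(\ell^*+1)<2\bigl(1-\bar v_i(2n-\ell^*)\bigr)$, using $\bar v_i(B_{\ell^*+1})<1$.

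Finally, watch the sign of $x$. This lemma lives in the case $\bar v_i(2n-\ell^*)\ge1/3$. The surrounding lemmas (the bound $x<1/6$ in Lemma~\ref{4/3ordered:lem:smallgoods}, the bound $4/3+x$ in Lemma~\ref{4/3ordered:lem:badbags}) use $\bar v_i(2n-\ell^*)=1/3+x$ with $x\ge0$; the ``$1/3-x$'' in the setup line is a slip. With $1/3+x$, the two bounds above come out exactly as $1+\bigl(1-2(1/3+x)\bigr)=4/3-2x$ and $2\bigl(1-(1/3+x)\bigr)=4/3-2x$. Under your convention, the same arguments only give $4/3+2x$, so the target would be out of reach.
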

\begin{proof}
    It holds
    \begin{align*}
        1&>\Normal v_i(B_{\ell^*+1})\\
        &=\Normal v_i(\ell^*+1)+\Normal v_i(2n-\ell^*)\\
        &=\Normal v_i(\ell^*+1)+1/3+x,&
    \end{align*}
    so $\Normal v_i(\ell^*+1)<2/3-x$. Now if $\Hat{B_j}=B_j$, we have
    \begin{align*}
        \Normal v_i(\Hat B_j)&=\Normal v_i(B_j)\\&=\Normal v_i(j)+\Normal v_i(2n-j+1)\\
        &\leq 2\Normal v_i(\ell^*+1)&\hidewidth(2n-j+1\geq j\geq \ell^*)\\
        &<4/3-2x.
    \end{align*}
    If $\Hat B_j\neq B_j$, that means that an item was added to $B_j$ during bag filling. Let $h$ be the last item added to $B_j$. Also, let $B_k$ be the first bag to receive two items during bag filling. Then $h$ was added to $B_j$ no earlier than $B_k$ received its first item, otherwise $B_j$ would be in $A^1$. Let $g$ be the first item added to $B_k$. Then, since $B_k$ required another item, $k>\ell^*$, and we get 
    \begin{align*}
        1&>v_i(B_k\cup\{g\})\\
        &\geq\Normal v_i(B_k\cup\{g\})\\
        &=\Normal v_i(k)+\Normal v_i(2n-k+1)+\Normal v_i(g)\\
        &\geq 2\Normal v_i(2n-\ell^*)+\Normal v_i(g)\\
        &=2/3+2x+\Normal v_i(g),
    \end{align*}
    so $\Normal v_i(g)<1/3-2x$. Now $g\leq h$, so $\Normal v_i(g)\geq \Normal v_i(h)$, so we get
    \begin{align*}
        \Normal v_i(\Hat B_j)&=\Normal v_i(\Hat B_j\setminus\{h\})+\Normal v_i(h)\\
        &<1+1/3+2x=4/3-2x.
    \end{align*}
\end{proof}

\begin{lemma}\label{4/3ordered:lem:badbags}
    For all $B_j\in A^+\cup A^1$, we have $\Normal v_i(\Hat B_j)\leq 4/3+x$.
\end{lemma}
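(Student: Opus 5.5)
Split according to whether a bag received any good during bag filling, and in each case apply Lemma \ref{4/3ordered:lem:obs1} or Lemma \ref{4/3ordered:lem:obs2} with a well-chosen index $k$, using $\Normal v_i(2n-\ell^*)=1/3-x$.

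\textbf{Bags in $A^+$.} Take $B_j\in A^+$, so $j\le \ell^*$. Apply Lemma \ref{4/3ordered:lem:obs1} with $k=\ell^*+1$; this is admissible since $j\le \ell^*+1\le n$. It gives
\[\Normal v_i(\Hat B_j)\le 1+\Normal v_i(2n-\ell^*)=4/3-x\le 4/3+x.\]
Both cases of that lemma's proof are covered: either $\Hat B_j=B_j$ and its second good $2n-j+1$ comes no earlier than $2n-\ell^*$, or the last added good comes even later. So this case is immediate.

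\textbf{Bags in $A^1$.} Take $B_j\in A^1$; then $j>\ell^*$, because $A^+\cap A^1=\emptyset$. Let $h$ be the last good added to $B_j$. Since $\Normal v_i(\Hat B_j\setminus\{h\})\le v_i(\Hat B_j\setminus\{h\})<1$, it suffices to show $\Normal v_i(h)\le 1/3+x$. If $h$ is the single good $B_j$ received in the first round, then $h\ge 2n+1>2n-\ell^*$, and by orderedness $\Normal v_i(h)\le 1/3-x$, which is even better.

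The main obstacle is that a bag in $A^1$ may later receive further goods, or be swapped away and refilled. If further goods are added, then $h$ is added after the first round, so $h$ is even less valuable and the same bound $\Normal v_i(h)\le 1/3-x$ holds. So for every bag, whether filled or not, the last added good has index larger than $2n-\ell^*$. Any bag that received at least one good then satisfies $\Normal v_i(\Hat B_j)<1+1/3-x$.

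\textbf{Unfilled bags in $A^1$.} The remaining case is a bag that received nothing. This does not happen for $A^1$ by definition, so every $A^1$ bag gets the bound $4/3-x\le 4/3+x$.

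The bound as stated has slack ($+x$ rather than $-x$), which suggests the intended argument is looser. I would therefore double-check the ordering claim $h>2n-\ell^*$ against the swap step. Swaps only reassign existing bags and never reorder goods, and goods are always added in increasing index from $2n+1$ on. So orderedness of $\Normal v_i$ (Lemma \ref{4/3ordered:lem:normalization}) gives the claim directly.
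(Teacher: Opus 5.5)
Your proof is correct and is essentially the paper's argument: in each case you split off the least valuable good of $\hat B_j$, bound the rest by $1$ (normalization if the bag was never filled, the claiming rule otherwise), and bound the split-off good by $\bar v_i(2n-\ell^*)$ via orderedness. Your last-added-good version for $A^1$ is slightly more robust than the paper's, which implicitly assumes an $A^1$ bag receives exactly one good in total. One correction to your closing remark: the setup's ``$\bar v_i(2n-\ell^*)=1/3-x$'' is a typo in the paper, since it contradicts the case hypothesis $\bar v_i(2n-\ell^*)\ge 1/3$, and the paper's own proofs (this lemma, the $A^2$ bound, and Case 2 of the main argument) all use $\bar v_i(2n-\ell^*)=1/3+x$. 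So there is no slack: the bound you actually prove, $1+\bar v_i(2n-\ell^*)$, is exactly $4/3+x$.
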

\begin{proof}
    Suppose $B_j\in A^+$. Then $j\leq \ell^*$, so
    \begin{align*}
        \Normal v_i(\Hat B_j)&=\Normal v_i(B_j)\\
        &=\Normal v_i(j)+\Normal v_i(2n-j+1)\\
        &\leq1+\Normal v_i(2n-\ell^*)&(2n-j+1\geq 2n-\ell^*)\\
        &=4/3+x.
    \end{align*}
    Now, if $B_j\in A^1$, let $g$ be the good added to $B_j$ in the bag filling phase. Then
    \begin{align*}
        \Normal v_i(\Hat{B_j})&=\Normal v_i(B_j)+\Normal v_i(g)\\
        &< 1+\Normal v_i(2n-\ell^*)&(\Normal v_i(B_j)<1 \text{ and } g\geq 2n-\ell^*)\\
        &=4/3+x.
    \end{align*}
\end{proof}

Now let $|A^1|=2n/3+\ell$, hence $|A^2|=n/3-(\ell+\ell^*)$. Let $P$ be a $\MMS_i^d$ partition (with respect to $\Normal v_i$), and let $R$ be the limitation of $P$ to $\{1, \dots,\allowbreak 8n/3 + \ell\}$. Without loss of generality, $|R_{1}|\geq\dots\geq|R_{4n/3}|$, and let $t$ be the number of bags of size 1 in $R$ that contain exactly one item from $\{1,\dots,\ell^*\}$ and nothing from $\{\ell^*+1, \dots,\allowbreak 8n/3 + \ell\}$.

\begin{lemma}\label{4/3ordered:lem:bagsizes}
    \[\sum_{j=1}^{t+\ell}|R_j|\geq 3(t+\ell)\]
\end{lemma}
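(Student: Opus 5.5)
The plan is a pure counting argument on the sizes of the bags of $R$. It uses only three facts: $R$ has $4n/3$ bags, $R$ covers exactly the $8n/3+\ell$ goods $\{1,\dots,8n/3+\ell\}$, and at least $t$ of its bags have size exactly $1$.

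For the second fact, note that $P$ is a partition of all of $M$, so its restriction $R$ partitions $\{1,\dots,8n/3+\ell\}$, provided these goods exist. We can assume they do by padding with zero-valued dummy goods, as in Section~\ref{sec:3/2ordered}. Hence $\sum_{j=1}^{4n/3}|R_j| = 8n/3+\ell$. If $t+\ell\le 0$ the claim is trivial. We also have $t+\ell\le 4n/3$: the $t$ special bags are distinct bags of $R$, and $\ell\le n/3-\ell^*$ because $|A^2|\ge 0$. So the left-hand side is a well-defined prefix sum.

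Suppose for contradiction that $\sum_{j=1}^{t+\ell}|R_j|\le 3(t+\ell)-1$. The argument then has two steps.

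\textbf{Step 1 (the tail is small).} Since the sizes are sorted non-increasingly, the prefix average of the first $t+\ell$ bags is below $3$, so $|R_{t+\ell}|\le 2$. Therefore every bag $R_j$ with $j>t+\ell$ has size at most $2$.

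\textbf{Step 2 (bound the total).} The $t$ size-one bags counted by $t$ must be accounted for.
\begin{itemize}
\item If all of them lie among the last $4n/3-(t+\ell)$ positions, the tail has total size at most $2(4n/3-t-\ell)-t$. The total is then at most
\[(3t+3\ell-1)+\tfrac{8n}{3}-2t-2\ell-t=\tfrac{8n}{3}+\ell-1.\]
\item If some size-one bag lies in the first $t+\ell$ positions, sortedness gives $|R_{t+\ell}|\le 1$. Then every tail bag has size at most $1$, and the tail contributes at most $4n/3-t-\ell$. This is again at most $2(4n/3-t-\ell)-t$, because that difference equals $4n/3-2t-\ell$, which is non-negative whenever the size-one bags fit into the tail. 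The mixed situation is handled the same way: each size-one bag inside the prefix forces the whole tail to size $\le 1$, which only strengthens the bound.
\end{itemize}

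In every case the total is at most $8n/3+\ell-1 < 8n/3+\ell$. This contradicts the fact that $R$ covers all $8n/3+\ell$ goods, which proves the lemma.

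The main obstacle is this bookkeeping in Step 2: where the $t$ singleton bags sit relative to the cut at position $t+\ell$, and making sure the bound survives both placements. I would handle it by splitting on whether $|R_{t+\ell}|\le 1$ or $|R_{t+\ell}|=2$. In the second case all size-one bags lie in the tail, and the first computation applies directly. In the first case the whole tail has size at most $1$, and the crude bound $|R_j|\le 1$ on the tail suffices once one checks $t\le 4n/3-t-\ell$ in that regime, or otherwise bounds the prefix directly.
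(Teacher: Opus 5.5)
Your counting scheme matches the paper's: sorted sizes force $|R_{t+\ell}|\le 2$, the $t$ singleton bags create a deficit, and you compare with the total $8n/3+\ell$. Your first bullet is correct. The gap is in the other case, $|R_{t+\ell}|\le 1$. There your bound on the total is $(3t+3\ell-1)+(4n/3-t-\ell)=4n/3+2t+2\ell-1$, which is below $8n/3+\ell$ only if $2t+\ell\le 4n/3$. The reason you give (``non-negative whenever the size-one bags fit into the tail'') does not apply, because in this case they do not all lie in the tail, and your last paragraph leaves the check open. This is not a formality: the three facts you say the proof uses are not sufficient on their own. For instance, with $4$ bags, $8$ goods, $\ell=0$, $t=3$ and sizes $(5,1,1,1)$, all three hold, yet $\sum_{j=1}^{3}|R_j|=7<9$.

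The missing ingredient is $t\le \ell^*$. It holds because each of the $t$ bags is a singleton $\{j\}$ with $j\le \ell^*$, and distinct bags contain distinct goods. Combine it with $\ell+\ell^*\le n/3$ (from $|A^2|\ge 0$, which you already invoke) and $\ell^*\le n$. This gives $2t+\ell\le \ell^*+(\ell+\ell^*)\le 4n/3$, which closes your second case. The same fact is what actually justifies your claim $t+\ell\le 4n/3$ (indeed $t+\ell\le n/3$), since ``the $t$ special bags are distinct bags of $R$'' only yields $t\le 4n/3$. Once you have it, the case split is unnecessary, as in the paper. Since at least $t$ bags have size at most $1$ and the sizes are sorted, the last $t$ positions have size at most $1$. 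The $4n/3-2t-\ell\ge 0$ middle positions have size at most $2$, and a single inequality finishes. The paper also uses $4n/3-2t-\ell\ge 0$ tacitly; it follows from Lemma~\ref{4/3ordered:lem:lem8} together with $t\le\ell^*$.
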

\begin{proof}
    If $|R_{t+\ell}|\geq3$, the claim holds, since the bags are ordered by size. If not, we have
    \begin{align*}
        8n/3+\ell&=\sum_{j=1}^{4n/3}|R_j|\\
        &=\sum_{j=1}^{t+\ell}|R_j|+\sum_{j=t+\ell+1}^{4n/3-t}|R_j|+\sum_{j=4n/3-t+1}^{4n/3}|R_j|\\
        &\leq \sum_{j=1}^{t+\ell}|R_j| + |R_{t+\ell}|(4n/3-2t-\ell)+t\\
        &\leq \sum_{j=1}^{t+\ell}|R_j| +8n/3-3t-2\ell,
    \end{align*}
    hence $\sum_{j=1}^{t+\ell}|R_j|\geq 3(t+\ell)$
\end{proof}

\begin{lemma}\label{4/3ordered:lem:lem8}
    $\ell+\ell^*+t\leq 4n/3$.
\end{lemma}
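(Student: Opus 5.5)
The plan is to bound the three quantities by simple counting, using only the sizes of the bag classes from Definition~\ref{4/3ordered:def:partition}. The idea is to split the sum as $(\ell+\ell^*)+t$ and bound each part separately.

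\textbf{Step 1: bound $\ell+\ell^*$.} The sets $A^+$, $A^1$, $A^2$ partition the $n$ initial bags. We have $|A^+|=\ell^*$ and $|A^1|=2n/3+\ell$, so $|A^2|=n/3-(\ell+\ell^*)$. Since $|A^2|$ is the cardinality of a set, it is nonnegative, which gives
\[\ell+\ell^*\leq n/3.\]

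\textbf{Step 2: bound $t$ by $\ell^*$.} By definition, $t$ counts bags $R_j$ of $R$ that consist of exactly one good, and that good lies in $\{1,\dots,\ell^*\}$. Distinct bags of $R$ are disjoint, because $R$ is the restriction of the partition $P$. So distinct counted bags contain distinct goods from $\{1,\dots,\ell^*\}$. This injection gives $t\leq \ell^*$.

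\textbf{Step 3: bound $\ell^*$ by $n$.} By assumption some agent $i$ receives no bag, so some initial bag $B_j$ satisfies $\Normal v_i(B_j)\leq v_i(B_j)<1$. By the choice of $\ell^*$ as the lowest index with $\Normal v_i(B_{\ell^*+1})<1$, the bag $B_{\ell^*+1}$ is one of the $n$ initial bags. Hence $\ell^*+1\leq n$, and in particular $\ell^*\leq n$.

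\textbf{Combining.} Adding the three bounds,
\[\ell+\ell^*+t\leq n/3+t\leq n/3+\ell^*\leq n/3+n=4n/3.\]

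The only point needing care is Step 1. There I must make sure the identity $|A^2|=n/3-(\ell+\ell^*)$ genuinely follows from the sets being a partition, and not merely from the notation. This requires the disjointness $A^+\cap A^1=\emptyset$, which the paper notes right after Definition~\ref{4/3ordered:def:partition}. It also requires that $A^2$ is defined as the complement of $A^+\cup A^1$, which it is. The quantity $\ell$ may be negative, but that does not affect any of the inequalities above.
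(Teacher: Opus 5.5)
Your proof is correct and follows the paper's route. The paper also combines $|A^+|+|A^1|\le n$, which gives $\ell+\ell^*\le n/3$, with the chain $t\le\ell^*\le n$; you simply spell out the injection behind $t\le\ell^*$ and the reason $\ell^*\le n$, which the paper leaves implicit.
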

\begin{proof}
    We have $\ell+2n/3+\ell^*=|A^+|+|A^1|\leq n$, and $t\leq\ell^*\leq n$, so $\ell+\ell^*+t\leq 4n/3$.
\end{proof}

\begin{lemma}\label{4/3ordered:lem:lem9}
    \[\sum_{j=2n-\ell^*+1}^{2n-t}\Normal v_i(j)+\sum_{j=8n/3-2\ell-t-2\ell^*+1}^{8n/3-2\ell-2t-\ell^*}\Normal v_i(j)<\ell^*-t.\]
\end{lemma}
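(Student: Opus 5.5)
The plan is a counting argument. Both index ranges in the statement contain exactly $\ell^*-t$ goods: the first runs from $2n-\ell^*+1$ to $2n-t$, and the second has length $(8n/3-2\ell-2t-\ell^*)-(8n/3-2\ell-t-2\ell^*)=\ell^*-t$. So the left-hand side is a sum of $2(\ell^*-t)$ single-good values. It therefore suffices to show that every one of these goods has $\Normal v_i$-value strictly below $1/2$. The tool for this is Lemma \ref{4/3ordered:lem:smallgoods}, which says $\Normal v_i(j)<1/2$ for all $j\geq 2n-\ell^*$. Because Lemma \ref{4/3ordered:lem:normalization} preserves the order, $\Normal v_i$ is non-increasing in the index, so I only need to check that all indices involved are at least $2n-\ell^*$.

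For the first range this holds by definition. For the second range I need its lower endpoint to satisfy $8n/3-2\ell-t-2\ell^*+1\geq 2n-\ell^*+1$, which is equivalent to $2\ell+t+\ell^*\leq 2n/3$. I would derive this from the sizes of the bag groups, as in the proof of Lemma \ref{4/3ordered:lem:lem8}:
\begin{itemize}[label={}]
\item $|A^+|+|A^1|=\ell^*+2n/3+\ell\leq n$, hence $\ell+\ell^*\leq n/3$;
\item $t\leq \ell^*$, since $t$ counts singleton bags of $R$, each containing a distinct item of $\{1,\dots,\ell^*\}$.
\end{itemize}
Together these give $2\ell+t+\ell^*\leq 2\ell+2\ell^*\leq 2n/3$. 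This also holds if $\ell$ happens to be negative, because only $\ell+\ell^*\leq n/3$ is used.

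With every index in both ranges at least $2n-\ell^*+1$, each term is below $1/2$. Summing the $2(\ell^*-t)$ terms gives a total strictly less than $\ell^*-t$ whenever $\ell^*>t$. A sharper bound is also available, since each term is at most $\Normal v_i(2n-\ell^*)$, but it is not needed.

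The main obstacle is the bookkeeping: confirming the index inequality above, and handling the degenerate case $\ell^*=t$. In that case both sums are empty and both sides equal $0$, so the strict inequality fails as written. I would treat it separately, noting that it is trivially true with $\leq$, and check that wherever the lemma is later applied (presumably as a step in a chain of inequalities ending in a contradiction) a non-strict version suffices or strictness is supplied elsewhere. If that cannot be arranged, I would restrict the lemma to $\ell^*>t$.
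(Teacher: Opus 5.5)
Your proof is correct and is essentially the paper's argument: $2(\ell^*-t)$ goods, each of value below $1/2$ by Lemma~\ref{4/3ordered:lem:smallgoods} (the paper's proof cites Lemma~\ref{4/3ordered:lem:bagsizes}, apparently a mis-reference), with your explicit check $2\ell+t+\ell^*\le 2n/3$ providing the index bound that the paper only verifies later, in the proof of Lemma~\ref{4/3ordered:lem:lem6}. Your observation that the strict inequality fails when $\ell^*=t$ is also right, and the non-strict version suffices because Lemma~\ref{4/3ordered:lem:lem6} is itself stated with $\leq$.
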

\begin{proof}
    We are summing over a total of $2(\ell^*-t)$ items, each of which are less valuable than $1/2$ by Lemma \ref{4/3ordered:lem:bagsizes}. Hence the claim holds.
\end{proof}

\begin{lemma}\label{4/3ordered:lem:lem10}
    \[\sum_{j=t+1}^{\ell^*}\Normal v_i(j)+\sum_{j=8n/3-2\ell-2t-\ell^*+1}^{8n/3+\ell}\Normal v_i(j)\leq \ell+\ell^*\]
\end{lemma}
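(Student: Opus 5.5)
The plan is to bound the left-hand side by the total $\Normal v_i$-value of a suitably chosen family of $\ell+\ell^*$ bags of the $\MMS_i^d$ partition $P$. Since $\Normal v_i$ is $d$-normalized, every bag of $P$ has value exactly $1$, so such a family has value exactly $\ell+\ell^*$. The items summed on the left are the $\ell^*-t$ goods $t+1,\dots,\ell^*$ together with the last $3\ell+2t+\ell^*$ goods of the prefix $\{1,\dots,8n/3+\ell\}$. That is $3\ell+t+2\ell^*$ goods in total. I therefore want $\ell+\ell^*$ bags of $P$ that contain all of $t+1,\dots,\ell^*$ and at least $3\ell+2t+\ell^*$ further goods of $R$. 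Orderedness (preserved by Lemma~\ref{4/3ordered:lem:normalization}) then lets me dominate the smallest $3\ell+2t+\ell^*$ goods of the prefix by those further goods. Because values are nonnegative, the whole family has $\Normal v_i$-value at least the left-hand side.

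First, I normalize the $t$ special singleton bags by an exchange argument. These are the bags $R_{4n/3-t+1},\dots,R_{4n/3}$, the smallest ones in the size ordering, each consisting of one good from $[\ell^*]$. Swapping goods of $[\ell^*]$ between bags only moves more valuable goods into the singletons, so it does not decrease the minimum bag value and $P$ stays an $\MMS$ partition. Hence I may assume the singletons contain exactly the goods $1,\dots,t$, so the goods $t+1,\dots,\ell^*$ lie in the other $4n/3-t$ bags. Next I select the family. It consists of the $t+\ell$ largest bags $R_1,\dots,R_{t+\ell}$, which by Lemma~\ref{4/3ordered:lem:bagsizes} contain at least $3(t+\ell)$ goods of $R$, plus $\ell^*-t$ further bags from $R_{t+\ell+1},\dots,R_{4n/3-t}$. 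This is possible because $(t+\ell)+(\ell^*-t)\le 4n/3-t$, which is Lemma~\ref{4/3ordered:lem:lem8}. If the further bags contribute at least $2(\ell^*-t)$ goods, the family holds $3\ell+3t+2\ell^*-2t=3\ell+t+2\ell^*$ goods of $R\setminus[t]$, which is exactly the count needed.

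Two issues remain. The first is the size of the extra bags. The $4n/3-t$ non-singleton bags contain $8n/3+\ell-t$ goods. After removing the top $t+\ell$ bags, I will use the same averaging as in the proof of Lemma~\ref{4/3ordered:lem:bagsizes} to show that the next $\ell^*-t$ largest bags carry at least $2(\ell^*-t)$ goods. The second issue is that the goods $t+1,\dots,\ell^*$ must lie inside the chosen family. If one of them sits in an unchosen bag, I swap it with a less valuable good of $R$ from a chosen bag that is not among $t+1,\dots,\ell^*$. By orderedness this only moves value into chosen bags, so the chosen family's value can only grow and the count of goods is unchanged. Once both hold, the $\ell^*-t$ goods $t+1,\dots,\ell^*$ match themselves, and the remaining $\ge 3\ell+2t+\ell^*$ goods of $R\setminus[\ell^*]$ in the family each dominate a distinct good among the last $3\ell+2t+\ell^*$ of the prefix. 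Summing gives the inequality.

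I expect the main obstacle to be making the exchange step and the extra-bag counting hold simultaneously. Swapping goods can change bag sizes and values, so I must make sure the swap keeps the item count of the family and does not disturb the $t$ singleton structure or the size ordering used in Lemma~\ref{4/3ordered:lem:bagsizes}. I would handle this by carrying out the exchanges on the valuation sum directly, comparing multisets of goods by orderedness, rather than modifying $P$ itself.
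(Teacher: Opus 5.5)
Your overall strategy matches the paper's. You bound the left-hand side by the value of $\ell+\ell^*$ bags of $R$, each worth at most $1$, that contain the goods $t+1,\dots,\ell^*$ and at least $3\ell+2t+\ell^*$ further goods of the prefix. The gap is in how you force $t+1,\dots,\ell^*$ into the family. You fix the family as the $\ell+\ell^*$ largest bags. When some $j\in\{t+1,\dots,\ell^*\}$ lies outside it, you swap $j$ in for a cheaper good, noting that the family's value ``can only grow.'' Growth is exactly what you cannot afford. The only upper bound you have is $\Normal v_i(B)\le 1$ per bag, and a chosen bag that has just received the more valuable good $j$ may violate it. So the chain $\mathrm{LHS}\le \Normal v_i(\text{family})\le \ell+\ell^*$ breaks at its second inequality. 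Doing the exchange ``on the valuation sum'' with $P$ fixed does not help either. To dominate the terms $\Normal v_i(t+1),\dots,\Normal v_i(\ell^*)$, the family must contain goods at least as valuable. Outside the special singletons $\{1\},\dots,\{t\}$, which your family avoids, only $t+1,\dots,\ell^*$ themselves qualify, up to ties. Selecting bags by size gives no reason for these goods to be there.

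The missing idea is to build the family around the bags $T_j$ containing these goods, as the paper does. Take $S=\{R_1,\dots,R_{t+\ell}\}\cup\{T_{t+1},\dots,T_{\ell^*}\}$, padded with the largest remaining bags until $|S|=\ell+\ell^*$. Once the special singletons are $\{1\},\dots,\{t\}$, each $T_j$ has at least two goods, because a good of $[\ell^*]$ alone in its $R$-bag is a special singleton.

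The count then needs a case split that your claim about the ``next $\ell^*-t$ largest bags'' glosses over. Size averaging as in Lemma~\ref{4/3ordered:lem:bagsizes} yields only an aggregate bound, not $2(\ell^*-t)$ goods in those particular bags. In the first case, every bag of $S$ outside $R_1,\dots,R_{t+\ell}$ has at least two goods, giving $3(t+\ell)+2(\ell^*-t)$ goods. In the second case, some padding bag has at most one good. Then every bag outside $S$ has at most one, and $S$ holds at least $8n/3+\ell-(4n/3-\ell-\ell^*)\ge 3\ell+2\ell^*+t$ goods by Lemma~\ref{4/3ordered:lem:lem8}.

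A smaller point: your first exchange does not keep $P$ an MMS partition, since the bag giving up a good of $[t]$ loses value. What survives, and is all you need, is that every $R$-bag is worth at most $1$.
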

\begin{proof}
    By construction, there are $t$ bags in $R$ that contain exactly one item, and that item is from $\{1,\dots,\ell^*\}$. For each of these bags containing $j$ for $j\in\{t+1,\dots,\ell^*\}$, there exists a bag $B$ containing at least two items, one of them from $\{1,\dots,t\}$. By swapping $j$ with this item, the value of $B$ only decreases, so we can repeat this procedure until:
    \begin{enumerate}
        \item $\Normal v_i(B)\leq 1$ for all $B\in R$,
        \item All items from $\{1,\dots,t\}$ are alone in their bag, and
        \item All items from $\{t+1,\dots,\ell^*\}$ are in bags of size at least 2.
    \end{enumerate}
    Let $T_j\in R$ be the bag containing item $j$. Now consider the set $S=\{R_1,\dots,R_{t+\ell}\}\cup\{T_{t+1},\dots,T_{\ell^*}\}$. In case the two sets comprising $S$ intersect, add the lowest index $R_j$ to $S$ until $|S|=\ell+\ell^*$. 

    We now show that $S$ contains at least $3\ell+2\ell^*+t$ many items: By Lemma \ref{4/3ordered:lem:bagsizes}, $\sum_{j=1}^{t+\ell}|R_j|\geq 3(t+\ell)$. If the remaining $\ell^*-t$ many bags in $S\setminus\{R_1,\dots,R_{\ell+t}\}$ all contain at least 2 items, we are done. Otherwise all bags not in $S$ contain only one good, since all the $T_{t+1},\dots,T_{\ell^*}$ contain at least two items, and we added the remaining bags in decreasing order of size. Hence:
    \begin{align*}
        \sum_{B\in S}|B|&= 8n/3+\ell-\sum_{B\notin S}|B|\\
        &\geq 8n/3+\ell-(4n/3-\ell-\ell^*)\\
        &=4n/3+2\ell+\ell^*\\
        &\geq 3\ell+2\ell^*+t &(\text{Lemma \ref{4/3ordered:lem:lem8}}).
    \end{align*}
    Now recall that all bags in $S$ have value at most 1. Also, $S$ contains the goods $\{t+1,\dots,\ell^*\}$, and at least $3\ell+2t+\ell^*$ other goods. The $3\ell+2t+\ell^*$ lowest value goods that can appear in $R$ are $\{8n/3-2\ell-2t-\ell^*+1,\dots,8n/3+\ell\}$, so
    \begin{align*}
        \ell^*+\ell&\geq \sum_{B\in S}\Normal v_i(B)\\
        &\geq \Normal v_i(\{t+1,\dots,\ell^*\})\\&+\Normal v_i(\{8n/3-2\ell-2t-\ell^*+1,\dots,8n/3+\ell\}).
    \end{align*}
\end{proof}

\begin{lemma}\label{4/3ordered:lem:lem6}
    \begin{align*}
        \Normal v_i(&\{8n/3-2\ell-t-2\ell^*+1,\dots,8n/3+\ell\}\\&\cup\{t+1,\dots,\ell^*\}\\&\cup\{2n-\ell^*+1,\dots,2n-t\})\leq 2\ell^*+\ell-t,&
    \end{align*}
    and all unions above are disjoint unions.
\end{lemma}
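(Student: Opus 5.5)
The bound comes from adding Lemmas \ref{4/3ordered:lem:lem9} and \ref{4/3ordered:lem:lem10}, once we check that their index ranges fit together exactly into the three sets in the statement. Lemma \ref{4/3ordered:lem:lem10} covers $\{t+1,\dots,\ell^*\}$ and $\{8n/3-2\ell-2t-\ell^*+1,\dots,8n/3+\ell\}$, with total value at most $\ell+\ell^*$. Lemma \ref{4/3ordered:lem:lem9} covers $\{2n-\ell^*+1,\dots,2n-t\}$ and $\{8n/3-2\ell-t-2\ell^*+1,\dots,8n/3-2\ell-2t-\ell^*\}$, with total value less than $\ell^*-t$. The two ranges near $8n/3$ are adjacent: the second ends at $8n/3-2\ell-2t-\ell^*$ and the first starts one index later. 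So their union is the single interval $\{8n/3-2\ell-t-2\ell^*+1,\dots,8n/3+\ell\}$ from the statement. Adding the two inequalities then gives the bound $(\ell+\ell^*)+(\ell^*-t)=2\ell^*+\ell-t$, even strictly.

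It remains to verify that the three sets in the statement are pairwise disjoint, so that $\Normal v_i$ of the union is the sum of the pieces.

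\begin{itemize}
\item \textbf{$\{t+1,\dots,\ell^*\}$ versus $\{2n-\ell^*+1,\dots,2n-t\}$.} These are disjoint because $\ell^*\le n<2n-\ell^*+1$.
\item \textbf{$\{2n-\ell^*+1,\dots,2n-t\}$ versus the long interval.} The long interval starts at $8n/3-2\ell-t-2\ell^*+1$. Disjointness needs $2n-t<8n/3-2\ell-t-2\ell^*+1$, which is equivalent to $\ell+\ell^*\le n/3$. This holds because $|A^2|=n/3-(\ell+\ell^*)\ge 0$.
\item \textbf{$\{t+1,\dots,\ell^*\}$ versus the long interval.} This follows from the previous two cases, since the long interval lies entirely above $2n-t>\ell^*$.
\end{itemize}

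For the same reason, the two pieces from Lemma \ref{4/3ordered:lem:lem9} are disjoint, so adding the lemmas counts no good twice.

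The main obstacle is the index bookkeeping. We must check that the endpoints line up exactly and that $\ell+\ell^*\le n/3$ follows from the nonnegativity of $|A^2|$. We also need the degenerate cases ($\ell^*=t$, or empty ranges) to be harmless; empty ranges simply contribute zero.
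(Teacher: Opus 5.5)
Your proposal is correct and essentially the paper's proof: you add Lemmas~\ref{4/3ordered:lem:lem9} and~\ref{4/3ordered:lem:lem10}, whose index ranges tile the three sets exactly, and you get disjointness from $\ell+\ell^*\le n/3$. The only cosmetic difference is that you derive $\ell+\ell^*\le n/3$ from $|A^2|\ge 0$, while the paper phrases it as $|A^+|+|A^1|\le n$; this is the same fact, and your explicit check of the endpoints is slightly more careful than the paper's one-line ``each item exactly once'' remark.
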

\begin{proof}
    We notice that Lemmas \ref{4/3ordered:lem:lem9} and \ref{4/3ordered:lem:lem10} together account for each item considered here exactly once. Hence
    \begin{align*}
        \Normal v_i(&\{8n/3-2\ell-t-2\ell^*+1,\dots,8n/3+\ell\}\\&\cup\{t+1,\dots,\ell^*\}\\&\cup\{2n-\ell^*+1,\dots,2n-t\})\\
        &<(\ell^*-t)+(\ell^*+\ell)\\
        &=2\ell^*+\ell-t.
    \end{align*}
    Furthermore, since $2n/3+\ell+\ell^*=|A^1|+|A^+|\leq n$, we have
    \begin{align*}
        8n/3-2\ell^*-t-2\ell+1 &\geq 8n/3-2n/3-t+1\\&>2n-t,&
    \end{align*}
    so the first and second term are disjoint. All other terms are clearly pairwise disjoint.
\end{proof}

\PartOne*
\begin{proof}
    Assume towards a contradiction that agent $i$ does not get assigned to a bag, and that $\Normal v_i(2n-\ell^*)\geq 1/3$, as we have done in this section. Then by Lemma \ref{4/3ordered:lem:goodbags}, the $n/3-\ell-\ell^*$ many bags in $A^2$ have value less than $4/3-2x$. By Lemma \ref{4/3ordered:lem:badbags}, all other bags have value at most $4/3+x$. We now show that we can improve this bound on the bags \[S=\{\Hat B_1,\dots, B_{2\ell+t+2\ell^*-2n/3}\}\cup\{\Hat B_{t+1},\dots,\Hat B_{\ell^*}\}\cup\Hat A^1,\]
    where $\Hat A^1$ is the set of bags from $A^1$ after the algorithm terminates. Notice that, again since $\ell+\ell^*\leq n/3$, we have $2\ell+t+2\ell^*-2n/3< t+1\leq \ell^*$, and so al three terms in $S$ are disjoint and contained in $A^+\cup A^1$. We now differentiate two cases, which work broadly similar, but have some small differences:\\

    \textbf{Case 1:} $2\ell+t+2\ell^*\geq2n/3$. In this case, all three components of $S$ are nonempty, and $|S|=3(\ell+\ell^*)$. We will show that $\Normal v_i(\bigcup_{B\in S}B)\leq 4(\ell+\ell^*)$: 
    \begin{align*}
        \Normal v_i(\bigcup_{B\in S}B)&\leq \Normal v_i(\bigcup_{B\in A^1}B)\\
        &\quad+\Normal v_i(\{2n+1,\dots,8n/3+\ell\}\\
        &\quad+\Normal v_i(\{1,\dots,2\ell+t+2\ell^*-2n/3\})\\
        &\quad+\Normal v_i(\{8n/3-2\ell-t-2\ell^*+1,\dots,2n\})\\
        &\quad+\Normal v_i(\{t+1,\dots,\ell^*\})\\
        &\quad+\Normal v_i(\{2n-\ell^*+1,\dots,2n-t\}).&
    \end{align*}
    Now for all $B\in A^1$ we have $\Normal v_i(B)<1$, since otherwise there would be no item added to $B$ in the bag-filling phase, and $B$ would belong to $A^2$. Hence
    \[\Normal v_i(\bigcup_{B\in A^1}B)<2n/3+\ell.\]
    Also, since all goods have value at most 1 by normalization, 
    \[\Normal v_i(\{1,\dots,2\ell+t+2\ell^*-2n/3\})\leq 2\ell+t+2\ell^*-2n/3.\]
    The rest of the terms in the above inequality are covered exactly by Lemma \ref{4/3ordered:lem:lem6}. Adding these inequalities gives
    \[\Normal v_i(\bigcup_{B\in S}B)\leq 4(\ell+\ell^*).\]
    We note that we now only need to bound $2n/3-2\ell-2\ell^*$ many bags using Lemma \ref{4/3ordered:lem:lem6}. Taking everything together, we get
    \begin{align*}
        \Normal v_i(M)&<(n/3-\ell-\ell^*)(4/3-2x)\\&\quad+(2n/3-2\ell-2\ell^*)(4/3+x)\\&\quad+4(\ell+\ell^*)\\&=4n/3,
    \end{align*}
    which is a contradiction to $\Normal v_i(M)=4n/3$.\\

    \textbf{Case 2:} $2\ell+t+2\ell^*=2n/3-\beta$ for $\beta>0$. We argue similarly to the first case, but pay special attention to the fact that the first component of $S$ is empty. In this case, since $2\ell+t+2\ell^*-2n/3+\beta=0$, we have
    \begin{align*}
        |S|&=\ell^*-t+2n/3+\ell\\&=\ell^*-t+2n/3+\ell-(2\ell+t+2\ell^*-2n/3+\beta)\\&=3(\ell+\ell^*)+\beta.
    \end{align*}
    The total value of these goods is bounded by
    \begin{align*}
        \Normal v_i(\bigcup_{B\in S}B)&\leq \Normal v_i(\bigcup_{B\in A^1}B)\\
        &\quad+\Normal v_i(\{2n+1,\dots,8n/3+\ell\}\\
        &\quad+\Normal v_i(\{t+1,\dots,\ell^*\})\\
        &\quad+\Normal v_i(\{2n-\ell^*+1,\dots,2n-t\}).&
    \end{align*}
    The first term again follows
    \[\Normal v_i(\bigcup_{B\in A^1}B)<2n/3+\ell,\]
    and Lemma \ref{4/3ordered:lem:lem6} covers everything except $\{2n+1,\dots,2n+\beta\}$, since $8n/3-2\ell-t-2\ell^*=2n+\beta$. The uncovered goods $g$ are each of value $\Normal v_i(g)\leq \Normal v_i(2n-\ell^*)=1/3+x$, so their total value is no more than $\beta(1/3+x)$. Adding these inequalities gives
    \begin{align*}
        \Normal v_i(\bigcup_{B\in S}B)&\leq 2n/3+\ell+2\ell^*+\ell-t+\beta(1/3+x)\\
        &=4(\ell+\ell^*)+\beta(4/3+x),
    \end{align*}
    again since $2\ell+t+2\ell^*-2n/3+\beta=0$. In this case, we only need to bound $2n/3-2\ell-2\ell^*-\beta$ many bags with Lemma \ref{4/3ordered:lem:badbags}, as $|S|=3(\ell+\ell^*)+\beta$, so we get
    \begin{align*}
        \Normal v_i(M)&<(n/3-\ell-\ell^*)(4/3-2x)\\&\quad+(2n/3-2\ell-2\ell^*-\beta)(4/3+x)\\
        &\quad+4(\ell+\ell^*)+\beta(4/3+x)\\&=4n/3,&
    \end{align*}
    which again is in contradiction to $\Normal v_i(M)=4n/3$
\end{proof}

\theoremThree*

\begin{proof}
    Theorems \ref{4/3ordered:thm:part1} and \ref{4/3ordered:thm:part2} together with Lemma \ref{4/3ordered:lem:EF} show that the partial allocation resulting from Algorithm \ref{4/3ordered:alg:main} is 1-out-of-$4\ceil{n/3}$ MMS and EF1. In order to complete this allocation, we turn back to Lemma \ref{3/2ordered:lem:envy_cycles}. Since we do not only need to allocate the goods that were possibly leftover after Algorithm \ref{4/3ordered:alg:main} concludes, but also the goods that were allocated to dummy agents, we cannot use the stronger version that promises EFX. Instead, we can only maintain the EF1 guarantee by disregarding the ordering.
    
    In conclusion, inputting the result from Algorithm \ref{4/3ordered:alg:main} into the envy-cycle elimination procedure from Lemma \ref{3/2ordered:lem:envy_cycles} yields a complete allocation that is 1-out-of-$4\ceil{n/3}$ MMS and EF1 at the same time.
\end{proof}

\end{document}